\documentclass[onecolumn,draftclsnofoot]{IEEEtran}
\usepackage{indentfirst}
\usepackage[dvips]{graphicx}
\usepackage{amsfonts}
\usepackage{multirow}
\usepackage{amsmath,amsthm,amssymb}
\usepackage{color,changepage}
\usepackage{algorithm}
\usepackage{algorithmic}
\usepackage{bbm}
\usepackage{verbatim}
\usepackage{makecell}
\usepackage{subfigure}
\usepackage{mathrsfs}
\usepackage{arydshln}
\usepackage{cases}
\usepackage{threeparttable}
\usepackage{extarrows}
\usepackage{array}
\usepackage{bm}
\usepackage{pifont}
\usepackage[T1]{fontenc}
\usepackage{threeparttable}
\usepackage{booktabs}
\usepackage{xr}
\usepackage{hyperref}
\usepackage{cite}
\usepackage{diagbox}
\usepackage{tabularx}
\usepackage{mathtools}
\usepackage{amsmath}
\usepackage{tikz}
\usepackage{nicematrix}
\usepackage{hhline}
\usepackage{diagbox}
\allowdisplaybreaks[4]

\definecolor{newcolor}{rgb}{0.5,0,1}

\newcolumntype{I}{!{\vrule width 2pt}}

\newtheorem{Theorem}{Theorem}

\newtheorem{Definition}{Definition}
\newtheorem{Proposition}{Proposition}

\newtheorem{Lemma}{Lemma}

\theoremstyle{remark}

\newtheorem{Example}{Example}

\DeclareMathAlphabet{\mathpzc}{OT1}{pzc}{m}{it}

\definecolor{newcolor}{rgb}{0.5,0,1}

\def\BibTeX{{\rm B\kern-.05em{\sc i\kern-.025em b}\kern-.08em
    T\kern-.1667em\lower.7ex\hbox{E}\kern-.125emX}}

\begin{document}

\title{The Capacity of Information-Theoretic Secure Aggregation in Federated Learning}
\author{Lanxin Yi, Jinbao Zhu, Kai Wan, and Xiaohu Tang
\thanks{Lanxin Yi, Jinbao Zhu, and Xiaohu Tang are 
with the Information Coding and Transmission (ICT) Key Laboratory of Sichuan Province,
Southwest Jiaotong University, Chengdu 611756, China
(email: 
lxyi@my.swjtu.edu.cn, jinbaozhu@swjtu.edu.cn, xhutang@swjtu.edu.cn).

Kai Wan is with the School of Electronic Information and Communications,
Huazhong University of Science and Technology, Wuhan 430074, China
(email: kai\_wan@hust.edu.cn).
}}
\maketitle

\begin{abstract}
Secure aggregation is a fundamental component of federated learning that enables a central server to aggregate local model updates from multiple users while preserving the privacy of each individual update without compromising model accuracy. A major bottleneck in scaling federated learning to a large number of users is the communication overhead incurred by secure aggregation, which has motivated the information theory community to study its fundamental limits. To provide privacy guarantees, existing information-theoretic secure aggregation problems typically assume that correlated random keys among users are either provided by a trusted third party (TTP) or generated according to a prescribed symmetric groupwise structure, while the communication cost required to establish such correlated randomness is often ignored. Consequently, the fundamental limits of secure aggregation under general key-distribution mechanisms remain largely unknown.
In this paper, we study the $T$-colluding information-theoretic secure aggregation problem with $N$ users and a single server under a general two-phase framework consisting of a key distribution phase and an update aggregation phase. Unlike prior works, we explicitly model the key distribution phase through user-to-user communication and allow arbitrary key distribution mechanisms established through user cooperation, thereby eliminating the need for a TTP or any prescribed groupwise key-distribution structure. This formulation enables a unified treatment of both phases and allows us to jointly characterize three fundamental resources in secure aggregation: the amount of randomness required for security, the communication required for key distribution, and the communication required for update aggregation.
We completely characterize the capacity region among these three resources. Specifically, the optimal key rate and key-distribution communication rate are both $\frac{N(N-1)}{N-T}$, while the optimal aggregation communication rate is $N$. These optimal rates are simultaneously achieved by a novel secure aggregation scheme together with a matching information-theoretic converse.
In particular, we develop an explicit deterministic capacity-achieving construction over any finite field of size at least $N$, whereas most existing schemes either rely on TTP or employ randomized or existential constructions over sufficiently large finite fields. We further show that the optimal performance can be achieved using only a pairwise key-distribution structure, enabling practical realization through standard cryptographic key-establishment mechanisms such as Diffie--Hellman key exchange. Compared with Google’s seminal secure aggregation scheme, the proposed scheme requires fewer random masking keys while preserving the same aggregation communication overhead.
\end{abstract}

\begin{IEEEkeywords}
Federated learning, secure aggregation, information-theoretic security, capacity region, finite field.
\end{IEEEkeywords}

\section{Introduction}\label{Introduction}
Federated learning (FL) has emerged as a prominent distributed machine learning paradigm that enables multiple participant users to collaboratively train a global model under the orchestration of a central server \cite{mcmahan2017communication,konevcny2016federated,li2020federated,yang2019federated}. 
In this framework, users perform local training using their individual data and transmit only the computed model updates (or gradients) to the server, which aggregates these updates to refine the global model and subsequently broadcasts the updated model back to the users for the next iteration. 
While FL mitigates the risk of privacy leakage by keeping raw data on local devices, recent studies have demonstrated that sharing local updates can still reveal sensitive information about the underlying training data through what are known as gradient leakage attacks \cite{fredrikson2015model,zhu2019deep,geiping2020inverting}, which allow an attacker to extract sensitive training data 
from the shared updates. Protecting the privacy of users' local updates is therefore a fundamental requirement in federated learning.

Secure aggregation was initially introduced by Bonawitz \textit{et al.} \cite{bonawitz2017practical} to enable the server to aggregate users' local updates while preserving the privacy of each individual update without compromising model accuracy. 
To achieve this goal, users first establish correlated random keys through cryptographic key-establishment mechanisms such as Diffie--Hellman key exchange \cite{katz2007introduction} and Shamir's secret sharing \cite{shamir1979share}. 
Each user then masks its local update using these correlated keys before uploading the masked update to the server. 
The masking and aggregation mechanisms are carefully designed so that the server can recover only the desired aggregate while learning no additional information about any individual update.
Since then, secure aggregation has been widely studied under a two-phase framework consisting of a key distribution phase followed by an update aggregation phase. 
In the key distribution phase, correlated random keys are established among users for masking purposes, whereas in the update aggregation phase, users upload masked updates to enable aggregation. 
This framework is particularly appealing because the key distribution phase is independent of the users' local updates and can therefore be performed offline, thereby reducing the online communication cost of masked-update aggregation to a linear order in the number of users.
It is worth noting that some works \cite{kadhe2020fastsecagg,so2021turbo,jahani2023swiftagg+} instead rely solely on locally generated independent randomness together with user-to-user communication during aggregation. 
Since such locally generated keys lack correlation across users, these approaches typically incur a quadratic online communication overhead. 
This contrast highlights the fundamental role of correlated keys in enabling communication-efficient secure aggregation.

The two-phase secure aggregation has received substantial treatment from both cryptographic and information-theoretic perspectives. 
Cryptographic secure aggregation focuses on computationally secure protocols under hardness assumptions, with significant efforts devoted to improving practical efficiency \cite{bell2020secure,choi2020communication,liu2022efficient,zheng2022aggregation,bonawitz2019federated,ergun2021sparsified,lu2023top}; see also the surveys \cite{wen2023survey,9830997,zhang2024survey}. 
In contrast, information-theoretic secure aggregation aims to characterize the minimum amount of randomness and communication required to guarantee perfect
privacy against computationally unbounded adversaries.

In this paper, we consider the information-theoretic setting and consider a traditional $T$-colluding secure aggregation problem, where the server may collude with any $T$ out of $N$ users to infer the private local updates of the remaining users. 
The work most closely related to this paper is \cite{zhao2023secure}, which studies the $T$-colluding secure aggregation problem under two key distribution scenarios.
In the first scenario, a trusted third party (TTP) distributes correlated random keys to users before aggregation. 
In the second scenario, the TTP assumption is removed by adopting a symmetric uncoded groupwise key-distribution structure, in which every subset of $G$ users among the $N$ users shares one uncoded common random key, and all such shared keys have the same entropy size.
Under these scenarios, reference \cite{zhao2023secure} characterized the minimum amount of random keys required for security and the optimal communication overhead of the update aggregation phase.
In recent years, significant progress has been made in exploring the fundamental limits of various extensions of information-theoretic secure aggregation, including secure aggregation with user dropouts \cite{zhao2022information,so2022lightsecagg,wan2024information,wan2024capacity,zhang2025securegroup}, hierarchical secure aggregation \cite{zhang2025optimal,li2026fundamental}, vector secure aggregation \cite{yuan2025vector,hu2026capacity}, decentralized secure aggregation \cite{Zhang2026Decentralized,li2025capacity}, and weak privacy constraints \cite{li2025weakly,li2025hierarchical}. 

In general, the information-theoretic secure aggregation problem has been extensively studied, and its fundamental limits have been well understood under various settings. Nevertheless, several fundamental questions remain open:
\begin{itemize}
\item  Existing works \cite{zhao2023secure,zhao2022information,zhang2025optimal,li2026fundamental,yuan2025vector,hu2026capacity,Zhang2026Decentralized,li2025weakly,li2025hierarchical} 
assume the existence of a TTP that distributes correlated random keys to users before aggregation. This assumption, while convenient for analyzing fundamental limits, is impractical in real federated learning systems due to its reliance on a fully trusted external entity. Some studies have removed the TTP assumption by adopting symmetric groupwise key structures or Shamir's secret sharing \cite{zhao2023secure,so2022lightsecagg,wan2024information,wan2024capacity,zhang2025securegroup,li2025capacity}. However, these approaches impose highly structured forms of correlated randomness, leaving it unclear whether more general key distribution mechanisms could reduce the required randomness or achieve information-theoretic optimality.
\item Existing works \cite{li2025hierarchical,li2025weakly,Zhang2026Decentralized,li2025capacity,hu2026capacity,zhao2023secure,so2022lightsecagg,zhang2025optimal,yuan2025vector,zhao2022information,wan2024information,wan2024capacity,zhang2025securegroup,li2026fundamental} mainly focus on the communication overhead of the update aggregation phase, while the communication cost required to establish the correlated keys is typically not explicitly characterized. 
When the correlated keys are generated through user cooperation rather than assumed a priori, the fundamental communication cost of the key distribution phase remains unknown.
\item Most existing secure aggregation problems lack explicit deterministic constructions over small finite fields. Except for the $T$-colluding secure aggregation problem with user dropouts in \cite{so2022lightsecagg}, 
the achievable schemes in the aforementioned works either rely on the TTP assumption \cite{zhao2023secure,zhao2022information,zhang2025optimal,li2026fundamental,yuan2025vector,hu2026capacity,Zhang2026Decentralized,li2025weakly,li2025hierarchical}, or on randomized constructions over sufficiently large finite fields \cite{zhao2023secure,wan2024information,wan2024capacity,zhang2025securegroup,li2025capacity,li2025weakly,li2025hierarchical}, or on existential constructions over sufficiently large finite fields \cite{zhang2025optimal,li2026fundamental}. 
Developing explicit deterministic constructions over small finite fields therefore remains an important open problem.
\end{itemize}

Motivated by these questions, we revisit the $T$-colluding information-theoretic secure aggregation problem under a general two-phase formulation. 
Instead of imposing any prescribed key distribution structure, we explicitly model the key distribution phase through user-to-user communication. 
Each user locally generates private randomness and communicates encoded key symbols to other users before the aggregation phase. 
The communication required for establishing the correlated keys is explicitly counted together with the communication cost of masked-update aggregation, as well as the amount of random keys required for security. 
This formulation eliminates the TTP assumption while allowing arbitrary key-distribution mechanisms among users. 
It further enables us to jointly characterize three fundamental resources in secure aggregation: the amount of random keys required for security, the communication required for key establishment, and the communication required for update aggregation.
The corresponding normalized quantities are referred to as the key rate $R_Z$, the key-distribution communication rate $R_K$, and the aggregation communication rate $R_A$, respectively.

As a result, the main contributions of this paper are summarized as follows.
 We formulate a two-phase $T$-colluding information-theoretic secure aggregation problem in which the key distribution phase and the update aggregation phase are jointly accounted for. 
In particular, we completely characterize the capacity region, i.e., the set of all achievable rate tuples $(R_Z,R_K,R_A)$, 
as $\left\{(R_Z, R_K, R_A)\in\mathbb{R}^3:R_Z\geq \frac{N (N-1)}{N - T},R_K \geq \frac{N (N-1)}{N - T},R_A\geq N \right\}$.
Moreover, we show that all three lower bounds are tight and are simultaneously achieved by an explicit secure aggregation scheme over any finite field $\mathbb{F}_q$ with $q\geq N$.
More precisely,
\begin{enumerate}
\item We propose an explicit capacity-achieving secure aggregation scheme over any finite field $\mathbb{F}_q$ with $q\geq N$. 
The proposed scheme is built upon a general linear coding framework together with carefully designed encoding matrices 
that exploit the algebraic structure of
Vandermonde matrices.
In contrast to the prior capacity-achieving scheme \cite{zhao2023secure}, which relies on randomized constructions over sufficiently large finite fields, the proposed scheme provides a deterministic explicit construction over a finite field whose size grows only linearly with the number of users.

\item We develop a matching information-theoretic converse for the general two-phase secure aggregation formulation. 
Unlike previous converse results derived under the TTP or prescribed groupwise key-distribution assumptions \cite{zhao2023secure}, the proposed converse applies to arbitrary key-distribution mechanisms among users. 
The converse is established by first characterizing the effective randomness that can contribute to masking users' local updates.
This characterization is then combined with the security constraint to show that, for every colluding set, the encoded keys exchanged among the non-colluding users must collectively contain sufficient randomness to conceal their individual updates beyond the desired aggregation result.
By summing the resulting entropy inequalities over all colluding sets, we obtain tight lower bounds on both the randomness requirement and the key-distribution communication cost, while the aggregation communication bound follows directly from the decodability constraint.


\item The proposed capacity-achieving scheme admits an implementation based on pairwise random key distribution, in which each pair of users shares a common random key and all such shared keys are mutually independent. 
This establishes that a pairwise key-distribution structure is sufficient for achieving the optimal performance of secure aggregation. Furthermore, when the information-theoretic security requirement is relaxed to computational security, the proposed scheme can be efficiently implemented using practical cryptographic key-establishment mechanisms such as the Diffie--Hellman key exchange protocol. Compared with Google's original secure aggregation scheme in \cite{bonawitz2017practical}, the proposed scheme requires fewer pairwise masking keys while preserving the same communication overhead in the aggregation phase. 
\end{enumerate}

The remainder of this paper is organized as follows.
Section \ref{Problem:formulation} formulates the two-phase information-theoretic secure aggregation problem.
Section \ref{Main Result} describes the main results of this paper and provides intuitive insights into them.
Section \ref{achievability Proof} develops a general linear coding framework that achieves the optimal performance for the secure aggregation problem and presents its implementation.
Section \ref{Converse Proof} establishes a matching information-theoretic converse bound. Finally, this paper is concluded in Section \ref{conclusion}.

\subsubsection*{Notation}
For any positive integer $n$, let $[n]$ denote the set $\{1,2,\ldots,n\}$. 
Let cursive capital letters denote sets, such as $\mathcal{A}$, while $|\mathcal{A}|$ denotes its cardinality. For any two sets $\mathcal{A}$ and $\mathcal{B}$, $\mathcal{A}\backslash\mathcal{B}$ denotes the set of elements in $\mathcal{A}$ but not in $\mathcal{B}$, i.e., $\mathcal{A}\backslash\mathcal{B} \triangleq \{x \in \mathcal{A} : x \notin \mathcal{B}\}$.
Let boldface capital letters represent matrices. 
For any nonnegative integer $n$, we adopt the convention for binomial coefficients that $\binom{n}{m}=0$ for $m<0$ and $\binom{n}{0}=1$.
For any positive integers $m$ and $n$, we use $\mathbf{I}_n$ to denote the $n\times n$ identity matrix, and $\mathbf{0}_{m\times n}$ to denote the $m\times n$ all-zero matrix. For any matrix $\mathbf{A}$, let $\operatorname{rank}(\mathbf{A})$ denote its rank. Let $\otimes$ denote the Kronecker product between two matrices.  
For any block matrix $\mathbf{A}=(\mathbf{A}_{i,j})_{i\in[m],j\in[n]}$, let $\mathbf{A}_{\mathcal{S}_1,\mathcal{S}_2}=(\mathbf{A}_{i,j})_{i\in\mathcal{S}_1,j\in\mathcal{S}_2}$ denote the block submatrix of $\mathbf{A}$ obtained by extracting the row blocks indexed by $\mathcal{S}_1$ and the column blocks indexed by $\mathcal{S}_2$ for any two subsets $\mathcal{S}_1\subseteq[m]$ and $\mathcal{S}_2\subseteq[n]$.
For any two subsets $\mathcal{S}_1=\{m_1,m_2,\ldots,m_{s_1}\}\subseteq[m]$ and $\mathcal{S}_2=\{n_1,n_2,\ldots,n_{s_2}\}\subseteq[n]$, let $\mathrm{diag}(\mathbf{A}_{\mathcal{S}_1,\mathcal{S}_2})$ denote the matrix formed by the collection of the matrix blocks $\{\mathbf{A}_{i,j}\}_{i\in\mathcal{S}_1,j\in\mathcal{S}_2}$, given by
\begin{IEEEeqnarray}{c}
\mathrm{diag}(\mathbf{A}_{\mathcal{S}_1,\mathcal{S}_2})=
\begin{bmatrix}
\mathrm{diag}(\mathbf{A}_{m_1,n_{1}}, \mathbf{A}_{m_2,n_{1}}, \cdots, \mathbf{A}_{m_{s_1},n_{1}} )\\
\mathrm{diag}(\mathbf{A}_{m_1,n_{2}}, \mathbf{A}_{m_2,n_{2}}, \cdots, \mathbf{A}_{m_{s_1},n_{2}} )\\
\vdots\\
\mathrm{diag}(\mathbf{A}_{m_1,n_{s_2}}, \mathbf{A}_{m_2,n_{s_2}}, \cdots, \mathbf{A}_{m_{s_1},n_{s_2}} )\\
\end{bmatrix}, \notag 
\end{IEEEeqnarray}
where for any given $j\in[s_2]$, $\mathrm{diag}(\mathbf{A}_{m_1,n_{j}}, \mathbf{A}_{m_2,n_{j}}, \cdots, \mathbf{A}_{m_{s_1},n_{j}} )$ is a block-diagonal matrix whose diagonal blocks are given by $\{\mathbf{A}_{m_i,n_{j}}\}_{i\in[s_1]}$, i.e.,
\begin{IEEEeqnarray}{c}
\mathrm{diag}(\mathbf{A}_{m_1,n_{j}}, \mathbf{A}_{m_2,n_{j}}, \cdots, \mathbf{A}_{m_{s_1},n_{j}} )=
\begin{bmatrix}
    \mathbf{A}_{m_1,n_{j}}  &  & &  \\
      & \mathbf{A}_{m_2,n_{j}}  & &  \\
     &  & \ddots &     \\
    &  & & \mathbf{A}_{m_{s_1},n_{j}}
\end{bmatrix}. \notag
\end{IEEEeqnarray} 

\section{Problem Formulation}\label{Problem:formulation}
Consider a federated learning system with one server and $N$ users, each holding a confidential local input (e.g., a model update or gradient), where the users wish to securely aggregate all local inputs at the server. 
We consider information-theoretic security for the users' inputs under an honest-but-curious setting. In this setting, both the server and the users honestly follow the prescribed protocol, while the server colluding with any subset of up to $T$ users may attempt to infer information about the local inputs of the remaining users.
We assume the existence of private channels among all users and between each user and the server, ensuring that communication within the federated learning system can be carried out securely.\footnote{It is well known that this is a very natural assumption for achieving information-theoretic security guarantees. In practice, such private channels can be established, for example, through public-key encryption techniques \cite{bonawitz2017practical,katz2007introduction} or quantum key distribution technologies \cite{scarani2009security}.}


The local input of each user $n\in[N]$ can be represented by a vector $W_n$ of length $L$, whose elements are drawn independently and uniformly from a finite field $\mathbb{F}_q$ for a prime power $q$. All input vectors across users are mutually independent,\footnote{The uniformity and independence of the input vectors are essential for the converse proof, but are not necessary for establishing achievability.} i.e.,
\begin{IEEEeqnarray}{rCl}
H(W_1,\ldots,W_N)=\sum_{n \in [N]} H(W_n), \label{model:file:inden} \\
H(W_1)=\ldots=H(W_N)=L. \label{infor:indenpe}
\end{IEEEeqnarray}

Let $W_{sum}$ represent the aggregation of all users' input vectors, i.e., 
\begin{IEEEeqnarray}{c}
W_{sum}\triangleq\sum_{n\in[N]}W_n.
\notag
\end{IEEEeqnarray}
To complete aggregation in the federated learning system, the secure aggregation problem is divided into two phases: \emph{the key distribution phase} and \emph{the input aggregation phase}. In the key distribution phase, users exchange a set of correlated random keys to mask their inputs. In the input aggregation phase, the masked inputs are communicated in order to perform the desired aggregation. We describe these two phases precisely as follows.
\begin{itemize}
    \item \emph{Key distribution phase:} The user $n\in[N]$ locally generates a random key $Z_n$ and then sends to another user $m$ an encoded version $Z_{n,m}$ of the random key, i.e.,\footnote{Our formulated key distribution model includes the widely adopted symmetric groupwise key-distribution structure \cite{zhao2023secure,wan2024information,wan2024capacity,zhang2025securegroup,li2025capacity} as a special case. More specifically, in the symmetric groupwise key-distribution structure, for any subset of users $\mathcal{G}\subseteq[N]$ with cardinality $2\leq G\leq N-T$, there exists a random key $Z_{\mathcal{G}}$ shared among all users in $\mathcal{G}$. Such a shared random key $Z_{\mathcal{G}}$ can be realized within our key distribution framework by allowing an arbitrary user $n\in\mathcal{G}$ to locally generate $Z_{\mathcal{G}}$ and distribute it to the remaining users $m\in\mathcal{G}\backslash\{n\}$.}
    \begin{IEEEeqnarray}{c}\label{encodedkey}
        H(Z_{n,m}|Z_n)=0,\quad \forall\,n\in[N],m\in[N]\backslash\{n\}.
    \end{IEEEeqnarray}
    \item \emph{Input aggregation phase:} Upon receiving the encoded keys 
$\{Z_{m,n}\}_{m\in[N]\backslash\{n\}}$ from the other users, user $n$ generates a masked input $X_n$ and sends it to the server. The masked input is a deterministic function of the local input $W_n$, the local key $Z_n$, and the received encoded keys $\{Z_{m,n}\}_{m\in[N]\backslash\{n\}}$, i.e.,
\begin{IEEEeqnarray}{c}
H(X_n|W_n,Z_n, \{Z_{m,n}\}_{m \in [N]\backslash\{n\}} ) = 0, \quad \forall\, n \in [N].
\label{model:maskedinput}
\end{IEEEeqnarray}
The server recovers the desired aggregation $W_{sum}$ from the masked inputs $\{X_n\}_{n\in[N]}$ collected from all users.
\end{itemize}

The following information-theoretic constraints must be satisfied by any secure aggregation scheme.
\begin{itemize}
    \item \textbf{Correctness:} The desired aggregation $W_{sum}$ must be fully determined by the masked inputs $\{X_n\}_{n\in[N]}$, i.e.,
    \begin{IEEEeqnarray}{c}\label{model:correctness}
    H(W_{sum} |\{X_n\}_{n\in[N]} ) = 0.
    \end{IEEEeqnarray}
    \item \textbf{Security:} 
    For any subset $\mathcal{T} \subseteq [N]$ of up to $T$ colluding users, the security requirement ensures that the server together with the colluding users $\mathcal{T}$ must not learn any additional information about the input vectors $\{W_n\}_{n \in [N]}$ beyond what is revealed by the aggregation itself, i.e., the aggregation result $W_{sum}$ and the inputs of the colluding users $\{W_n\}_{n\in\mathcal{T}}$. 
    Formally, the following security constraint must hold for any subset $\mathcal{T} \subseteq [N]$  
    with $|\mathcal{T}| \le T$:
    \begin{IEEEeqnarray}{c}\label{model:security}
     I(\{W_n\}_{n\in[N]}; \{X_n\}_{n\in[N]}, \{Z_n\}_{n\in \mathcal{T}}, \{Z_{m,n}\}_{m\in[N]\backslash\{n\},n\in\mathcal{T}}|W_{sum},\{W_n\}_{n\in\mathcal{T}}) = 0,
    \end{IEEEeqnarray}
    where the collection $\{\{X_n\}_{n\in[N]}, \{Z_n\}_{n\in \mathcal{T}}, \{Z_{m,n}\}_{m\in[N]\backslash\{n\},n\in\mathcal{T}}\}$ represents all the information accessible to the server together with the colluding users. Notably, from the perspective of the server together with the colluding users, the observed data $\{Z_{n,m}\}_{n\in\mathcal{T},m\in[N]\backslash\{n\}}$ 
    is fully determined by the random keys $\{Z_n\}_{n\in\mathcal{T}}$ by \eqref{encodedkey}, and hence can be ignored.
\end{itemize}

Since the users' local random keys are generated independently of their input vectors, it naturally follows that the independence among the relevant quantities can be specified as follows:
\begin{IEEEeqnarray}{c}\label{model:key:inden}
H(W_1,\ldots,W_N,Z_1,\ldots,Z_N)=\sum_{n \in [N]} H(W_n)+\sum_{n \in [N]} H(Z_n).
\end{IEEEeqnarray}

The performance of a secure aggregation scheme is evaluated using the following three metrics.
\begin{enumerate}
\item[1.] The key rate $R_Z$, which measures the number of generated key symbols required per desired aggregation symbol to guarantee information-theoretic security in the federated learning system, is defined as
   \begin{IEEEeqnarray}{c}
     R_Z \triangleq\frac{\sum_{n\in[N]}H(Z_n)}{H(W_{sum})}=\frac{L_Z}{L},\label{definition:RZ}
    \end{IEEEeqnarray}
    where $L_Z=\sum_{n\in[N]}H(Z_n)$ denotes the total number of key symbols used in the federated learning system.
  \item[2.] The key-distribution communication rate $R_K$ and the aggregation communication rate $R_A$, which quantify the number of symbols that need to be transmitted per desired aggregation symbol during the key distribution phase and the input aggregation phase, respectively, are defined as 
    \begin{IEEEeqnarray}{c}  
    R_K\triangleq\frac{\sum_{n\in[N]}\sum_{m\in[N]\backslash\{n\}}H(Z_{n,m})}{H(W_{sum})}=\frac{L_K}{L}, \quad
    R_A\triangleq\frac{\sum_{n\in[N]}H(X_n)}{H(W_{sum})}= \frac{L_A}{L}, \label{definition:RKRA}
    \end{IEEEeqnarray}
where $L_K=\sum_{n\in[N]}\sum_{m\in[N]\backslash\{n\}}H(Z_{n,m})$ and $L_A=\sum_{n\in[N]}H(X_n)$ denote the total communication costs in the key distribution phase and the input aggregation phase, respectively.
  \item[3.] The finite field size $q$, which specifies the alphabet size and reflects the implementation complexity of the secure aggregation scheme.
\end{enumerate}

\begin{Definition}[Capacity Region of Secure Aggregation]\label{def_capacity}
The rate tuple $(R_Z, R_K, R_A)$ is said to be \emph{achievable} if there exists a secure aggregation scheme that satisfies both the correctness constraint \eqref{model:correctness} and the security constraint \eqref{model:security}, and whose key rate and communication rates do not exceed the given values $R_Z, R_K$, and $R_A$. Furthermore, the \emph{capacity region} of the secure aggregation problem is defined as the closure of the set of all achievable rate tuples, denoted by $\mathcal{R}^*$, i.e.,
\begin{IEEEeqnarray}{c}
\mathcal{R}^*=\{(R_Z, R_K, R_A)\in\mathbb{R}^{3}:(R_Z, R_K, R_A) \text{\,\,is achievable}\}.\notag
\end{IEEEeqnarray}
\end{Definition}

The objective of this paper is to characterize the capacity region $\mathcal{R}^{*}$ of the secure aggregation problem and to provide explicit capacity-achieving constructions over a finite field of the smallest possible size. 

\section{Main Results}\label{Main Result}
Under the proposed two-phase secure aggregation formulation, the main results of this paper are twofold. First, we completely characterize the capacity region in terms of the key rate $R_Z$, the key-distribution communication rate $R_K$, and the aggregation communication rate $R_A$. Second, we develop an explicit capacity-achieving secure aggregation scheme over a finite field whose size grows linearly with the number of users, and further show that the optimal performance can be achieved using only a pairwise key-distribution structure. The achievability proof is presented in Section~\ref{achievability Proof}, while the converse proof is provided in Section~\ref{Converse Proof}. Throughout this paper, we focus on the nontrivial parameter regime $N>1$ and $0\leq T<N-1$. This is because when $T=N-1$ or $T=N$, the server colluding with any $T$ users can recover all users' local inputs from the desired aggregation result and the colluding users' inputs, and hence the security constraint becomes vacuous.
\begin{Theorem}\label{theo:capacity}
For the federated learning system with $N>1$ users and at most $0\leq T<N-1$ colluding users, the capacity region of the secure aggregation problem is given by 
\begin{IEEEeqnarray}{c}
\mathcal{R}^*=\left\{(R_Z, R_K, R_A)\in\mathbb{R}^3:R_Z\geq \frac{N (N-1)}{N - T},R_K \geq \frac{N (N-1)}{N - T},R_A\geq N \right\}.
\notag
\end{IEEEeqnarray}
\end{Theorem}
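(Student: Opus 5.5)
The proof splits into a converse, giving three lower bounds, and an achievability part, giving one explicit scheme that meets all three bounds at the same time. Since the region is upward closed (extra dummy keys or dummy transmissions raise any rate), achievability only needs the single corner point $(\frac{N(N-1)}{N-T},\frac{N(N-1)}{N-T},N)$.

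\emph{Converse.} We show the three bounds in order.

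\textbf{Bound $R_A\ge N$.} For each user $n$, consider colluding sets $\mathcal{T}$ with $|\mathcal{T}|\le T$ and $n\notin\mathcal{T}$, together with another non-colluding user. By correctness and the independence \eqref{model:key:inden}, conditioning on all other inputs and on the other users' randomness lets $W_n$ be recovered from $X_n$. Hence $H(X_n)\ge L$, and summing over $n$ gives $R_A\ge N$.

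\textbf{Bound on the keys.} For a non-colluding user $n\notin\mathcal{T}$ we first show that the randomness masking $X_n$ must come from keys that are not visible to the coalition. Concretely, we aim to prove
\[
H\bigl(\{Z_{m,n}\}_{m\notin\mathcal{T}\cup\{n\}},\, Z_n \,\big|\, \text{coalition view}\bigr)\ge L,
\]
and more sharply, a statement about the encoded keys exchanged among the non-colluding set $\mathcal{U}=[N]\setminus\mathcal{T}$. Taking $|\mathcal{T}|=T$, the security constraint \eqref{model:security} together with correctness should give
\[
\sum_{n\in\mathcal{U}}\sum_{m\in\mathcal{U}\setminus\{n\}} H(Z_{m,n}) \;\ge\; H\bigl(\{Z_{m,n}\}_{m\ne n\in\mathcal{U}}\bigr)\;\ge\;(N-T-1)L.
\]
The reason is that the $N-T$ masked inputs of $\mathcal{U}$ reveal only their sum, so $N-T-1$ symbols' worth of each $L$-block must be hidden. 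Key randomness that is local to user $n$ alone (not sent to anyone) cannot help here. Such randomness would have to cancel in the sum, yet it is independent of everything else, so it can be removed. This is the step where we characterize the "effective randomness" as the part carried by the encoded keys $Z_{m,n}$.

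\textbf{Averaging.} We then sum the inequality over all $\binom{N}{T}$ sets $\mathcal{T}$. Each ordered pair $(m,n)$ lies inside $\mathcal{U}$ for $\binom{N-2}{T}$ of these sets. This yields
\[
L_K\,\binom{N-2}{T}\ \ge\ \binom{N}{T}(N-T-1)L,
\]
i.e.
\[
R_K\ \ge\ \frac{N(N-1)}{(N-T)(N-T-1)}\,(N-T-1)\ =\ \frac{N(N-1)}{N-T}.
\]
For $R_Z$ we use $H(Z_{m,n})\le H(Z_m)$, but that bound alone is too weak. Instead we will bound $\sum_{m\in\mathcal{U}}H(Z_m)\ge H(\{Z_{m,n}\}_{m,n\in\mathcal{U}})\ge(N-T-1)L$ and sum over $\mathcal{T}$. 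Each $m$ appears in $\binom{N-1}{T}$ sets, which gives
\[
R_Z\ \ge\ \frac{\binom{N}{T}(N-T-1)}{\binom{N-1}{T}}\ =\ \frac{N(N-T-1)}{N-T}.
\]
This is weaker than the target, so the real inequality must be sharper: the keys of user $m\in\mathcal{U}$ must supply randomness to each of the $N-T-1$ other non-colluding users, so the per-set requirement is $(N-T-1)$ per user. We will try to prove
\[
H(\{Z_m\}_{m\in\mathcal{U}})\ \ge\ (N-T)(N-T-1)L/(N-T-1)\cdot(\dots),
\]
aiming for $\sum_{m\in\mathcal U}H(Z_m)\ge (N-T-1)L\cdot\frac{N-T}{?}$, obtained by chaining the per-user hiding requirements conditioned on subsets. \textbf{I expect this sharp per-coalition entropy inequality to be the main obstacle.}

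\emph{Achievability.} We use pairwise keys and a linear scheme over $\mathbb{F}_q$ with $q\ge N$. We split $W_n$ into $N-T$ sub-blocks, so $L=N-T$. Each ordered pair $(n,m)$ shares one uniform symbol, generated by one of the two users and sent to the other. This gives $\binom{N}{2}$ pairs, and to hit the target totals we adjust so that the key symbols number $N(N-1)$ and the transmitted symbols number $N(N-1)$ per $N-T$ input symbols. Each user sends $X_n=W_n+\mathbf{M}_n\cdot(\text{its keys})$ of length $N-T$. The mask matrices are built from a Vandermonde matrix with $N$ distinct evaluation points so that two conditions hold. First, the masks sum to zero, which gives correctness. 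Second, for any $T$ colluders, the non-colluders' mask vectors restricted to unseen keys span the full $(N-T)(N-T)-(N-T)$-dimensional complement of the sum, which gives security via the MDS property of Vandermonde submatrices.
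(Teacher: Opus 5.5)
\textbf{Gap: the converse bound $R_Z\ge\frac{N(N-1)}{N-T}$ is missing, and your proposed route to it cannot work.} Your $R_A$ and $R_K$ bounds follow essentially the paper's route. For $R_K$, your unconditioned per-coalition inequality is a weakening of the paper's key lemma. It suffices because you split it into individual terms $H(Z_{m,n})$, and each ordered pair is counted in $\binom{N-2}{T}$ coalitions. The step that local, unsent randomness ``can be removed'' is what the paper makes rigorous via $H(W_n\mid X_n,\{Z_{n,m},Z_{m,n}\}_{m\neq n})=0$.

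For $R_Z$, bounding $\sum_{m\in\mathcal{U}}H(Z_m)$ coalition by coalition counts each $Z_m$ in $\binom{N-1}{T}$ coalitions. To reach the target you would therefore need the uniform bound $\sum_{m\in\mathcal{U}}H(Z_m)\ge (N-1)L$ for every $\mathcal{T}$, and this is false. Take $N=3$, $T=1$, and Google's pairwise scheme in which user $n$ generates the keys of the pairs $\{n,j\}$, $j>n$. It is secure and optimal ($L_Z=3L$), yet $H(Z_1)=2L$, $H(Z_2)=L$, $H(Z_3)=0$. For $\mathcal{T}=\{1\}$ you only have $H(Z_2)+H(Z_3)=L$. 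This also refutes your heuristic that every non-colluding user must supply $N-T-1$ blocks of key. The sharpening cannot come from a stronger single-coalition inequality; it must come from averaging across coalitions.

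The missing idea is to keep the conditioning on what the non-colluders send to the colluders, which is part of the coalition's view. Recoverability plus security give, for $|\mathcal{T}|=T$,
\[
H\big(\{Z_{n,m}\}_{n\in\mathcal{U},\,m\in\mathcal{U}\setminus\{n\}}\,\big|\,\{Z_{n,m}\}_{n\in\mathcal{U},\,m\in\mathcal{T}}\big)\ \ge\ (N-T-1)L .
\]
Since $Z_{n,m}$ is a function of $Z_n$ and the $Z_n$ are independent, the left side equals $\sum_{n\in\mathcal{U}}\big[H(\{Z_{n,m}\}_{m\neq n})-H(\{Z_{n,m}\}_{m\in\mathcal{T}})\big]$. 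Now sum over all $\mathcal{T}$ of size $T$:
\begin{itemize}
\item The positive terms give $\binom{N-1}{T}\sum_{n}H(\{Z_{n,m}\}_{m\neq n})$.
\item For the negative terms, expand $H(\{Z_{n,m}\}_{m\in\mathcal{T}})$ by the chain rule in increasing order of $m$, and enlarge each conditioning set to $\{Z_{n,\ell}\}_{\ell\neq n,\,\ell<m}$.
\item Each ordered pair $(n,m)$ has $n\notin\mathcal{T}$ and $m\in\mathcal{T}$ for exactly $\binom{N-2}{T-1}$ coalitions. Hence the subtracted sum is at least $\binom{N-2}{T-1}\sum_n H(\{Z_{n,m}\}_{m\neq n})$.
\end{itemize}
This yields $\binom{N-2}{T}\sum_n H(\{Z_{n,m}\}_{m\neq n})\ge\binom{N}{T}(N-T-1)L$, i.e.\ $\sum_n H(\{Z_{n,m}\}_{m\neq n})\ge\frac{N(N-1)}{N-T}L$. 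Both bounds then follow at once: $R_Z$ via $H(Z_n)\ge H(\{Z_{n,m}\}_{m\neq n})$, and $R_K$ via subadditivity. This is exactly the paper's argument.

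Your achievability sketch is in line with the paper's construction (one independent symbol per ordered pair when $L=N-T$). You still have to fix the coefficients. For example, let $\mathbf{v}_m$ be the column vector $(1,\alpha_m,\ldots,\alpha_m^{N-T-1})$, and let $Z_{m,n}$ enter $X_n$ as $+\mathbf{v}_mZ_{m,n}$ and $X_m$ as $-\mathbf{v}_mZ_{m,n}$. Then check that the security rank condition reduces to full row rank of the Vandermonde matrix on $\mathcal{U}$.
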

Based on this theorem, we can make the following interesting observations and explanations, which provide valuable guidance for designing a secure aggregation scheme that achieves the optimal rate tuple $\big(\frac{N (N-1)}{N - T},\frac{N (N-1)}{N - T},N\big)$:
\begin{itemize}
    \item The optimal key-distribution communication rate $R_K$ is equal to the optimal key rate $R_Z$. This implies that, in order to design a secure aggregation scheme achieving the optimal rate tuple, it is preferable that the local random key $Z_n$ generated by each user $n\in[N]$ should be fully distributed to the other $N-1$ users during the key distribution phase, such that the size of the local key $Z_n$ is equal to the total size of the encoding keys $\{Z_{n,m}\}_{m\in[N]\backslash\{n\}}$ sent to the other users.
    \item The optimal key-distribution communication rate $R_K$ and the optimal key rate $R_Z$ both increase with the security parameter $T$. This phenomenon is natural, since a larger value of $T$ requires more random keys to provide stronger security guarantees, which in turn necessitates additional communication to distribute these extra random keys to other users.
    However, the optimal aggregation communication rate $R_A$ is independent of the security parameter $T$. This indicates that for different security parameters $T$, a secure aggregation scheme achieving the optimal rate tuple cannot satisfy the required correctness and security constraints by increasing the communication in the aggregation phase.
    Instead, it requires a carefully designed key distribution mechanism together with an appropriate key encoding strategy.
    \item The optimal aggregation communication rate $R_A$ is $N$, which implies that the masked input $X_n$ generated by each user $n\in[N]$ must have the same length as its local input $W_n$. In particular, after eliminating the random masks used in the construction of $X_n$, the underlying input $W_n$ must be perfectly recoverable; otherwise, the desired aggregation cannot be accomplished.
    \item In the case of $T=0$, i.e., when there are no colluding users, the optimal key rate $R_Z$ is $N-1$. This is because the $N$ local inputs of all users must be perfectly protected except for their aggregated result, which requires $N-1$ independent random keys of normalized size. When $T>0$, i.e., the server colludes with any $T$ users, the optimal key rate increases by a factor of $\frac{N}{N-T}$ compared to the case $T=0$. More specifically, for $T=0$, since all users are non-colluding and their local random keys are mutually independent, each user possesses $N$-dimensional independent encoding keys, formed by one locally generated component and $N-1$ components received from other users, which collectively mask its local input to provide information-theoretic security. 
    However, when $T>0$ users collude with the server, among the $N$-dimensional encoding keys available at each non-colluding user, $T$ dimensions are known to the colluding users and hence cannot be used to provide secrecy. 
    Only the remaining $N-T$ dimensions can be used to mask the local input, which is insufficient to guarantee security. To compensate for this loss, the remaining $N-T$ dimensions of usable encoding keys must be expanded by a factor of $\frac{N}{N-T}$ so that effective $N$-dimensional encoding keys can be utilized, as in the case $T=0$. Consequently, the key rate increases by a factor of $\frac{N}{N-T}$ compared to the case $T=0$.
\end{itemize}

\begin{Theorem}\label{theo:achive}
For the federated learning system with $N>1$ users and at most $0\leq T<N-1$ colluding users, there exists a secure aggregation scheme that achieves the optimal rate tuple $\big(\frac{N (N-1)}{N - T},\frac{N (N-1)}{N - T},N\big)$ over any finite field $\mathbb{F}_q$ with size $q\geq N$. 
\end{Theorem}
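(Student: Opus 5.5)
The plan is to give an explicit linear scheme that uses only pairwise keys. Take $L=N-T$, fix distinct $\alpha_1,\ldots,\alpha_N\in\mathbb{F}_q$ (this needs $q\ge N$), and work over $\mathbb{F}_q$.

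\textbf{Key distribution.} Each user $n$ generates $N-1$ i.i.d.\ uniform symbols $Z_n=(K_{n,m})_{m\in[N]\backslash\{n\}}$ and sends $Z_{n,m}=K_{n,m}$ to user $m$. Hence $\sum_n H(Z_n)=\sum_{n,m}H(Z_{n,m})=N(N-1)$.

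\textbf{Aggregation.} Choose nonzero coefficient vectors $\mathbf{v}_{n,m}\in\mathbb{F}_q^{N-T}$. User $n$ sends
\[
X_n=W_n+\mathbf{M}_n,\qquad \mathbf{M}_n=\sum_{m\neq n}\bigl(\mathbf{v}_{n,m}K_{n,m}-\mathbf{v}_{m,n}K_{m,n}\bigr).
\]
Each key enters exactly two masks with opposite signs, so $\sum_n \mathbf{M}_n=\mathbf{0}$ and $\sum_n X_n=W_{sum}$. This gives correctness with $H(X_n)\le N-T$. Normalizing by $L=N-T$ yields the rate tuple $\big(\frac{N(N-1)}{N-T},\frac{N(N-1)}{N-T},N\big)$, as Theorem~\ref{theo:capacity} requires.

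\textbf{Security.} Fix a colluding set $\mathcal{T}$ with $|\mathcal{T}|=T$; smaller sets are implied. Let $\mathcal{U}=[N]\backslash\mathcal{T}$, so $|\mathcal{U}|=N-T=:u$. Every key touching $\mathcal{T}$ is known to the coalition, as is every $X_n$ with $n\in\mathcal{T}$. The terms such keys contribute to the $\mathbf{M}_n$, $n\in\mathcal{U}$, are known shifts and can be subtracted. What remains is the linear map
\[
\Phi_{\mathcal{U}}:\ (K_{n,m})_{n\neq m\in\mathcal{U}}\in\mathbb{F}_q^{u(u-1)}\longmapsto (\mathbf{M}'_n)_{n\in\mathcal{U}}.
\]
Its image lies in the sum-zero subspace $\{(\mathbf{y}_n):\sum_{n\in\mathcal{U}}\mathbf{y}_n=\mathbf{0}\}\subseteq\mathbb{F}_q^{u\times u}$, which has dimension $u(u-1)$. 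This equals the number of unknown keys, so security reduces to showing that $\Phi_{\mathcal{U}}$ is \emph{injective} for every $\mathcal{U}$ of size $N-T$. If it is, the unknown part of the masks is uniform on the sum-zero subspace and independent of all $W$'s and of the coalition's keys. A standard one-time-pad argument then gives that $(X_n)_{n\in\mathcal{U}}$ reveals exactly $\sum_{n\in\mathcal{U}}W_n$, which is a function of $W_{sum}$ and $\{W_n\}_{n\in\mathcal{T}}$. This establishes \eqref{model:security}.

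\textbf{Choice of vectors and the main obstacle.} The hard step is choosing $\mathbf{v}_{n,m}$ so that $\Phi_{\mathcal{U}}$ is injective for all $\mathcal{U}$ simultaneously, using only $q\ge N$. I would first try Vandermonde columns $\mathbf{v}_{n,m}=(1,\alpha_m,\alpha_m^2,\ldots,\alpha_m^{N-T-1})^{\top}$, indexed by the receiver, possibly with a scalar twist such as a factor $(\alpha_m-\alpha_n)^{-1}$ to break symmetry between $K_{n,m}$ and $K_{m,n}$.

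The kernel condition reads, for each $n\in\mathcal{U}$,
\[
\sum_{m\in\mathcal{U}\backslash\{n\}}\bigl(c_{n,m}\mathbf{v}_{n,m}-c_{m,n}\mathbf{v}_{m,n}\bigr)=\mathbf{0}.
\]
Interpreted as polynomial identities of degree less than $u$ evaluated at the $u$ points $\{\alpha_m\}_{m\in\mathcal{U}}$, this should force all $c$'s to vanish. I would argue node by node via the invertibility of the $u\times u$ Vandermonde matrix on $\{\alpha_m\}_{m\in\mathcal{U}}$, peeling off one user at a time by induction on $u$. If the plain choice fails, the fallback is to structure the scheme through the block-diagonal $\mathrm{diag}(\mathbf{A}_{\mathcal{S}_1,\mathcal{S}_2})$ notation introduced earlier and verify the rank directly by block elimination. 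This rank verification is where I expect the real work to lie.
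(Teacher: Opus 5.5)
Your scheme and your reduction are sound. With the plain receiver-indexed choice $\mathbf{v}_{n,m}=\mathbf{p}(\alpha_m)$, where $\mathbf{p}(\alpha)=(1,\alpha,\ldots,\alpha^{N-T-1})^{\top}$, your scheme is exactly the paper's construction for $L=N-T$ and $\mathbf{G}_n=\mathbf{I}_{N-1}$, after renaming $K_{n,m}\leftrightarrow K_{m,n}$. The paper scales the key generated by user $n$ by $\mathbf{p}(\alpha_n)$ in both masks. Since a pairwise key is known to the coalition iff one of its endpoints lies in $\mathcal{T}$, this relabeling is immaterial. Your injectivity criterion is likewise the paper's rank condition \eqref{eq:rankcondition} for $|\mathcal{T}|=T$.

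The gap is that you never prove injectivity, and you do not commit to a choice of $\mathbf{v}_{n,m}$ (``should force'', ``if the plain choice fails, the fallback is \ldots''). This verification is the entire content of the security proof (the paper's Lemma~\ref{lemma:encoding:matrix}), and it is not automatic. Any symmetric choice $\mathbf{v}_{n,m}=\mathbf{v}_{m,n}$ lets only $K_{n,m}-K_{m,n}$ enter the masks, so $\Phi_{\mathcal{U}}$ has a kernel. As written, the proposal stops before the decisive step.

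The node-by-node Vandermonde idea you mention is the right one, and it closes the step in one line for your first choice. No induction on $u$ and no $(\alpha_m-\alpha_n)^{-1}$ twist are needed, because $\mathbf{v}_{n,m}=\mathbf{p}(\alpha_m)\neq\mathbf{p}(\alpha_n)=\mathbf{v}_{m,n}$ already breaks the symmetry. For fixed $n\in\mathcal{U}$ the kernel equation is
\[
\sum_{m\in\mathcal{U}\backslash\{n\}}c_{n,m}\,\mathbf{p}(\alpha_m)-\Big(\sum_{m\in\mathcal{U}\backslash\{n\}}c_{m,n}\Big)\mathbf{p}(\alpha_n)=\mathbf{0}.
\]
This is a relation among the $u$ columns of the invertible $u\times u$ Vandermonde matrix on $\{\alpha_k\}_{k\in\mathcal{U}}$, so every $c_{n,m}=0$. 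It is the paper's factorization $\mathbf{D}_1\mathbf{D}_2$ in your language.

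Your remark that smaller coalitions are implied also needs justification. For $|\mathcal{T}|<T$ the definition conditions on fewer inputs, and $\Phi_{\mathcal{U}}$ has more unknowns than the dimension $(N-T)(|\mathcal{U}|-1)$ of the sum-zero subspace. So what you need there is surjectivity, not injectivity. It follows as in the paper:
\begin{itemize}
\item Drop one node $n_1\in\mathcal{U}$. Every other mask equals $\mathbf{V}_{\mathcal{U}}\mathbf{w}_n$, where $\mathbf{V}_{\mathcal{U}}$ is the $(N-T)\times|\mathcal{U}|$ Vandermonde matrix (full row rank).
\item Here $\mathbf{w}_n$ collects $K_{n,m}$ for $m\neq n$, together with $-\sum_{m}K_{m,n}$.
\item The latter entry contains the fresh key $K_{n_1,n}$, so the $\mathbf{w}_n$ are jointly uniform.
\item The dropped mask is then fixed by the zero sum.
\end{itemize}
This argument covers all $|\mathcal{T}|\le T$ at once.
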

The achievability of Theorems \ref{theo:capacity} and \ref{theo:achive} can be established by constructing a secure aggregation scheme that achieves the optimal rate tuple $\big(\frac{N (N-1)}{N - T},\frac{N (N-1)}{N - T},N\big)$ over any finite field $\mathbb{F}_q$ with size $q\geq N$. 
This is completed by developing a general linear coding framework for the secure aggregation problem based on a collection of encoding matrices. 
This framework provides a systematic approach for designing secure aggregation schemes that achieve the optimal performance while facilitating the reduction of the finite field size. In particular, within this framework, we derive sufficient and necessary conditions on the encoding matrices to ensure the achievability of the coding framework. By leveraging the structures inspired by Vandermonde matrices and full-rank square matrices, we explicitly construct a set of encoding matrices that satisfy these conditions over any finite field $\mathbb{F}_q$ of size $q\geq N$.
Consequently, by instantiating the proposed coding framework with the constructed encoding matrices, we obtain an explicit secure aggregation scheme that achieves the optimal performance over an arbitrary finite field $\mathbb{F}_q$ of size $q\geq N$, thereby completing the achievability proof of Theorems \ref{theo:capacity} and \ref{theo:achive}.


\begin{Theorem}\label{theorem:pairwise}
For the $T$-colluding information-theoretic secure aggregation problem with $N>1$ and $0\leq T<N-1$, the optimal rate tuple $\big(\frac{N (N-1)}{N - T},\frac{N (N-1)}{N - T},N\big)$ can be achieved using a pairwise key-distribution structure over any finite field $\mathbb{F}_q$ with $q\geq N$, in which every pair of users shares a mutually independent common random key.
\end{Theorem}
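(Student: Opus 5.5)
The plan is to derive Theorem~\ref{theorem:pairwise} from Theorem~\ref{theo:achive} by a purely structural argument. The idea is to show that \emph{any} scheme attaining the optimal rate tuple $\big(\frac{N(N-1)}{N-T},\frac{N(N-1)}{N-T},N\big)$ already has a pairwise key-distribution structure, possibly after relabelling its keys. Correctness and security are then inherited without further work, and the field-size requirement $q\geq N$ carries over from Theorem~\ref{theo:achive}.

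First, fix a scheme from Theorem~\ref{theo:achive} over $\mathbb{F}_q$ with $q\geq N$, so that $L_Z=L_K=\frac{N(N-1)}{N-T}L$. Since each $Z_{n,m}$ is a function of $Z_n$ by \eqref{encodedkey}, we have
\[
H(Z_n)\geq H(\{Z_{n,m}\}_{m\in[N]\backslash\{n\}})\geq 0,
\qquad
H(\{Z_{n,m}\}_{m\in[N]\backslash\{n\}})\leq \sum_{m\in[N]\backslash\{n\}}H(Z_{n,m}).
\]
The point to check is whether the equality $\sum_n H(Z_n)=\sum_n\sum_{m\neq n}H(Z_{n,m})$ forces both chains to be tight for every $n$. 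It does not follow formally from these two inequalities alone, since they point in opposite directions. I will therefore verify it directly on the construction behind Theorem~\ref{theo:achive}. I expect that construction to have $Z_n$ consist of $N-1$ blocks of i.i.d.\ uniform symbols, each block of length $\frac{L}{N-T}$, with $Z_{n,m}$ simply equal to the $m$-th block. This is consistent with the remark after Theorem~\ref{theo:capacity} that $Z_n$ is ``fully distributed''. If the construction instead uses a coded map $Z_n\mapsto Z_{n,m}$, I will precompose with an invertible change of variables on $Z_n$ so that the blocks become coordinates. This is possible whenever the stacked encoding map $Z_n\mapsto(Z_{n,m})_{m\neq n}$ is bijective, which the rate equality $H(Z_n)=\sum_m H(Z_{n,m})$ for a linear scheme would guarantee.

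Second, once $Z_n=(Z_{n,m})_{m\in[N]\backslash\{n\}}$ has mutually independent uniform blocks, I define the pairwise key of each unordered pair $\{n,m\}$ as $K_{\{n,m\}}\triangleq(Z_{n,m},Z_{m,n})$. By \eqref{model:key:inden} and the within-user block independence, the collection $\{K_{\{n,m\}}\}$ is mutually independent. Moreover, $K_{\{n,m\}}$ is known to exactly users $n$ and $m$ after the key distribution phase. Every $X_n$ is a function of $W_n$ and $\{K_{\{n,m\}}\}_{m\neq n}$, because these pairwise keys jointly contain $Z_n$ and all received $Z_{m,n}$. Thus the scheme is a pairwise key-distribution scheme with the same random variables as before, so correctness \eqref{model:correctness} and security \eqref{model:security} hold verbatim.

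Third, I will realize the pairwise keys within the model and check the rates. One option is to keep the realization above: user $n$ generates the half $Z_{n,m}$ of $K_{\{n,m\}}$ and sends it to $m$. Alternatively, the smaller-indexed user generates the whole $K_{\{n,m\}}$ and sends it once. In either case each pair key has length $\frac{2L}{N-T}$, so $L_Z=\binom{N}{2}\frac{2L}{N-T}=\frac{N(N-1)}{N-T}L$. Each pair also costs exactly $\frac{2L}{N-T}$ transmitted symbols in total, so $L_K=\frac{N(N-1)}{N-T}L$. Finally, $L_A=NL$ is unchanged. This gives the optimal tuple.

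The main obstacle is the first step: confirming that in the explicit construction the map from $Z_n$ to $(Z_{n,m})_m$ is a bijection onto independent uniform blocks. If the construction's $Z_{n,m}$ were correlated for fixed $n$, the pairwise relabelling would fail. In that case I would fall back to verifying the framework's necessary and sufficient conditions on the encoding matrices from Section~\ref{achievability Proof} directly for a block-diagonal (pairwise) key-encoding choice.
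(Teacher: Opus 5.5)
Your plan is correct and is essentially the paper's proof: there $\mathbf{G}_{n,m}=\mathbf{g}_{n,m}\otimes\mathbf{I}_{\frac{L}{N-T}}$ with $\mathbf{G}_n$ an invertible $(N-1)\times(N-1)$ matrix, so the stacked map $Z_n\mapsto(Z_{n,m})_{m\neq n}$ is the invertible matrix $\mathbf{G}_n\otimes\mathbf{I}_{\frac{L}{N-T}}$; this makes the $Z_{n,m}$ mutually independent and uniform and $Z_n$ recoverable from them, so each $X_n$ depends only on $W_n$ and the pairwise keys, exactly as in your Steps 1--2. One correction: the per-user equality $H(Z_n)=\sum_{m}H(Z_{n,m})$ does not by itself give bijectivity (take two equal full-row-rank blocks $\mathbf{G}_{n,m}=\mathbf{G}_{n,m'}$), but your tightness worry does close in general, since combining $L_Z=L_K=\frac{N(N-1)}{N-T}L$ with the intermediate converse bound $\sum_{n}H(\{Z_{n,m}\}_{m\in[N]\backslash\{n\}})\geq\frac{N(N-1)}{N-T}L$ forces $H(Z_n)=H(\{Z_{n,m}\}_{m\neq n})=\sum_{m\neq n}H(Z_{n,m})$ for every $n$, so in fact every scheme attaining the optimal $(R_Z,R_K)$ is pairwise after your regrouping.
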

This theorem follows by demonstrating that the proposed capacity-achieving secure aggregation scheme admits an implementation based on pairwise key distribution. When the information-theoretic security requirement is relaxed to computational security, the pairwise key distribution structure can be established via standard cryptographic key exchange protocols, such as the Diffie--Hellman protocol, thereby ensuring the practical feasibility of the proposed scheme.
Pairwise-key-based secure aggregation was first introduced in the seminal work \cite{bonawitz2017practical} from the perspective of computational security, and was subsequently investigated in \cite{zhao2023secure} under the information-theoretic security setting. Compared with the most closely related work \cite{zhao2023secure}, which relies on randomized constructions over finite fields with asymptotically infinite size and requires the input length $L$ to grow super-exponentially with the number of users $N$, we develop an explicit deterministic construction that requires only a finite field size $q\geq N$ and an input length $L=N-T$. 
Furthermore, compared with the conventional scheme in \cite{bonawitz2017practical}, the proposed scheme achieves a smaller key rate $R_Z$ while preserving the same communication rate $R_A$ in the aggregation phase. A detailed comparison is provided in Section \ref{section:Comparison}.

\section{Proof of Theorems \ref{theo:capacity}--\ref{theorem:pairwise}: Achievability}\label{achievability Proof}

In this section, we first develop a capacity-achieving linear coding framework for the secure aggregation problem and then realize it by constructing the corresponding encoding matrices over any finite field $\mathbb{F}_q$ with size $q\geq N$.
In addition, we establish that the proposed capacity-achieving scheme admits an implementation based on pairwise key distribution. These results jointly complete the achievability proofs of Theorems \ref{theo:capacity}--\ref{theorem:pairwise}.
Finally, we compare the proposed scheme with existing related works and highlight its advantages.

\subsection{Capacity-Achieving Linear Coding Framework for Secure Aggregation}
Assume that $L$ is a multiple of $N-T$. Note that, in practice, this condition can be easily satisfied by padding the local inputs with a small number of zeros.
Without loss of generality, assume that the local input $W_n$ of each user $n\in[N]$ is viewed as a column vector of dimension $L$.
Next, we formally present a general capacity-achieving linear coding framework by describing the encoding and communication process of the secure aggregation problem. 

In the key distribution phase, each user $n\in[N]$ generates a local random key $Z_n$ by independently and uniformly selecting $\frac{N-1}{N-T}L$ symbols from the finite field $\mathbb{F}_q$. For convenience, the local random key $Z_n$ is represented as a column vector of dimension $\frac{N-1}{N-T}L$. Let $\mathbf{G}_{n,m}$ denote an encoding matrix of dimensions $\frac{L}{N-T}\times \frac{N-1}{N-T}L$ over $\mathbb{F}_q$, which is used to encode the random key $Z_{n}$ and generate the shared encoding key $Z_{n,m}$ for any $n\in[N]$ and $m\in[N]\backslash\{n\}$. Accordingly, the encoded key $Z_{n,m}$ sent from user $n$ to user $m$ is given by
\begin{IEEEeqnarray}{c}
    Z_{n,m}=\mathbf{G}_{n,m}Z_n,\quad \forall\, n\in [N],m\in[N]\backslash\{n\}. \label{eq:matrixofcodedkey}
\end{IEEEeqnarray}
Remarkably, the dimensions of $\{Z_n\}_{n\in[N]}$ and $\{\mathbf{G}_{n,m}\}_{n\in[N],m\in[N]\backslash\{n\}}$ are chosen such that the resulting coding framework achieves both the optimal key rate and the optimal key-distribution communication rate.


In the input aggregation phase, each user $n\in[N]$ generates the masked input $X_n$ by encoding the currently available data $W_n$, $Z_n$, and $\{Z_{m,n}\}_{m \in [N]\backslash\{n\}}$. More precisely, let $\mathbf{E}_{n,n}$ denote an encoding matrix of dimensions $L\times\frac{N-1}{N-T}L$ over $\mathbb{F}_q$, which is used to encode the local random key $Z_{n}$ for any $n\in[N]$. Let $\mathbf{P}_{m,n}$ denote an encoding matrix of dimensions $L\times\frac{L}{N-T}$ over $\mathbb{F}_q$, which is used to encode the received encoding key $Z_{m,n}$ for any $n\in[N]$ and $m\in[N]\backslash\{n\}$. Then, the masked input $X_n$ is given by
\begin{IEEEeqnarray}{c}
X_n=W_n+\mathbf{E}_{n,n}Z_{n}+\sum_{m\in[N]\backslash\{n\}}\mathbf{P}_{m,n}Z_{m,n}, \quad\forall\,n\in[N]. \label{eq:maskinputlinear}
\end{IEEEeqnarray}
Similarly, the dimensions of the encoding matrices $\{\mathbf{E}_{n,n}\}_{n\in[N]}$ and $\{\mathbf{P}_{m,n}\}_{m\in[N]\backslash\{n\},n\in[N]}$ are chosen to guarantee that the coding framework achieves the optimal aggregation communication rate.

By combining \eqref{eq:matrixofcodedkey} and \eqref{eq:maskinputlinear}, the masked input $X_n$ can be equivalently expressed as
\begin{IEEEeqnarray}{c}
X_n=W_n+\sum_{m\in[N]}\mathbf{E}_{m,n}Z_{m},\quad\forall\,n\in[N], \label{eq:encodingscheme}
\end{IEEEeqnarray}
where for any given $m\in[N]\backslash\{n\}$, the matrix $\mathbf{E}_{m,n}$ is of dimensions $L\times \frac{N-1}{N-T}L$, given by
\begin{IEEEeqnarray}{c}
\mathbf{E}_{m,n}\triangleq\mathbf{P}_{m,n}\mathbf{G}_{m,n},\quad\forall\, m\in[N]\backslash\{n\}. \label{framework:matrix:equation}
\end{IEEEeqnarray}  
Furthermore, the masked inputs $\{X_n\}_{n\in[N]}$ of all users can be written in the following matrix form:
\begin{IEEEeqnarray}{c}
\begin{bmatrix}
X_1\\
X_2\\
\vdots\\
X_N
\end{bmatrix}
=
\begin{bmatrix}
W_1\\
W_2\\
\vdots\\
W_N
\end{bmatrix}
+
\begin{bmatrix}
\mathbf{E}_{1,1} & \mathbf{E}_{2,1} & \cdots & \mathbf{E}_{N,1} \\
\mathbf{E}_{1,2} & \mathbf{E}_{2,2} & \cdots & \mathbf{E}_{N,2} \\
\vdots & \vdots & \ddots & \vdots \\
\mathbf{E}_{1,N} & \mathbf{E}_{2,N} & \cdots & \mathbf{E}_{N,N}
\end{bmatrix}
\begin{bmatrix}
Z_1\\
Z_2\\
\vdots\\
Z_N
\end{bmatrix}.
\label{eq:maskinput}
\end{IEEEeqnarray}

In general, for the considered secure aggregation problem, the constructed coding framework is fully described by the encoding matrices 
$\{\mathbf{E}_{m,n}\}_{m\in[N],n\in[N]}$ and $\{\mathbf{P}_{m,n},\mathbf{G}_{m,n}\}_{m\in[N]\backslash\{n\},n\in[N]}$. 
It can be directly proved that the performance of the coding framework matches the optimal rate tuple $\big(\frac{N (N-1)}{N - T},\frac{N (N-1)}{N - T},N\big)$ of the secure aggregation problem. Therefore, it suffices to focus on designing appropriate encoding matrices $\{\mathbf{E}_{m,n}\}_{m\in[N],n\in[N]}$ and $\{\mathbf{P}_{m,n},\mathbf{G}_{m,n}\}_{m\in[N]\backslash\{n\},n\in[N]}$ over a finite field of small size such that the coding framework satisfies both the correctness constraint in \eqref{model:correctness} and the security constraint in \eqref{model:security}. 

The following proposition characterizes the sufficient and necessary conditions on the encoding matrices under which the coding framework is achievable. 
For convenience, 
we define $\mathbf{E}$ as the global encoding matrix in \eqref{eq:maskinput}, 
given by
\begin{IEEEeqnarray*}{c} 
\mathbf{E}\triangleq
\begin{bmatrix}
\mathbf{E}_{1,1} & \mathbf{E}_{2,1} & \cdots & \mathbf{E}_{N,1} \\
\mathbf{E}_{1,2} & \mathbf{E}_{2,2} & \cdots & \mathbf{E}_{N,2} \\
\vdots & \vdots & \ddots & \vdots \\
\mathbf{E}_{1,N} & \mathbf{E}_{2,N} & \cdots & \mathbf{E}_{N,N}
\end{bmatrix}\in \mathbb{F}^{NL\times \frac{N(N-1)}{N-T}L}_q.
\end{IEEEeqnarray*}
\begin{Proposition}[Sufficient and Necessary Conditions for Encoding Matrices]
\label{pro:condition}
For the secure aggregation problem with system parameters $N>1$ and $0\leq T<N-1$, the constructed coding framework is an achievable scheme that achieves the optimal rate tuple $\big(\frac{N (N-1)}{N - T},\frac{N (N-1)}{N - T},N\big)$ 
\emph{if} the encoding matrices 
$\{\mathbf{E}_{m,n}\}_{m\in[N],n\in[N]}$ and $\{\mathbf{P}_{m,n},\mathbf{G}_{m,n}\}_{m\in[N]\backslash\{n\},n\in[N]}$, subject to the structural constraint
\begin{IEEEeqnarray}{c}
\mathbf{E}_{m,n}=\mathbf{P}_{m,n}\mathbf{G}_{m,n}, \quad\forall\,n\in[N],m\in[N]\backslash\{n\}, \label{eq:structure:cons}
\end{IEEEeqnarray}
satisfy the following two conditions over some finite field $\mathbb{F}_q$. Here, the matrices $\mathbf{E}_{m,n},\mathbf{P}_{m,n}$, and $\mathbf{G}_{m,n}$ have dimensions $L\times \frac{N-1}{N-T}L$, $L\times \frac{L}{N-T}$, and
$\frac{L}{N-T}\times\frac{N-1}{N-T}L$, respectively.
\begin{itemize}
\item 
The sum of the matrices $\{\mathbf{E}_{m,n}\}_{n\in[N]}$ equals the zero matrix for all $m\in[N]$, i.e.,
\begin{IEEEeqnarray}{c}
\sum_{n=1}^N \mathbf{E}_{m,n}  = \mathbf{0}_{L\times \frac{N-1}{N-T}L},\quad\forall\, m\in[N].\label{eq:zerosum-constraint}
\end{IEEEeqnarray}
This condition ensures that the coding framework satisfies the correctness constraint in \eqref{model:correctness}.
\item For any subset of colluding users $\mathcal{T}\subseteq[N]$ with size $|\mathcal{T}|\leq T$ and its complement $\mathcal{T}^c=[N]\backslash\mathcal{T}$, 
there exists some subset $\widetilde{\mathcal{T}}^c\subseteq\mathcal{T}^c$ with size $N-|\mathcal{T}|-1$, such that the matrix $\mathbf{E}_{\widetilde{\mathcal{T}}^c,\mathcal{T}^c}$ of dimensions $(N-|\mathcal{T}|-1)L\times\frac{(N-|\mathcal{T}|)(N-1)}{N-T}L$ and the matrix $\mathrm{diag}(\mathbf{G}_{\mathcal{T}^c,\mathcal{T}})$ of dimensions $\frac{|\mathcal{T}|(N-|\mathcal{T}|)}{N-T}L\times\frac{(N-|\mathcal{T}|)(N-1)}{N-T}L$ satisfy the condition:
\begin{IEEEeqnarray}{c}
\renewcommand{\arraystretch}{0.5}
\operatorname{rank}\left(
\bigg[
  \begin{array}{@{}c@{}}
\mathbf{E}_{\widetilde{\mathcal{T}}^c,\mathcal{T}^c} \\
\mathrm{diag}(\mathbf{G}_{\mathcal{T}^c,\mathcal{T}})\\
\end{array}
\bigg]
\right)
 - \operatorname{rank}\left(\mathrm{diag}(\mathbf{G}_{\mathcal{T}^c,\mathcal{T}})\right) =(N-|\mathcal{T}|-1) L. \label{eq:rankcondition}
\end{IEEEeqnarray}
This condition ensures that the coding framework satisfies the security constraint in \eqref{model:security}.
\end{itemize}
In particular, when all inputs $W_1,W_2,\ldots,W_N$ are independently and uniformly distributed over the finite field $\mathbb{F}_q$, the above two conditions on the encoding matrices are \emph{necessary} for the achievability of the coding framework. 
\end{Proposition}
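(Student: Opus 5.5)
The rate claims follow from the dimensions alone. Each $Z_n$ has $\frac{N-1}{N-T}L$ uniform symbols, so $R_Z=\frac{N(N-1)}{N-T}$. Each $Z_{n,m}=\mathbf{G}_{n,m}Z_n$ has at most $\frac{L}{N-T}$ symbols and there are $N(N-1)$ of them, so $R_K\le\frac{N(N-1)}{N-T}$. Each $X_n$ has length $L$, so $R_A\le N$. The work is therefore to show that \eqref{eq:zerosum-constraint} and \eqref{eq:rankcondition} are equivalent to correctness and security: sufficiency in general, necessity when the inputs are uniform.

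\emph{Correctness.} By \eqref{eq:maskinput},
\[
\sum_n X_n = W_{sum}+\sum_m\Big(\sum_n \mathbf{E}_{m,n}\Big)Z_m ,
\]
so \eqref{eq:zerosum-constraint} lets the server decode $W_{sum}$ by summing. For necessity I would use a coset argument, which needs the uniform inputs to have full support on $\mathbb{F}_q^{NL}$. Write $W$ for the stacked inputs and $Z$ for the stacked keys.
\begin{itemize}
\item Fix any $w$ and any key realization $z$, and set $w'=w+\mathbf{E}z$, which is also a valid input.
\item Then $X(w,z)=X(w',0)$, and correctness forces $W_{sum}$ to be a function of $X$.
\item Hence the block sums of $w$ and $w'$ agree, so the block sum of $\mathbf{E}z$ is zero for every $z$.
\item This is exactly $\sum_n\mathbf{E}_{m,n}=\mathbf{0}$ for every $m$.
\end{itemize}

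\emph{Security.} Fix $\mathcal{T}$. The colluders' view, given $W_{\mathcal{T}}$, consists of three parts: $Z_{\mathcal{T}}$; the keys $\mathbf{G}_{m,n}Z_m$ for $m\in\mathcal{T}^c$, $n\in\mathcal{T}$, which are stacked as $\mathrm{diag}(\mathbf{G}_{\mathcal{T}^c,\mathcal{T}})Z_{\mathcal{T}^c}$; and all $X_n$. The reductions are as follows.
\begin{itemize}
\item The $X_n$ with $n\in\mathcal{T}$ are functions of $W_{\mathcal{T}}$, $Z_{\mathcal{T}}$ and the received keys, since $\mathbf{E}_{m,n}=\mathbf{P}_{m,n}\mathbf{G}_{m,n}$.
\item In each $X_n$ with $n\in\mathcal{T}^c$, the $Z_{\mathcal{T}}$ terms can be subtracted.
\item By zero-sum and the structural constraint, $\sum_{n\in\mathcal{T}^c}\mathbf{E}_{m,n}Z_m=-\sum_{n\in\mathcal{T}}\mathbf{P}_{m,n}\mathbf{G}_{m,n}Z_m$, which the colluders know. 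So the sum of the $X_n$ over $\mathcal{T}^c$ carries only $W_{sum}$ given the rest.
\item Consequently, any single user of $\mathcal{T}^c$ can be dropped, leaving $\widetilde{\mathcal{T}}^c$.
\end{itemize}
Since $Z_{\mathcal{T}}$ is independent of everything else, the leakage becomes
\[
I=H\big(X_{\widetilde{\mathcal{T}}^c},\mathrm{diag}(\mathbf{G})Z_{\mathcal{T}^c}\,\big|\,W_{sum},W_{\mathcal{T}}\big)-H\big(\mathbf{E}_{\widetilde{\mathcal{T}}^c,\mathcal{T}^c}Z_{\mathcal{T}^c},\mathrm{diag}(\mathbf{G})Z_{\mathcal{T}^c}\big).
\]
The second term equals the rank of the stacked matrix in \eqref{eq:rankcondition}, because $Z_{\mathcal{T}^c}$ is uniform. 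The first term is at most $(N-|\mathcal{T}|-1)L+\operatorname{rank}(\mathrm{diag}(\mathbf{G}_{\mathcal{T}^c,\mathcal{T}}))$. Therefore \eqref{eq:rankcondition} forces $I\le0$, which gives sufficiency.

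For necessity with uniform inputs, the first term equals that upper bound exactly. This holds because $W_{\widetilde{\mathcal{T}}^c}$ remains uniform and independent of the keys given $W_{sum},W_{\mathcal{T}}$, since one free user is left over. Hence $I=0$ forces the rank difference to be $(N-|\mathcal{T}|-1)L$, and this holds for any choice of the dropped user, in particular for some. Perfect secrecy of the linear one-time pad requires the masks to be uniform on the full $(N-|\mathcal{T}|-1)L$ dimensions conditioned on the observed keys, which is what this rank equality expresses.

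The main obstacle is the security step: tracking the conditioning carefully. One must justify that dropping one user of $\mathcal{T}^c$ loses nothing, and that the unconditional entropy computation commutes with conditioning on $W_{sum},W_{\mathcal{T}}$. I would handle this by writing the chain rule explicitly, $I=H(\text{view}\mid W_{sum},W_{\mathcal{T}})-H(\text{view}\mid W_{[N]})$, and checking each reduction as an equality of conditional entropies.
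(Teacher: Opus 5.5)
Your proposal is correct and, on the security side, follows the paper's appendix essentially step for step: it removes $\{Z_n\}_{n\in\mathcal{T}}$ and $\{X_n\}_{n\in\mathcal{T}}$ from the view in the same way, eliminates one non-colluding user via $\sum_{n\in\mathcal{T}^c}X_n=W_{sum}-\sum_{n\in\mathcal{T}}X_n$ (using the zero-sum and structural constraints), evaluates the key-only term as the rank of the stacked matrix in \eqref{eq:rankcondition}, and notes that the remaining bound is tight for uniform inputs because one free input is left over; keeping the leakage as an exact identity and bounding only its first term, rather than dropping the conditioning on $W_{sum},\{W_n\}_{n\in\mathcal{T}}$ early as the paper does, is a cosmetic difference. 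The one genuinely different step is the necessity of \eqref{eq:zerosum-constraint}: your coset argument $X(w,z)=X(w+\mathbf{E}z,\mathbf{0})$ is valid (uniform keys make $z=\mathbf{0}$ a positive-probability realization) and needs only full support of the inputs, whereas the paper computes $H(W_{sum}\mid\{X_n\}_{n\in[N]})$ exactly as the rank of $\sum_{n\in[N]}\big[\mathbf{E}_{1,n},\ldots,\mathbf{E}_{N,n}\big]$, which also quantifies how much of the aggregate is lost when the condition fails.
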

\begin{IEEEproof}
The condition in \eqref{eq:zerosum-constraint} ensures that the proposed coding framework satisfies the correctness constraint in \eqref{model:correctness}, while the condition in \eqref{eq:rankcondition} further guarantees that the security constraint in \eqref{model:security} is also satisfied. Therefore, the conditions in \eqref{eq:zerosum-constraint} and \eqref{eq:rankcondition} are sufficient for the achievability of the proposed capacity-achieving coding framework.
Furthermore, when all inputs $W_1,W_2,\ldots,W_N$ are independently and uniformly distributed, the conditions in \eqref{eq:zerosum-constraint} and \eqref{eq:rankcondition} are jointly equivalent to the correctness constraint in \eqref{model:correctness} and the security constraint in \eqref{model:security}, and are thus also necessary for the achievability of the coding framework.
The detailed proof is deferred to the appendix.
\end{IEEEproof}


\subsection{Explicit Construction of Encoding Matrices for the Coding Framework}
According to Proposition \ref{pro:condition}, to design a secure aggregation scheme that achieves the optimal rate tuple, it suffices to construct the encoding matrices 
$\{\mathbf{E}_{m,n}\}_{m\in[N],n\in[N]}$ and $\{\mathbf{P}_{m,n},\mathbf{G}_{m,n}\}_{m\in[N]\backslash\{n\},n\in[N]}$ that satisfy the conditions \eqref{eq:structure:cons}--\eqref{eq:rankcondition}. 
In this subsection, we present an explicit construction of these encoding matrices 
over any finite field $\mathbb{F}_q$ with size $q\geq N$, and show that the proposed construction satisfies all the required constraints and gives rise to a capacity-achieving secure aggregation scheme that can be implemented using a pairwise key-distribution structure.



Before presenting the general construction of the encoding matrices, we illustrate the underlying ideas behind the construction through one concrete example.
\begin{Example}\label{example:1}
\label{ex:uncoded}
Consider the secure aggregation problem with system parameters $N=5,T=2$, and $L=3$, where the server aggregates the local inputs of the $N=5$ users, each of length $L=3$, while keeping the input of each individual user private from any collusion between the server and up to $T=2$ other users. We wish to construct the encoding matrices 
$\{\mathbf{E}_{m,n}\in\mathbb{F}_q^{3\times4}\}_{m\in[5],n\in[5]}$ and $\{\mathbf{P}_{m,n}\in\mathbb{F}_q^{3\times1},\mathbf{G}_{m,n}\in\mathbb{F}_q^{1\times 4}\}_{m\in[5]\backslash\{n\},n\in[5]}$ that satisfy the conditions in \eqref{eq:structure:cons}--\eqref{eq:rankcondition} over any finite field $\mathbb{F}_q$ with size $q\geq5$. 


For each $n\in[5]$, user $n$ locally generates $4$ linearly independent row vectors $\{\mathbf{g}_{n,m}\}_{m\in[5]\backslash\{n\}}$ of length $4$ over $\mathbb{F}_q$, and then constructs the encoding matrices $\{\mathbf{G}_{n,m}\}_{m\in[5]\backslash\{n\}}$ as follows:
\begin{IEEEeqnarray*}{rCl}
\text{User 1:} \quad && \mathbf{G}_{1,2} = \mathbf{g}_{1,2}, \ \mathbf{G}_{1,3} = \mathbf{g}_{1,3}, \ \mathbf{G}_{1,4} = \mathbf{g}_{1,4}, \ \mathbf{G}_{1,5} = \mathbf{g}_{1,5}; \quad \ \  \\
\text{User 2:} \quad && \mathbf{G}_{2,1} = \mathbf{g}_{2,1}, \ \mathbf{G}_{2,3} = \mathbf{g}_{2,3}, \ \mathbf{G}_{2,4} = \mathbf{g}_{2,4}, \ \mathbf{G}_{2,5} = \mathbf{g}_{2,5}; \quad \ \  \\
\text{User 3:} \quad && \mathbf{G}_{3,1} = \mathbf{g}_{3,1}, \ \mathbf{G}_{3,2} = \mathbf{g}_{3,2}, \ \mathbf{G}_{3,4} = \mathbf{g}_{3,4}, \ \mathbf{G}_{3,5} = \mathbf{g}_{3,5}; \quad \ \  \\
\text{User 4:} \quad && \mathbf{G}_{4,1} = \mathbf{g}_{4,1}, \ \mathbf{G}_{4,2} = \mathbf{g}_{4,2}, \ \mathbf{G}_{4,3} = \mathbf{g}_{4,3}, \ \mathbf{G}_{4,5} = \mathbf{g}_{4,5}; \quad \ \  \\
\text{User 5:} \quad && \mathbf{G}_{5,1} = \mathbf{g}_{5,1}, \ \mathbf{G}_{5,2} = \mathbf{g}_{5,2}, \ \mathbf{G}_{5,3} = \mathbf{g}_{5,3}, \ \mathbf{G}_{5,4} = \mathbf{g}_{5,4}. 
\end{IEEEeqnarray*}

Let $\alpha_1,\alpha_2,\alpha_3,\alpha_4,\alpha_5$ be $5$ pairwise distinct elements in the finite field $\mathbb{F}_q$. Then, for each $n\in[5]$ and $m\in[5]\backslash\{n\}$, user $n$ locally constructs the encoding matrix $\mathbf{P}_{m,n}$ as
\begin{IEEEeqnarray*}{c}
\mathbf{P}_{m,n} = \begin{bmatrix} 1 \\ \alpha_m \\ \alpha_m^2  \end{bmatrix},\quad \forall\,n\in[5], m\in[5]\backslash\{n\}.
\end{IEEEeqnarray*} 
Accordingly, the encoding matrix
$\mathbf{E}_{m,n}$ is computed as
\begin{IEEEeqnarray*}{c}
\mathbf{E}_{m,n}=\mathbf{P}_{m,n}\mathbf{G}_{m,n} = \begin{bmatrix} \mathbf{g}_{m,n} \\ \alpha_m\mathbf{g}_{m,n} \\ \alpha_m^2\mathbf{g}_{m,n} \end{bmatrix},\quad \forall\,n\in[5], m\in[5]\backslash\{n\}.
\end{IEEEeqnarray*} 
For any $n\in[5]$, user $n$ can locally construct the encoding matrix $\mathbf{E}_{n,n}$ as
\begin{IEEEeqnarray*}{c}
 \mathbf{E}_{n,n}=
-\begin{bmatrix}
    \sum_{m\in[5]\backslash\{n\}}\mathbf{g}_{n,m}\\
      \alpha_n\sum_{m\in[5]\backslash\{n\}}\mathbf{g}_{n,m}\\
    \alpha_n^2\sum_{m\in[5]\backslash\{n\}}\mathbf{g}_{n,m}\\
\end{bmatrix}
\end{IEEEeqnarray*}


We have completed the construction of the encoding matrices over any finite field $\mathbb{F}_q$ with size $q\geq5$, where the global encoding matrix $\mathbf{E}$ of dimensions $15\times 20$ is presented in \eqref{example:1:constructencodematrix}. One can easily verify that the constructed encoding matrices satisfy the conditions in \eqref{eq:structure:cons} and \eqref{eq:zerosum-constraint}.

\begin{figure*}[htbp]
\hrulefill
\begin{IEEEeqnarray}{c}
\renewcommand{\arraystretch}{0.7} 
\mathbf{E}
\!=\!\left[
\setlength{\dashlinedash}{1.5pt}
\setlength{\dashlinegap}{1pt}
\begin{array}{@{}c@{\;} : c@{\;} : c@{\;} : c@{\;} : c@{}}
- \!\!\!\sum\limits_{m\in[5]\setminus\{1\}}\!\!\! \mathbf{g}_{1,m} 
& \mathbf{g}_{2,1} & \mathbf{g}_{3,1} & \mathbf{g}_{4,1} & \mathbf{g}_{5,1} \\
-\alpha_1 \!\!\!\sum\limits_{m\in[5]\setminus\{1\}}\!\!\! \mathbf{g}_{1,m} 
& \alpha_2 \mathbf{g}_{2,1} & \alpha_3 \mathbf{g}_{3,1} & \alpha_4 \mathbf{g}_{4,1} & \alpha_5 \mathbf{g}_{5,1} \\
-\alpha_1^2 \!\!\!\sum\limits_{m\in[5]\setminus\{1\}}\!\!\! \mathbf{g}_{1,m} 
& \alpha_2^2 \mathbf{g}_{2,1} & \alpha_3^2 \mathbf{g}_{3,1} & \alpha_4^2 \mathbf{g}_{4,1} & \alpha_5^2 \mathbf{g}_{5,1} \\
\hdashline

\mathbf{g}_{1,2} 
& - \!\!\!\sum\limits_{m\in[5]\setminus\{2\}}\!\!\! \mathbf{g}_{2,m} 
& \mathbf{g}_{3,2} & \mathbf{g}_{4,2} & \mathbf{g}_{5,2} \\
\alpha_1 \mathbf{g}_{1,2} 
& -\alpha_2 \!\!\!\sum\limits_{m\in[5]\setminus\{2\}}\!\!\! \mathbf{g}_{2,m} 
& \alpha_3 \mathbf{g}_{3,2} & \alpha_4 \mathbf{g}_{4,2} & \alpha_5 \mathbf{g}_{5,2} \\
\alpha_1^2 \mathbf{g}_{1,2} 
& -\alpha_2^2 \!\!\!\sum\limits_{m\in[5]\setminus\{2\}}\!\!\! \mathbf{g}_{2,m} 
& \alpha_3^2 \mathbf{g}_{3,2} & \alpha_4^2 \mathbf{g}_{4,2} & \alpha_5^2 \mathbf{g}_{5,2} \\
\hdashline

\mathbf{g}_{1,3} & \mathbf{g}_{2,3} 
& - \!\!\!\sum\limits_{m\in[5]\setminus\{3\}}\!\!\! \mathbf{g}_{3,m} 
& \mathbf{g}_{4,3} & \mathbf{g}_{5,3} \\
\alpha_1 \mathbf{g}_{1,3} & \alpha_2 \mathbf{g}_{2,3} 
& -\alpha_3 \!\!\!\sum\limits_{m\in[5]\setminus\{3\}}\!\!\! \mathbf{g}_{3,m} 
& \alpha_4 \mathbf{g}_{4,3} & \alpha_5 \mathbf{g}_{5,3} \\
\alpha_1^2 \mathbf{g}_{1,3} & \alpha_2^2 \mathbf{g}_{2,3} 
& -\alpha_3^2 \!\!\!\sum\limits_{m\in[5]\setminus\{3\}}\!\!\! \mathbf{g}_{3,m} 
& \alpha_4^2 \mathbf{g}_{4,3} & \alpha_5^2 \mathbf{g}_{5,3} \\
\hdashline

\mathbf{g}_{1,4} & \mathbf{g}_{2,4} & \mathbf{g}_{3,4} 
& - \!\!\!\sum\limits_{m\in[5]\setminus\{4\}}\!\!\! \mathbf{g}_{4,m} 
& \mathbf{g}_{5,4} \\
\alpha_1 \mathbf{g}_{1,4} & \alpha_2 \mathbf{g}_{2,4} & \alpha_3 \mathbf{g}_{3,4} 
& -\alpha_4 \!\!\!\sum\limits_{m\in[5]\setminus\{4\}}\!\!\! \mathbf{g}_{4,m} 
& \alpha_5 \mathbf{g}_{5,4} \\
\alpha_1^2 \mathbf{g}_{1,4} & \alpha_2^2 \mathbf{g}_{2,4} & \alpha_3^2 \mathbf{g}_{3,4} 
& -\alpha_4^2 \!\!\!\sum\limits_{m\in[5]\setminus\{4\}}\!\!\! \mathbf{g}_{4,m} 
& \alpha_5^2 \mathbf{g}_{5,4} \\
\hdashline

\mathbf{g}_{1,5} & \mathbf{g}_{2,5} & \mathbf{g}_{3,5} & \mathbf{g}_{4,5} 
& - \!\!\!\sum\limits_{m\in[5]\setminus\{5\}}\!\!\! \mathbf{g}_{5,m} \\
\alpha_1 \mathbf{g}_{1,5} & \alpha_2 \mathbf{g}_{2,5} & \alpha_3 \mathbf{g}_{3,5} & \alpha_4 \mathbf{g}_{4,5} 
& -\alpha_5 \!\!\!\sum\limits_{m\in[5]\setminus\{5\}}\!\!\! \mathbf{g}_{5,m} \\
\alpha_1^2 \mathbf{g}_{1,5} & \alpha_2^2 \mathbf{g}_{2,5} & \alpha_3^2 \mathbf{g}_{3,5} & \alpha_4^2 \mathbf{g}_{4,5} 
& -\alpha_5^2 \!\!\!\sum\limits_{m\in[5]\setminus\{5\}}\!\!\! \mathbf{g}_{5,m} \\
\end{array}
\right]. \IEEEeqnarraynumspace
\label{example:1:constructencodematrix} 
\end{IEEEeqnarray}
\hrulefill
\end{figure*}

\begin{figure*}[htbp]
\renewcommand{\arraystretch}{0.70}
\begin{IEEEeqnarray}{rCl}
\begin{bmatrix}
\mathbf{E}_{\{2,3\},\{1,2,3\}} \\ \mathrm{diag}(\mathbf{G}_{\{1,2,3\},\{4,5\}}) \\ \end{bmatrix}
\!&=&\!\left[
\setlength{\dashlinedash}{1.5pt}
\setlength{\dashlinegap}{1pt}
\begin{array}{@{}c@{}:@{}c@{}:@{}c@{}}
\mathbf{g}_{1,2} 
& -\!\sum_{m\in[5]\setminus\{2\}}\!\mathbf{g}_{2,m} 
& \mathbf{g}_{3,2}  \\
\alpha_1\mathbf{g}_{1,2} 
& -\alpha_2\!\sum_{m\in[5]\setminus\{2\}}\!\mathbf{g}_{2,m} 
& \alpha_3\mathbf{g}_{3,2}  \\
\alpha_1^2\mathbf{g}_{1,2} 
& -\alpha_2^2\!\sum_{m\in[5]\setminus\{2\}}\!\mathbf{g}_{2,m} 
& \alpha_3^2\mathbf{g}_{3,2} \\ 
\hdashline

\mathbf{g}_{1,3} 
& \mathbf{g}_{2,3} 
& -\!\sum_{m\in[5]\setminus\{3\}}\!\mathbf{g}_{3,m}  \\
\alpha_1\mathbf{g}_{1,3} 
& \alpha_2\mathbf{g}_{2,3} 
& -\alpha_3\!\sum_{m\in[5]\setminus\{3\}}\!\mathbf{g}_{3,m} \\
\alpha_1^2\mathbf{g}_{1,3} 
& \alpha_2^2\mathbf{g}_{2,3} 
& -\alpha_3^2\!\sum_{m\in[5]\setminus\{3\}}\!\mathbf{g}_{3,m}  \\
\noalign{\hrule height 1pt}

\multicolumn{1}{@{}c@{}:}{\rule{0pt}{2.6ex}\mathbf{g}_{1,4}} &
\multicolumn{1}{c}{}& \\ 

\cdashline{1-2}
& \multicolumn{1}{c:}{\mathbf{g}_{2,4}} & \\  

\cdashline{2-3} 
\multicolumn{1}{c}{}& & \mathbf{g}_{3,4} \\ 
\cdashline{3-3}[1.5pt/1pt]

\cdashline{1-1}
\multicolumn{1}{c:}{\mathbf{g}_{1,5}} &
\multicolumn{1}{c}{} & \\

\cdashline{1-2}[1.5pt/1pt]
& \multicolumn{1}{c:}{\mathbf{g}_{2,5}} & \\

\cdashline{2-3}[1.5pt/1pt]
\multicolumn{1}{c}{}& & \mathbf{g}_{3,5} \\
\end{array}
\right]
\label{eq:matrixexmaple1} \\ 
\!\!&=&\!\!
\arraycolsep=0.9pt
\begin{bNiceMatrix}[columns-width=0.2mm]
\Block[borders={bottom,right,tikz={dash pattern=on 1.5pt off 1pt}}]{3-3}{}
1 & 1 & 1 & & & & & & & & & \\
\alpha_1 & \alpha_2 & \alpha_3 & & & & & & & & & \\
\alpha_1^2 & \alpha_2^2 & \alpha_3^2 & & & & & & & & & \\

& & &
\Block[borders={top,bottom,left,right,tikz={dash pattern=on 1.5pt off 1pt}}]{3-3}{}
1 & 1 & 1 & & & & & & \\
& & &
\alpha_1 & \alpha_2 & \alpha_3 & & & & & & \\
& & &
\alpha_1^2 & \alpha_2^2 & \alpha_3^2 & & & & & & \\

& & & & & &
\Block[borders={top,bottom,left,right,tikz={dash pattern=on 1.5pt off 1pt}}]{3-3}{}
\hbox to 1em{\hfil$1$\hfil} &
\hbox to 1em{\hfil$0$\hfil} &
\hbox to 1em{\hfil$0$\hfil} & & & \\

& & & & & &
\hbox to 1em{\hfil$0$\hfil} &
\hbox to 1em{\hfil$1$\hfil} &
\hbox to 1em{\hfil$0$\hfil} & & & \\

& & & & & &
\hbox to 1em{\hfil$0$\hfil} &
\hbox to 1em{\hfil$0$\hfil} &
\hbox to 1em{\hfil$1$\hfil} & & & \\

& & & & & & & & &
\Block[borders={top,left,tikz={dash pattern=on 1.5pt off 1pt}}]{3-3}{}
\hbox to 1em{\hfil$1$\hfil} &
\hbox to 1em{\hfil$0$\hfil} &
\hbox to 1em{\hfil$0$\hfil} \\

& & & & & & & & &
\hbox to 1em{\hfil$0$\hfil} &
\hbox to 1em{\hfil$1$\hfil} &
\hbox to 1em{\hfil$0$\hfil} \\

& & & & & & & & &
\hbox to 1em{\hfil$0$\hfil} &
\hbox to 1em{\hfil$0$\hfil} &
\hbox to 1em{\hfil$1$\hfil}
\end{bNiceMatrix}
\setlength{\arraycolsep}{0.2pt}
\begin{bNiceMatrix}[columns-width=8mm]
\Block[borders={bottom,right,tikz={dash pattern=on 1.5pt off 1pt}}]{1-1}{\mathbf{g}_{1,2}}
& & \\

&
\Block[borders={top,bottom,left,right,tikz={dash pattern=on 1.5pt off 1pt}}]{1-1}{
-\!\sum_{m\in[5]\setminus\{2\}}\!\mathbf{g}_{2,m}}
& \\

& &
\Block[borders={top,bottom,left,tikz={dash pattern=on 1.5pt off 1pt}}]{1-1}{\mathbf{g}_{3,2}} \\

\Block[borders={top,bottom,right,tikz={dash pattern=on 1.5pt off 1pt}}]{1-1}{\mathbf{g}_{1,3}}
& & \\

&
\Block[borders={top,bottom,left,right,tikz={dash pattern=on 1.5pt off 1pt}}]{1-1}{\mathbf{g}_{2,3}}
& \\

& &
\Block[borders={top,bottom,left,tikz={dash pattern=on 1.5pt off 1pt}}]{1-1}{-\!\sum_{m\in[5]\setminus\{3\}}\!\mathbf{g}_{3,m}} \\

\Block[borders={top,bottom,right,tikz={dash pattern=on 1.5pt off 1pt}}]{1-1}{\mathbf{g}_{1,4}}
& & \\

&
\Block[borders={top,bottom,left,right,tikz={dash pattern=on 1.5pt off 1pt}}]{1-1}{\mathbf{g}_{2,4}}
& \\

& &
\Block[borders={top,bottom,left,tikz={dash pattern=on 1.5pt off 1pt}}]{1-1}{\mathbf{g}_{3,4}} \\

\Block[borders={top,bottom,right,tikz={dash pattern=on 1.5pt off 1pt}}]{1-1}{\mathbf{g}_{1,5}}
& & \\

&
\Block[borders={top,bottom,left,right,tikz={dash pattern=on 1.5pt off 1pt}}]{1-1}{\mathbf{g}_{2,5}}
& \\

& &
\Block[borders={top,left,tikz={dash pattern=on 1.5pt off 1pt}}]{1-1}{\mathbf{g}_{3,5}}
\end{bNiceMatrix}.
\IEEEeqnarraynumspace \label{eq:matrixexmaple2}
\end{IEEEeqnarray}
\hrulefill
\end{figure*}


Next, we verify that the condition in \eqref{eq:rankcondition} is satisfied. 
Without loss of generality, assume that users $4$ and $5$ collude with the server.
Let $\mathcal{T}\!=\!\{4,5\}$ and $\mathcal{T}^c\!=\!\{1,2,3\}$, and choose $\widetilde{\mathcal{T}}^c\!=\!\{2,3\}$. Then, the matrix $\mathbf{E}_{\{2,3\},\{1,2,3\}}$ of dimensions $6\times 12$ and the block-diagonal matrix $\mathrm{diag}(\mathbf{G}_{\{1,2,3\},\{4,5\}})$ of dimensions $6\times 12$ have the form given in \eqref{eq:matrixexmaple1}. Since the row vectors $\{\mathbf{g}_{n,m}\}_{m\in[5]\backslash\{n\}}$ are linearly independent for each $n\in[5]$ and the elements $\{\alpha_1,\alpha_2,\alpha_3,\alpha_4,\alpha_5\}$ are pairwise distinct over $\mathbb{F}_q$, it can be checked that the stacked matrix
$\Bigg[\begin{array}{@{\,}c@{\,}} \mathbf{E}_{\{2,3\},\{1,2,3\}} \\ \mathrm{diag}(\mathbf{G}_{\{1,2,3\},\{4,5\}}) \\ \end{array}\Bigg]$
of dimensions $12\times 12$ has full rank by \eqref{eq:matrixexmaple2}.
Moreover, the  submatrix $\mathrm{diag}(\mathbf{G}_{\{1,2,3\},\{4,5\}})$ of the stacked matrix has full row rank. Therefore, we have
\begin{IEEEeqnarray*}{c}
\operatorname{rank}\left(\begin{bmatrix}
\mathbf{E}_{\{2,3\},\{1,2,3\}} \\ \mathrm{diag}(\mathbf{G}_{\{1,2,3\},\{4,5\}}) \\ \end{bmatrix} \right)-
\operatorname{rank}\left(\mathrm{diag}(\mathbf{G}_{\{1,2,3\},\{4,5\}}) \right)=12-6=6,
\end{IEEEeqnarray*}
which satisfies the condition in \eqref{eq:rankcondition}. The remaining colluding cases can be established using similar arguments. Consequently, the constructed encoding matrices satisfy the conditions in \eqref{eq:structure:cons}--\eqref{eq:rankcondition} over any finite field $\mathbb{F}_q$ with size $q\geq5$.
\hfill \IEEEQED
\end{Example}

In the following, we present the general construction of the encoding matrices 
that satisfy the conditions \eqref{eq:structure:cons}--\eqref{eq:rankcondition} over any finite field $\mathbb{F}_q$ with size $q\geq N$.

Let $\mathbb{F}_q$ be an arbitrary finite field with size $q\geq N$.
For each $n\in[N]$, user $n$ locally generates a full-rank square matrix $\mathbf{G}_{n}$ of dimensions $(N-1)\times(N-1)$ over $\mathbb{F}_q$, e.g., the identity matrix $\mathbf{I}_{N-1}$, given by
\begin{IEEEeqnarray}{c}\label{definition:Gn}
\mathbf{G}_{n}=
\begin{bmatrix}
\mathbf{g}_{n,1}\\
\vdots\\
\mathbf{g}_{n,n-1}\\
\mathbf{g}_{n,n+1}\\
\vdots\\
\mathbf{g}_{n,N}
\end{bmatrix}
, \quad \forall\,n\in[N],
\end{IEEEeqnarray}
where $\{\mathbf{g}_{n,m}\}_{m\in[N]\backslash\{n\}}$ denote the $N-1$ row vectors of the matrix $\mathbf{G}_{n}$, each of length $N-1$, which are linearly independent over $\mathbb{F}_q$. Then, the encoding matrix $\mathbf{G}_{n,m}$ of dimensions $\frac{L}{N-T}\times\frac{N-1}{N-T}L$ is constructed as
\begin{IEEEeqnarray}{c}\label{definition:Gnm}
\mathbf{G}_{n,m}=\mathbf{g}_{n,m}\otimes\mathbf{I}_{\frac{L}{N-T}} ,        \quad\forall\, n\in[N], m\in[N]\backslash\{n\}.
\end{IEEEeqnarray}


Next, we describe the construction of the encoding matrices $\{\mathbf{P}_{m,n}\}_{m\in[N]\backslash\{n\},n\in[N]}$. Let $\alpha_1,\alpha_2,\ldots,\alpha_N$ be $N$ pairwise distinct elements in $\mathbb{F}_q$, which are publicly known to all users. Then, 
user $n$ can locally generate the encoding matrix $\mathbf{P}_{m,n}$ of dimensions $L\times\frac{L}{N-T}$ as
\begin{IEEEeqnarray}{c}
\mathbf{P}_{m,n}
=\begin{bmatrix} 
1\\
\alpha_m\\
\vdots\\
\alpha_m^{N-T-1}
\end{bmatrix}\otimes\mathbf{I}_{\frac{L}{N-T}}
, \quad\forall\, n\in[N], m\in[N]\backslash\{n\}. 
\end{IEEEeqnarray}
Accordingly, the encoding matrix $\mathbf{E}_{m,n}$ of dimensions $L\times \frac{N-1}{N-T}L$ is given by 
\begin{IEEEeqnarray}{c}\label{construction:E:mn}
\mathbf{E}_{m,n}=\mathbf{P}_{m,n}\mathbf{G}_{m,n}
=\begin{bmatrix}
\mathbf{g}_{m,n}\\
\alpha_m\mathbf{g}_{m,n}\\
\vdots\\
\alpha_m^{N-T-1}\mathbf{g}_{m,n}
\end{bmatrix}\otimes\mathbf{I}_{\frac{L}{N-T}}
, \quad\forall\, n\in[N], m\in[N]\backslash\{n\}.
\end{IEEEeqnarray}
To ensure that the constructed encoding matrices satisfy the prescribed condition in \eqref{eq:zerosum-constraint}, we design the encoding matrix $\mathbf{E}_{n,n}$ of dimensions $L\times \frac{N-1}{N-T}L$ as 
\begin{IEEEeqnarray}{c}\label{construction:E:nn}
\mathbf{E}_{n,n}=-\sum\limits_{m\in[N]\backslash\{n\}}\mathbf{E}_{n,m}
=-\begin{bmatrix} 
\sum_{m\in[N]\backslash\{n\}}\mathbf{g}_{n,m}\\
\alpha_n\sum_{m\in[N]\backslash\{n\}}\mathbf{g}_{n,m}\\
\vdots\\
\alpha_n^{N-T-1}\sum_{m\in[N]\backslash\{n\}}\mathbf{g}_{n,m}
\end{bmatrix}\otimes\mathbf{I}_{\frac{L}{N-T}}
, \quad\forall\,n\in[N].
\end{IEEEeqnarray}  
which is a deterministic function of the matrices $\{\mathbf{G}_{n,m}\}_{n\in[N],m\in[N]\backslash\{n\}}$ available at user $n$ and therefore can be locally generated by user $n$.
This implies that although our secure aggregation scheme involves a large number of encoding matrices $\{\mathbf{E}_{m,n}\}_{m\in[N],n\in[N]}$ and $\{\mathbf{P}_{m,n},\mathbf{G}_{m,n}\}_{m\in[N]\backslash\{n\},n\in[N]}$, each user only needs to store the $N$ encoding parameters $\{\alpha_n\}_{n\in[N]}$ in practice. All the required encoding matrices can be generated locally from these parameters, incurring no additional storage overhead.

Having completed the construction of $\{\mathbf{E}_{m,n}\}_{m\in[N],n\in[N]}$, the global encoding matrix $\mathbf{E}$ of dimensions $NL\times \frac{N(N-1)}{N-T}L$ can be written as
\begin{IEEEeqnarray}{c}
\mathbf{E}
\setlength{\dashlinedash}{1.5pt}
\setlength{\dashlinegap}{1pt}
=\left[
\begin{array}{@{}c:c:c:c@{}}
\begin{matrix} 
\alpha_1^{0}\mathbf{g}_{1,1} \\ \vdots \\ \alpha_1^{N-T-1}\mathbf{g}_{1,1}
\end{matrix} & 
\begin{matrix} 
\alpha_2^0\mathbf{g}_{2,1} \\ \vdots \\ \alpha_2^{N-T-1}\mathbf{g}_{2,1} 
\end{matrix} & 
\cdots & 
\begin{matrix} 
\alpha_N^0\mathbf{g}_{N,1} \\ \vdots \\ \alpha_N^{N-T-1}\mathbf{g}_{N,1}  
\end{matrix} \\ \hdashline
\begin{matrix} 
\alpha_1^0\mathbf{g}_{1,2} \\ \vdots \\ \alpha_1^{N-T-1}\mathbf{g}_{1,2} 
\end{matrix} & 
\begin{matrix} 
\alpha_2^{0}\mathbf{g}_{2,2} \\ \vdots \\ \alpha_2^{N-T-1}\mathbf{g}_{2,2}
\end{matrix} & 
\cdots & 
\begin{matrix} 
\alpha_N^0\mathbf{g}_{N,2} \\ \vdots \\ \alpha_N^{N-T-1}\mathbf{g}_{N,2} 
\end{matrix} \\ \hdashline
\vdots & \vdots & \cdots & \vdots \\ \hdashline
\begin{matrix} 
\alpha_1^0\mathbf{g}_{1,N} \\ \vdots \\ \alpha_1^{N-T-1}\mathbf{g}_{1,N}  
\end{matrix} & 
\begin{matrix} 
\alpha_2^0\mathbf{g}_{2,N} \\ \vdots \\ \alpha_2^{N-T-1}\mathbf{g}_{2,N} 
\end{matrix} & 
\cdots & 
\begin{matrix} 
\alpha_N^{0}\mathbf{g}_{N,N} \\ \vdots \\ \alpha_N^{N-T-1}\mathbf{g}_{N,N}
\end{matrix}
\end{array}
\right]\otimes\mathbf{I}_{\frac{L}{N-T}},\nonumber \IEEEeqnarraynumspace
\end{IEEEeqnarray}
where for simplicity, the row vector $\mathbf{g}_{n,n}$ of length $N-1$ is defined as
\begin{IEEEeqnarray}{c}\label{definition:gnn}
\mathbf{g}_{n,n}=-\sum\limits_{m\in[N]\backslash\{n\}}\mathbf{g}_{n,m},\quad\forall\, n\in[N].
\end{IEEEeqnarray}

In general, the above construction can be viewed as first designing encoding matrices for the base case $L=N-T$, and then extending the construction to arbitrary input length $L$ via the Kronecker product with the identity matrix $\mathbf{I}_{\frac{L}{N-T}}$ of dimensions $\frac{L}{N-T}\times \frac{L}{N-T}$. 
The verification that the constructed encoding matrices satisfy the conditions in \eqref{eq:structure:cons}--\eqref{eq:rankcondition} follows the same lifting principle: it suffices to establish the required conditions for the base construction with $L=N-T$, after which the general case follows directly from the Kronecker-product structure.
\begin{Lemma}\label{lemma:encoding:matrix}
For any system parameters $N>1$ and $0\leq T<N-1$, the  encoding matrices $\{\mathbf{E}_{m,n}\}_{m\in[N],n\in[N]}$ and $\{\mathbf{P}_{m,n},\mathbf{G}_{m,n}\}_{m\in[N]\backslash\{n\},n\in[N]}$ constructed in \eqref{definition:Gnm}--\eqref{construction:E:nn} satisfy the conditions in
\eqref{eq:structure:cons}--\eqref{eq:rankcondition} over any finite field $\mathbb{F}_q$ with size $q\geq N$.
\end{Lemma}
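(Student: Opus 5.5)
The structural constraint \eqref{eq:structure:cons} holds by the definition \eqref{construction:E:mn}, and the zero-sum condition \eqref{eq:zerosum-constraint} holds by the definition \eqref{construction:E:nn}. So the work is the rank condition \eqref{eq:rankcondition}. First I reduce to the base case $L=N-T$. Every block of $\mathbf{E}$ and of $\mathrm{diag}(\mathbf{G}_{\mathcal{T}^c,\mathcal{T}})$ is a base matrix Kronecker-multiplied by $\mathbf{I}_{\frac{L}{N-T}}$. After a permutation of rows and columns, the stacked matrix in \eqref{eq:rankcondition} is the base stacked matrix $\otimes\,\mathbf{I}_{\frac{L}{N-T}}$, so both ranks scale by $\frac{L}{N-T}$. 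It therefore suffices to treat $L=N-T$, where each $\mathbf{g}_{n,m}$ is a row of length $N-1$.

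Fix $\mathcal{T}$ with $t=|\mathcal{T}|\le T$ and put $s=N-t\ge N-T$. Pick any $n_0\in\mathcal{T}^c$ and set $\widetilde{\mathcal{T}}^c=\mathcal{T}^c\backslash\{n_0\}$.

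\emph{Rank of the diagonal part.} The rows of $\mathrm{diag}(\mathbf{G}_{\mathcal{T}^c,\mathcal{T}})$ in column block $n\in\mathcal{T}^c$ are $\{\mathbf{g}_{n,m}\}_{m\in\mathcal{T}}$. These are distinct rows of the invertible matrix $\mathbf{G}_n$, so the rank is $ts$.

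\emph{Reducing to a quotient.} The left side of \eqref{eq:rankcondition} equals the rank of the rows of $\mathbf{E}_{\widetilde{\mathcal{T}}^c,\mathcal{T}^c}$ modulo the row space of the diagonal part. That row space is the direct sum over $n\in\mathcal{T}^c$ of $\mathrm{span}\{\mathbf{g}_{n,m}\}_{m\in\mathcal{T}}$. Since $\mathbf{G}_n$ is invertible, the quotient of column block $n$ has basis $\{e_{n,m}\}_{m\in\mathcal{T}^c\backslash\{n\}}$, where $e_{n,m}$ is the image of $\mathbf{g}_{n,m}$. By \eqref{definition:gnn}, $\mathbf{g}_{n,n}\equiv-\sum_{m\in\mathcal{T}^c\backslash\{n\}}e_{n,m}$ in this quotient.

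Now look at the row of $\mathbf{E}_{k,\mathcal{T}^c}$ with power $j\in\{0,\dots,N-T-1\}$, for $k\in\widetilde{\mathcal{T}}^c$. In the quotient it becomes
\[
\sum_{n\in\mathcal{T}^c\backslash\{k\}}\alpha_n^j e_{n,k}-\alpha_k^j\sum_{m\in\mathcal{T}^c\backslash\{k\}}e_{k,m}.
\]
We must show these $(s-1)(N-T)$ vectors are linearly independent.

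\emph{Independence via polynomials.} Suppose $\sum_{k,j}\lambda_{k,j}(\cdot)=0$, and set $p_k(x)=\sum_j\lambda_{k,j}x^j$, a polynomial of degree at most $N-T-1$. Read off the coefficient of each basis vector.
\begin{itemize}
\item For $e_{n_0,k}$ with $k\in\widetilde{\mathcal{T}}^c$: no diagonal term from $n_0$ appears, because $n_0\notin\widetilde{\mathcal{T}}^c$. This gives $p_k(\alpha_{n_0})=0$.
\item For $e_{k,n_0}$ with $k\in\widetilde{\mathcal{T}}^c$: only the diagonal term contributes, giving $-p_k(\alpha_k)=0$.
\item For $e_{n,k}$ with $n\neq k$ both in $\widetilde{\mathcal{T}}^c$: this gives $p_k(\alpha_n)-p_n(\alpha_n)=0$. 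Combined with $p_n(\alpha_n)=0$ from the previous item, $p_k(\alpha_n)=0$.
\end{itemize}
Hence each $p_k$ vanishes at all $\alpha_n$ with $n\in\mathcal{T}^c$. These are $s\ge N-T$ distinct points, which is possible because $q\ge N$ allows pairwise distinct $\alpha$'s. So $p_k\equiv0$, all $\lambda_{k,j}=0$, and the rank difference is $(s-1)(N-T)$, as required in \eqref{eq:rankcondition}.

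The main obstacle is the quotient bookkeeping: choosing the basis $e_{n,m}$ so that the $\mathbf{g}_{n,n}$ entries collapse cleanly. The key choice is to drop exactly one user $n_0$. Its coordinates then supply the extra vanishing conditions $p_k(\alpha_{n_0})=0$ and $p_k(\alpha_k)=0$, which bring the number of roots of each $p_k$ up to $s\ge N-T$, one more than its degree bound.
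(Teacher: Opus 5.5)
Your proof is correct, but it organizes the core step differently from the paper. The outer structure is the same: a direct check of \eqref{eq:structure:cons} and \eqref{eq:zerosum-constraint}, and a Kronecker reduction to $L=N-T$. The underlying fact is also the same: a polynomial of degree at most $N-T-1$ that vanishes at the $s\ge N-T$ distinct points $\{\alpha_n\}_{n\in\mathcal{T}^c}$ is zero, i.e., the $(N-T)\times s$ Vandermonde matrix $\mathbf{P}$ has full row rank.

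The paper factors the stacked matrix as $\mathbf{D}_1\mathbf{D}_2$ in \eqref{encoding:proof:222}. Here $\mathbf{D}_2$ is square and invertible, and $\mathbf{D}_1$ is block diagonal with $N-|\mathcal{T}|-1$ copies of $\mathbf{P}$ and $|\mathcal{T}|$ identity blocks. In effect it writes column block $n_i$ in the basis $\{\mathbf{g}_{n_i,k}\}_{k\neq n_1}$, which for $i\ge 2$ contains $\mathbf{g}_{n_i,n_i}$ in place of $\mathbf{g}_{n_i,n_1}$. Invertibility of $\mathbf{D}_2$ then comes from the row operation turning $\mathbf{g}_{n_i,n_i}$ into $-\mathbf{g}_{n_i,n_1}$. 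In that basis the rows belonging to different $k\in\widetilde{\mathcal{T}}^c$ decouple, so full row rank of the whole stacked matrix is read off from $\operatorname{rank}(\mathbf{D}_1)$ with no elimination, and the rank difference follows by subtraction.

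You instead pass to the quotient by the row space of $\mathrm{diag}(\mathbf{G}_{\mathcal{T}^c,\mathcal{T}})$ and keep the natural basis, expanding $\mathbf{g}_{k,k}$. The rows are then coupled, and your two-step elimination undoes this: first $p_k(\alpha_k)=0$ from $e_{k,n_0}$, then $p_k(\alpha_n)=p_n(\alpha_n)=0$ from $e_{n,k}$. That elimination is exactly the coordinate form of the paper's change of basis.

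Your route computes the rank difference directly and makes clear that the deleted user $n_0$ is arbitrary, without constructing $\mathbf{D}_1$ and $\mathbf{D}_2$. The paper's route makes the block-diagonal Vandermonde structure, and hence the full row rank of the stacked matrix, visible at a glance.
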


\begin{IEEEproof}
From \eqref{construction:E:mn} and \eqref{construction:E:nn}, it is straightforward  to verify that the constructed encoding matrices satisfy the conditions in \eqref{eq:structure:cons} and \eqref{eq:zerosum-constraint}. We then proceed to show that the condition in \eqref{eq:rankcondition} is also satisfied.  


For any subset of colluding users $\mathcal{T}\subseteq[N]$ with size $|\mathcal{T}|\leq T$ and its complement $\mathcal{T}^c=[N]\backslash\mathcal{T}$ with size $|\mathcal{T}^c|=N-|\mathcal{T}|$, we index the users in $\mathcal{T}^c$ as $\{n_1,n_2,\ldots,n_{N-|\mathcal{T}|}\}$ and those in $\mathcal{T}$ as $\{n_{N-|\mathcal{T}|+1},n_{N-|\mathcal{T}|+2},\dots,n_{N}\}$.
Let $\widetilde{\mathcal{T}}^{c}=\{n_2,\dots,n_{N-|\mathcal{T}|}\}$, then the matrix $\mathbf{E}_{\widetilde{\mathcal{T}}^c,\mathcal{T}^c}$ of dimensions $(N-|\mathcal{T}|-1)L\times\frac{(N-|\mathcal{T}|)(N-1)}{N-T}L$ and the matrix $\mathrm{diag}(\mathbf{G}_{\mathcal{T}^c,\mathcal{T}})$ of dimensions $\frac{|\mathcal{T}|(N-|\mathcal{T}|)}{N-T}L\times\frac{(N-|\mathcal{T}|)(N-1)}{N-T}L$ have the following form:
\begin{IEEEeqnarray}{c}
\renewcommand{\arraystretch}{0.7} 
\left[
  \begin{array}{@{}c@{}}
\mathbf{E}_{\widetilde{\mathcal{T}}^c,\mathcal{T}^c} \\
\mathrm{diag}(\mathbf{G}_{\mathcal{T}^c,\mathcal{T}})\\
\end{array}
\right] =
\setlength{\dashlinedash}{1.5pt}
\setlength{\dashlinegap}{1pt}
\left[
\begin{array}{@{}c:c:c:c@{}}
\alpha_{n_1}^0\mathbf{g}_{n_1,n_2} & \alpha_{n_2}^{0}\mathbf{g}_{n_2,n_2} & \cdots & \alpha_{n_{N-|\mathcal{T}|}}^0\mathbf{g}_{n_{N-|\mathcal{T}|},n_2} \\
\vdots & \vdots & \vdots & \vdots \\
\alpha_{n_1}^{N-T-1}\mathbf{g}_{n_1,n_2} & \alpha_{n_2}^{N-T-1}\mathbf{g}_{n_2,n_2} & \cdots & \alpha_{n_{N-|\mathcal{T}|}}^{N-T-1}\mathbf{g}_{n_{N-|\mathcal{T}|},n_2} \\
\hdashline
\alpha_{n_1}^0\mathbf{g}_{n_1,n_3} & \alpha_{n_2}^{0}\mathbf{g}_{n_2,n_3} & \cdots & \alpha_{n_{N-|\mathcal{T}|}}^0\mathbf{g}_{n_{N-|\mathcal{T}|},n_3} \\
\vdots & \vdots & \vdots & \vdots \\
\alpha_{n_1}^{N-T-1}\mathbf{g}_{n_1,n_3} & \alpha_{n_2}^{N-T-1}\mathbf{g}_{n_2,n_3} & \cdots & \alpha_{n_{N-|\mathcal{T}|}}^{N-T-1}\mathbf{g}_{n_{N-|\mathcal{T}|},n_3} \\
\hdashline
\vdots & \vdots & \vdots & \vdots \\
\hdashline
\alpha_{n_1}^0\mathbf{g}_{n_1,n_{N-|\mathcal{T}|}} & \alpha_{n_2}^0\mathbf{g}_{n_2,n_{N-|\mathcal{T}|}} & \cdots & \alpha_{n_{N-|\mathcal{T}|}}^{0} \mathbf{g}_{n_{N-|\mathcal{T}|},n_{N-|\mathcal{T}|}} \\
\vdots & \vdots & \vdots & \vdots \\
\alpha_{n_1}^{N-T-1}\mathbf{g}_{n_1,n_{N-|\mathcal{T}|}} & \alpha_{n_2}^{N-T-1}\mathbf{g}_{n_2,n_{N-|\mathcal{T}|}} & \cdots & \alpha_{n_{N-|\mathcal{T}|}}^{N-T-1}\mathbf{g}_{n_{N-|\mathcal{T}|},n_{N-|\mathcal{T}|}} \rule[-2ex]{0pt}{0pt}\\
\noalign{\hrule height 1.0pt}

\multicolumn{1}{c}{\rule{0pt}{2.6ex}\mathbf{g}_{n_1,n_{N-|\mathcal{T}|+1}}} & \multicolumn{1}{c}{} & \multicolumn{1}{c}{}&  \\
 \multicolumn{1}{c}{}& \multicolumn{1}{c}{\mathbf{g}_{n_2,n_{N-|\mathcal{T}|+1}}} & \multicolumn{1}{c}{}&  \\
\multicolumn{1}{c}{}& \multicolumn{1}{c}{} & \multicolumn{1}{c}{\ddots} & \\
\multicolumn{1}{c}{}& \multicolumn{1}{c}{}&  \multicolumn{1}{c}{}& \mathbf{g}_{n_{N-|\mathcal{T}|},n_{N-|\mathcal{T}|+1}} \\
\hdashline
\multicolumn{1}{c}{\vdots} & \multicolumn{1}{c}{\vdots }&\multicolumn{1}{c}{\cdots } & \vdots \\
\hdashline

 \multicolumn{1}{c}{\mathbf{g}_{n_1,n_{N}}} & \multicolumn{1}{c}{} & \multicolumn{1}{c}{}&   \\
 \multicolumn{1}{c}{}& \multicolumn{1}{c}{\mathbf{g}_{n_2,n_{N}}} & \multicolumn{1}{c}{} &  \\
\multicolumn{1}{c}{} & \multicolumn{1}{c}{} & \multicolumn{1}{c}{\ddots}& \\
\multicolumn{1}{c}{} & \multicolumn{1}{c}{} & \multicolumn{1}{c}{} & \mathbf{g}_{n_{N-|\mathcal{T}|},n_{N}} \\
\end{array}
\right]
\otimes\mathbf{I}_{\frac{L}{N-T}}
, 
\label{general:cons:Rank:222} 
\IEEEeqnarraynumspace
\end{IEEEeqnarray}
which can be equivalently written as
\begin{IEEEeqnarray}{c}
\left[
  \begin{array}{@{}c@{}}
\mathbf{E}_{\widetilde{\mathcal{T}}^c,\mathcal{T}^c} \\
\mathrm{diag}(\mathbf{G}_{\mathcal{T}^c,\mathcal{T}})\\
\end{array}
\right]=\!\!\!\!\!\!\!\!\!\!
%
\begin{array}{cc}
\begin{array}{r}
    \text{\scriptsize $N\!-\!|\mathcal{T}|\!-\!1~$blocks}   \left\{ \vphantom{\begin{matrix} \mathbf{P} \\ \ddots \\ \mathbf{P} \end{matrix}} \right. \\
    \text{\scriptsize $|\mathcal{T}|$ blocks}    \left\{ \vphantom{\begin{matrix} \mathbf{I} \\ \ddots \\ \mathbf{I} \end{matrix}} \right.
\end{array}
\!\!\!\!\!\!
& 
\hspace{-1em} 
\underbrace{
\left[
\begin{array}{@{}c@{}c@{}c@{}c@{}c@{}c@{}}
\mathbf{P} & & & & &\\
& \ddots & & & &\\
&  & \mathbf{P}& & &\\
&  & & \mathbf{I}_{N-|\mathcal{T}|}& &\\
& &  & & \ddots &\\
& & & & &\mathbf{I}_{N-|\mathcal{T}|} \\
\end{array}
\right]
}_{\triangleq \mathbf{D}_1}
\end{array}
\!\!\!\!
\underbrace{\begin{bmatrix}
\mathrm{diag}(\mathbf{g}_{n_1,n_2}, \mathbf{g}_{n_2,n_2}, \cdots,\mathbf{g}_{n_{N-|\mathcal{T}|},n_2})\\
\mathrm{diag}(\mathbf{g}_{n_1,n_3},\mathbf{g}_{n_2,n_3},\cdots, \mathbf{g}_{n_{N-|\mathcal{T}|},n_3})\\
\vdots\\
\mathrm{diag}(\mathbf{g}_{n_1,n_{N}}, \mathbf{g}_{n_2,n_{N}}, \cdots,  \mathbf{g}_{n_{N-|\mathcal{T}|},n_{N}} )
\end{bmatrix}}_{\triangleq\mathbf{D}_2}\!\otimes\mathbf{I}_{\frac{L}{N-T}}, \IEEEeqnarraynumspace \label{encoding:proof:222}
\end{IEEEeqnarray} 
where $\mathbf{P}$ is a matrix of dimensions $(N-T)\times(N-|\mathcal{T}|)$, given by
\begin{IEEEeqnarray}{rCl}
\mathbf{P} 
\renewcommand{\arraystretch}{0.5} 
&=&
\left[
\begin{array}{@{}cccc@{}}
1 & 1 & \cdots & 1\\
\alpha_{n_1} & \alpha_{n_2} & \cdots & \alpha_{n_{N-|\mathcal{T}|}}\\
\vdots & \vdots & \cdots & \vdots\\
\alpha_{n_1}^{N-T-1} & \alpha_{n_2}^{N-T-1} & \cdots & \alpha_{n_{N-|\mathcal{T}|}}^{N-T-1}
\end{array}
\right].
\label{encoding:proof:111}
\end{IEEEeqnarray}

Since the elements $\alpha_{n_1},\alpha_{n_2},\ldots,\alpha_{n_{N-|\mathcal{T}|}}$ are pairwise distinct over $\mathbb{F}_q$, the $(N-T)\times (N-|\mathcal{T}|)$ Vandermonde matrix $\mathbf{P}$ defined in \eqref{encoding:proof:111} has full row rank, i.e., $\operatorname{rank}\big(\mathbf{P}\big)=N-T$.
Therefore, the rank of the $\big((N-|\mathcal{T}|-1)(N-T)+|\mathcal{T}|(N-|\mathcal{T}|)\big)\times(N-|\mathcal{T}|)(N-1)$ block-diagonal matrix $\mathbf{D}_1$ defined in \eqref{encoding:proof:222} has
\begin{IEEEeqnarray}{c}\label{general:rank:D1}
\operatorname{rank}\big(\mathbf{D}_1\big)=(N-|\mathcal{T}|-1)\cdot\operatorname{rank}\big(\mathbf{P}\big)+|\mathcal{T}|\cdot\operatorname{rank}\big(\mathbf{I}_{N-|\mathcal{T}|}\big)
=(N-|\mathcal{T}|-1)(N-T)+|\mathcal{T}|(N-|\mathcal{T}|), \IEEEeqnarraynumspace
\end{IEEEeqnarray}
and the rank of the $(N-|\mathcal{T}|)(N-1)\times(N-|\mathcal{T}|)(N-1)$ matrix $\mathbf{D}_2$ defined in \eqref{encoding:proof:222} satisfies
\begin{IEEEeqnarray}{rCl}
\operatorname{rank}\big(
\mathbf{D}_2
\big)
&\overset{(a)}{=}&
\operatorname{rank}\Big(
\left[
\setlength{\dashlinedash}{1.5pt}
\setlength{\dashlinegap}{1pt}
\renewcommand{\arraystretch}{0.7}
\begin{array}{cccc}
    \multicolumn{1}{c:}{\mathbf{g}_{n_1,n_2}}  & & &\\
    \multicolumn{1}{c:}{\mathbf{g}_{n_1,n_3}}& & &\\
   \multicolumn{1}{c:}{\vdots}& & &\\
    \multicolumn{1}{c:}{\mathbf{g}_{n_1,n_N}}& & &\\
    \cdashline{1-2}
    \multicolumn{1}{c:}{}&\multicolumn{1}{c:}{\mathbf{g}_{n_2,n_2}} & &\\
    \multicolumn{1}{c:}{}&\multicolumn{1}{c:}{\mathbf{g}_{n_2,n_3}} & &\\
    \multicolumn{1}{c:}{}&\multicolumn{1}{c:}{\vdots} & &\\
    \multicolumn{1}{c:}{}&\multicolumn{1}{c:}{\mathbf{g}_{n_2,n_{N}}} & &\\
    \cdashline{2-2}
    & & \ddots&\\
    \cdashline{4-4}
    &&\multicolumn{1}{c:}{}&\mathbf{g}_{n_{N-|\mathcal{T}|},n_2}\\
    &&\multicolumn{1}{c:}{}&\mathbf{g}_{n_{N-|\mathcal{T}|},n_3}\\
    &&\multicolumn{1}{c:}{}&\mathbf{g}_{n_{N-|\mathcal{T}|},n_N}\\
\end{array}
\right]
\Big)\notag\\
&\overset{(b)}{=}&\operatorname{rank}\big(\mathrm{diag}(\mathbf{G}_{n_1},\mathbf{G}_{n_2},\cdots,\mathbf{G}_{n_{N-|\mathcal{T}|}})\big) \notag\\
&=&\operatorname{rank}\big(\mathbf{G}_{n_1}\big)+\operatorname{rank}\big(\mathbf{G}_{n_2}\big)+\ldots+\operatorname{rank}\big(\mathbf{G}_{n_{N-|\mathcal{T}|}}\big) \notag\\
&\overset{(c)}{=}&(N-|\mathcal{T}|)(N-1), \label{general:proof:333}
\end{IEEEeqnarray}
where $(a)$ follows from the fact that elementary row operations do not change the rank of a matrix; $(b)$ follows by \eqref{definition:Gn} and \eqref{definition:gnn}, along with the fact that elementary row operations preserve rank; $(c)$ holds because $\mathbf{G}_n$ is a full-rank square matrix of dimensions $(N-1)\times(N-1)$ for any $n\in[N]$ by \eqref{definition:Gn}.

Furthermore, the stacked matrix $\Bigg[
  \begin{array}{@{}c@{}}
\mathbf{E}_{\widetilde{\mathcal{T}}^c,\mathcal{T}^c} \\
\mathrm{diag}(\mathbf{G}_{\mathcal{T}^c,\mathcal{T}})\\
\end{array}
\Bigg]$ of dimensions $(N-|\mathcal{T}|-1+\frac{|\mathcal{T}|(N-|\mathcal{T}|)}{N-T})L\times\frac{(N-|\mathcal{T}|)(N-1)}{N-T}L$ satisfies
\begin{IEEEeqnarray}{rCl}
\operatorname{rank}\big(\left[
  \begin{array}{@{}c@{}}
\mathbf{E}_{\widetilde{\mathcal{T}}^c,\mathcal{T}^c} \\
\mathrm{diag}(\mathbf{G}_{\mathcal{T}^c,\mathcal{T}})\\
\end{array}
\right]\big)&\overset{(a)}{=}&
\operatorname{rank}\big((
\mathbf{D}_1\cdot\mathbf{D}_2) \otimes \mathbf{I}_{\frac{L}{N-T}}\big) \notag\\
&\overset{(b)}{=}&\operatorname{rank}\big(\mathbf{D}_1\cdot \mathbf{D}_2 \big)\times \operatorname{rank}\big(\mathbf{I}_{\frac{L}{N-T}}\big) \notag\\
&\overset{(c)}{=}&\operatorname{rank}\big(\mathbf{D}_1 \big)\times \operatorname{rank}\big(\mathbf{I}_{\frac{L}{N-T}}\big) \notag\\
&\overset{(d)}{=}&\big(N-|\mathcal{T}|-1+\frac{|\mathcal{T}|(N-|\mathcal{T}|)}{N-T}\big)L,
\notag
\end{IEEEeqnarray}
where $(a)$ is due to \eqref{encoding:proof:222};
$(b)$ follows from the fact that the rank of a Kronecker product equals the product of the ranks of the individual matrices \cite[Theorem 4.2.15]{horn1994topics};
$(c)$ follows by the fact that $\mathbf{D}_2$ is a full-rank square matrix by
\eqref{general:proof:333}; $(d)$ is due to \eqref{general:rank:D1}.
That is, the stacked matrix $\Bigg[
  \begin{array}{@{}c@{}}
\mathbf{E}_{\widetilde{\mathcal{T}}^c,\mathcal{T}^c} \\
\mathrm{diag}(\mathbf{G}_{\mathcal{T}^c,\mathcal{T}})\\
\end{array}
\Bigg]$ has full row rank. 

As $\mathrm{diag}(\mathbf{G}_{\mathcal{T}^c,\mathcal{T}})$ is a submatrix of the above stacked matrix with dimensions $\frac{|\mathcal{T}|(N-|\mathcal{T}|)}{N-T}L\times\frac{(N-|\mathcal{T}|)(N-1)}{N-T}L$, it also has full row rank, i.e., 
\begin{IEEEeqnarray}{rCl}
\operatorname{rank}\big(\mathrm{diag}(\mathbf{G}_{\mathcal{T}^c,\mathcal{T}})\big)=\frac{|\mathcal{T}|(N-|\mathcal{T}|)}{N-T}L.
\notag
\end{IEEEeqnarray}

Consequently, we obtain
\begin{IEEEeqnarray}{rCl}
\renewcommand{\arraystretch}{0.5}
\operatorname{rank}\left(
\bigg[
  \begin{array}{@{}c@{}}
\mathbf{E}_{\widetilde{\mathcal{T}}^c,\mathcal{T}^c} \\
\mathrm{diag}(\mathbf{G}_{\mathcal{T}^c,\mathcal{T}})\\
\end{array}
\bigg]
\right)
 - \operatorname{rank}\left(\mathrm{diag}(\mathbf{G}_{\mathcal{T}^c,\mathcal{T}})\right)&=&\Big(N-|\mathcal{T}|-1+\frac{|\mathcal{T}|(N-|\mathcal{T}|)}{N-T}\Big)L-\frac{|\mathcal{T}|(N-|\mathcal{T}|)}{N-T}L\nonumber\\
 &=&(N-|\mathcal{T}|-1)L, \notag
\end{IEEEeqnarray} 
which satisfies the condition in \eqref{eq:rankcondition}. Therefore, the constructed encoding matrices satisfy the conditions in
\eqref{eq:structure:cons}--\eqref{eq:rankcondition} over any finite field $\mathbb{F}_q$ with size $q\geq N$. This completes the proof of the lemma.
\end{IEEEproof}

By applying Lemma \ref{lemma:encoding:matrix} to Proposition \ref{pro:condition}, the coding framework can be implemented using the encoding matrices constructed in this subsection. As a result, we obtain a secure aggregation scheme that achieves the optimal rate tuple over any finite field $\mathbb{F}_q$ with size $q\geq N$. We state this result in the following lemma.
\begin{Lemma}
For the secure aggregation problem with system parameters $N>1$ and $0\leq T<N-1$, we can obtain an explicit secure aggregation scheme using the encoding matrices constructed in \eqref{definition:Gnm}--\eqref{construction:E:nn}, which achieves the optimal rate tuple $\big(\frac{N (N-1)}{N - T},\frac{N (N-1)}{N - T},N\big)$ over any finite field $\mathbb{F}_q$ with size $q\geq N$.
\end{Lemma}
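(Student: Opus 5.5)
The lemma follows by combining Lemma \ref{lemma:encoding:matrix} with the sufficiency direction of Proposition \ref{pro:condition}. Then we check directly that the dimensions in the framework give the claimed rates. I would proceed in three steps.

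First, fix any finite field $\mathbb{F}_q$ with $q\geq N$, and choose $N$ pairwise distinct elements $\alpha_1,\ldots,\alpha_N$ (possible since $q\geq N$). Take $\mathbf{G}_n=\mathbf{I}_{N-1}$ (or any full-rank matrix) and an input length $L$ that is a multiple of $N-T$, padding with zeros if necessary; the minimal choice is $L=N-T$. Build $\mathbf{G}_{n,m}$, $\mathbf{P}_{m,n}$, $\mathbf{E}_{m,n}$, and $\mathbf{E}_{n,n}$ exactly as in \eqref{definition:Gnm}--\eqref{construction:E:nn}. This construction is fully explicit and deterministic: each user needs only the public parameters $\{\alpha_n\}$, and every matrix it uses is computable from locally available data. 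In particular, $\mathbf{E}_{n,n}$ depends only on $\{\mathbf{G}_{n,m}\}_m$, so $X_n$ in \eqref{eq:maskinputlinear} is a legitimate function of $W_n$, $Z_n$, and the received keys, as \eqref{model:maskedinput} requires.

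Second, invoke Lemma \ref{lemma:encoding:matrix}, which states that these matrices satisfy \eqref{eq:structure:cons}, the zero-sum condition \eqref{eq:zerosum-constraint}, and the rank condition \eqref{eq:rankcondition} for every colluding set $\mathcal{T}$ with $|\mathcal{T}|\leq T$. The sufficiency part of Proposition \ref{pro:condition} then gives both correctness \eqref{model:correctness} and security \eqref{model:security}. The remaining step is to compute the rates:
\begin{itemize}
\item Each $Z_n$ is uniform over $\mathbb{F}_q^{\frac{N-1}{N-T}L}$, and the keys are independent of each other and of the inputs, so \eqref{model:key:inden} holds. This gives $L_Z=\frac{N(N-1)}{N-T}L$.
\item Each $Z_{n,m}$ has length $\frac{L}{N-T}$, so $H(Z_{n,m})\leq\frac{L}{N-T}$ and $L_K\leq N(N-1)\frac{L}{N-T}$.
\item Each $X_n$ has length $L$, so $L_A\leq NL$.
\end{itemize}
Since $H(W_{sum})=L$ under uniform inputs (the normalization used in \eqref{definition:RZ}), we obtain $R_Z=\frac{N(N-1)}{N-T}$, $R_K\leq\frac{N(N-1)}{N-T}$, and $R_A\leq N$. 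By Definition \ref{def_capacity}, the optimal tuple is therefore achievable.

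I expect no real obstacle, since the substantive work lives in Proposition \ref{pro:condition} and Lemma \ref{lemma:encoding:matrix}. The only point needing care is the rate bookkeeping. Achievability requires the rates to be \emph{at most} the target values, so upper bounds on the entropies by vector lengths suffice. If exact equality is wanted, the rank computation in the proof of Lemma \ref{lemma:encoding:matrix} supplies it: it shows $\mathbf{G}_{n,m}$ has full row rank, so each $Z_{n,m}$ is uniform of full length. Equality for $H(X_n)$ follows because $W_n$ is uniform and independent of the keys.
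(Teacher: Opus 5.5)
Your proposal is correct and takes the same route as the paper, which proves this lemma in one line by applying Lemma \ref{lemma:encoding:matrix} to the sufficiency direction of Proposition \ref{pro:condition}, with the rates read off from the framework's dimensions. Your explicit rate bookkeeping just spells out what the paper calls directly verifiable; bounding entropies by vector lengths is enough for achievability.
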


The achievability of Theorems \ref{theo:capacity} and \ref{theo:achive} follows directly from this lemma.

\begin{Lemma}
The proposed capacity-achieving secure aggregation scheme can be achieved using a pairwise key-distribution structure, in which every pair of users shares a mutually independent common random key.
\end{Lemma}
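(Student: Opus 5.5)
The plan is to specialize the construction in \eqref{definition:Gn}--\eqref{construction:E:nn} to $\mathbf{G}_n=\mathbf{I}_{N-1}$ for every $n\in[N]$, which the construction explicitly allows. Then show that every masked input $X_n$ becomes a function of $W_n$ and a collection of mutually independent keys, each indexed by an unordered pair of users. Correctness and security will then be inherited from Lemma~\ref{lemma:encoding:matrix} and Proposition~\ref{pro:condition}, since the scheme itself is unchanged.

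\textbf{Step 1: the keys are uncoded blocks.} With $\mathbf{G}_n=\mathbf{I}_{N-1}$, each $\mathbf{g}_{n,m}$ is a distinct standard unit vector. Partition $Z_n$ into $N-1$ blocks $\{Z_n^{(m)}\}_{m\in[N]\backslash\{n\}}$, each of length $\frac{L}{N-T}$. By \eqref{definition:Gnm}, $Z_{n,m}=\mathbf{G}_{n,m}Z_n=Z_n^{(m)}$, so every encoded key is just an uncoded block of $Z_n$. Hence the blocks $\{Z_{n,m}\}_{n\neq m}$ are i.i.d.\ uniform, being disjoint pieces of the mutually independent uniform keys $Z_n$.

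\textbf{Step 2: rewrite $X_n$ in pairwise form.} Write $\mathbf{v}_m=[1,\alpha_m,\ldots,\alpha_m^{N-T-1}]^{\top}\otimes\mathbf{I}_{\frac{L}{N-T}}$. By \eqref{construction:E:nn}, $\mathbf{E}_{n,n}Z_n=-\sum_{m\neq n}\mathbf{v}_nZ_{n,m}$. Substituting into \eqref{eq:maskinputlinear} gives
\begin{IEEEeqnarray*}{c}
X_n=W_n+\sum_{m\in[N]\backslash\{n\}}\big(\mathbf{v}_mZ_{m,n}-\mathbf{v}_nZ_{n,m}\big).
\end{IEEEeqnarray*}
For each unordered pair $\{n,m\}$, define the pairwise key $S_{\{n,m\}}\triangleq(Z_{n,m},Z_{m,n})$, uniform of length $\frac{2L}{N-T}$. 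These $\binom{N}{2}$ keys are mutually independent by Step 1, and $X_n$ depends only on $W_n$ and $\{S_{\{n,m\}}\}_{m\neq n}$. So each user needs only the keys it shares with each other user.

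\textbf{Step 3: realize the keys and count rates.} Realize $S_{\{n,m\}}$ by letting one endpoint, say $\min\{n,m\}$, generate it and send it to the other endpoint. The total key size is $\binom{N}{2}\frac{2L}{N-T}=\frac{N(N-1)}{N-T}L$, and the total key-distribution communication has the same value. The masked inputs are unchanged, so $R_A=N$. Thus the rate tuple $\big(\frac{N(N-1)}{N-T},\frac{N(N-1)}{N-T},N\big)$ is preserved.

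\textbf{Step 4 (main obstacle): the security view must not change.} Because the key-generation responsibility has been reassigned, we must check that the colluders' view does not grow. Correctness is immediate, since the joint distribution of $(\{W_n\},\{X_n\})$ is identical to that of the original scheme. For security, note that in the original scheme a colluder $n\in\mathcal{T}$ already knows both halves of every $S_{\{n,m\}}$: it knows $Z_{n,m}$ as part of its own $Z_n$ and receives $Z_{m,n}$. In the pairwise realization, the colluders' information is exactly $\{S_{\{n,m\}}:n\in\mathcal{T},\,m\neq n\}$, which is the same set of random variables with the same joint distribution with $\{W_n\},\{X_n\}$. Therefore the left-hand side of \eqref{model:security} takes the same value as in the original scheme, and that value is zero by Lemma~\ref{lemma:encoding:matrix} and Proposition~\ref{pro:condition}.
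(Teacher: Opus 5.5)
Your proposal is correct and follows essentially the same route as the paper. Both arguments show that the encoded keys $\{Z_{n,m}\}$ are mutually independent and uniform, and that each $X_n$ depends only on the keys $\{Z_{n,m},Z_{m,n}\}_{m\in[N]\backslash\{n\}}$ it shares with the other users. The paper treats an arbitrary full-rank $\mathbf{G}_n$ (via the inverse of $\mathbf{G}_n\otimes\mathbf{I}_{\frac{L}{N-T}}$) and simply regards $Z_{n,m}$ itself as the key shared by users $n$ and $m$. You instead specialize to $\mathbf{G}_n=\mathbf{I}_{N-1}$, which is harmless because your identity $X_n=W_n+\sum_{m\neq n}(\mathbf{v}_mZ_{m,n}-\mathbf{v}_nZ_{n,m})$ holds for any $\mathbf{G}_n$. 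You also add an explicit one-endpoint realization and check that the rates and the colluders' view are unchanged; this is extra care the paper does not need under its reading of ``pairwise.''
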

\begin{IEEEproof}
Since each pairwise key $Z_{n,m}$ is shared between users $n$ and $m$ for any $n\in[N]$ and $m\in[N]\backslash\{n\}$, it suffices to prove the lemma by showing that the proposed secure aggregation scheme satisfies the following two conditions: 1) all the encoded keys $\{Z_{n,m}\}_{n\in[N],m\in[N]\backslash\{n\}}$ are mutually independent and uniformly distributed over $\mathbb{F}_q$; and 2) the masked input $X_n$ is generated by using only the pairwise keys $\{Z_{n,m},Z_{m,n}\}_{m\in[N]\backslash\{n\}}$ available at user $n$ to mask the private input $W_n$ for any $n\in[N]$.

According to \eqref{eq:matrixofcodedkey} and \eqref{definition:Gnm}, for any $n\in[N]$ and $m\in[N]\backslash\{n\}$, the pairwise key $Z_{n,m}$ shared by users $n$ and $m$ is given by
\begin{IEEEeqnarray}{c}\notag
Z_{n,m}=\mathbf{G}_{n,m}Z_n=\big(\mathbf{g}_{n,m}\otimes\mathbf{I}_{\frac{L}{N-T}}\big)Z_n.
\end{IEEEeqnarray}
Accordingly, by \eqref{definition:Gn}, we have
\begin{IEEEeqnarray}{c}\label{pairwise:1}
\begin{bmatrix}
Z_{n,1}\\
\vdots\\
Z_{n,n-1}\\
Z_{n,n+1}\\
\vdots\\
Z_{n,N}
\end{bmatrix}=
\Big(\begin{bmatrix}
\mathbf{g}_{n,1}\\
\vdots\\
\mathbf{g}_{n,n-1}\\
\mathbf{g}_{n,n+1}\\
\vdots\\
\mathbf{g}_{n,N}
\end{bmatrix}\otimes\mathbf{I}_{\frac{L}{N-T}}\Big) Z_n =\big(\mathbf{G}_{n} \otimes\mathbf{I}_{\frac{L}{N-T}}\big) Z_n, \quad \forall\,n\in[N].
\end{IEEEeqnarray}
Since $\mathbf{G}_n$ is a full-rank square matrix by \eqref{definition:Gn}, the square matrix $\mathbf{G}_{n} \otimes\mathbf{I}_{\frac{L}{N-T}}$ is also full rank, for any $n\in[N]$. Moreover, since $Z_1, Z_2, \ldots, Z_N$ are independently and uniformly generated, all the encoded keys $\{Z_{n,m}\}_{n\in[N],m\in[N]\backslash\{n\}}$ are mutually independent and uniformly distributed over $\mathbb{F}_q$.

Let $(\mathbf{G}_{n} \otimes\mathbf{I}_{\frac{L}{N-T}})^{-1}$ denote the inverse of matrix $\mathbf{G}_{n} \otimes\mathbf{I}_{\frac{L}{N-T}}$. Then, by \eqref{pairwise:1},
\begin{IEEEeqnarray}{c}\notag
Z_n=\big(\mathbf{G}_{n} \otimes\mathbf{I}_{\frac{L}{N-T}}\big)^{-1}\begin{bmatrix}
Z_{n,1}\\
\vdots\\
Z_{n,n-1}\\
Z_{n,n+1}\\
\vdots\\
Z_{n,N}
\end{bmatrix}, \quad \forall\,n\in[N].
\end{IEEEeqnarray}
Furthermore, according to \eqref{eq:maskinputlinear}, we know that the masked input $X_n$ is generated by using only the pairwise random keys $\{Z_{n,m},Z_{m,n}\}_{m\in[N]\backslash\{n\}}$ available at user $n$ to mask the private input $W_n$. Consequently, the proposed scheme can be achieved using a pairwise key-distribution structure, in which each pair of users $n$ and $m$ shares an independent common random key $Z_{n,m}$ for any $n\in[N]$ and $m\in[N]\backslash\{n\}$.
\end{IEEEproof}
This lemma completes the proof of Theorem \ref{theorem:pairwise}.

\subsection{Comparison with Pairwise-Key-Based Secure Aggregation}\label{section:Comparison}
The $T$-colluding secure aggregation problem was originally introduced in \cite{bonawitz2017practical} from the perspective of computational security, and was subsequently studied in \cite{zhao2023secure} from the information-theoretic security perspective to investigate its fundamental limits. Both works proposed $T$-colluding secure aggregation schemes under a pairwise key distribution structure. To facilitate comparison with these prior works, we briefly outline the corresponding schemes within the two-phase framework formulated in Section \ref{Problem:formulation}.

\emph{Secure Aggregation Scheme in \cite{bonawitz2017practical}:}
For any $n,m\in[N]$ with $n<m$, the pairwise key $Z_{n,m}$ shared at users $n$ and $m$ is generated by independently and uniformly drawing an $L$-dimensional vector over the finite field $\mathbb{F}_q$. Then, the pairwise keys available at user $n$ are given by $\{Z_{i,n},Z_{n,j}:1\leq i<n<j\leq N\}$ for each $n\in[N]$. Furthermore, the masked input $X_n$ sent by user $n$ to the server is designed as
\begin{IEEEeqnarray}{c}\notag
X_{n}=W_n-\sum\limits_{i=1}^{n-1}Z_{i,n}+\sum\limits_{j=n+1}^{N}Z_{n,j}.
\end{IEEEeqnarray}
After collecting the masked inputs $\{X_n\}_{n\in[N]}$ from all users, the server decodes the desired aggregation $W_{sum}$ by computing
\begin{IEEEeqnarray}{rCl}
\sum\limits_{n=1}^{N}X_n&=&\sum\limits_{n=1}^{N}W_n-\sum\limits_{n=1}^{N}\sum\limits_{i=1}^{n-1}Z_{i,n}+\sum\limits_{n=1}^{N}\sum\limits_{j=n+1}^{N}Z_{n,j} \label{comparison:1}\\
&=&\sum\limits_{n=1}^{N}W_n-\sum\limits_{1\leq i<n\leq N}Z_{i,n}+\sum\limits_{1\leq n<j\leq N}Z_{n,j} \\
&=&\sum\limits_{n=1}^{N}W_n. \label{comparison:2}
\end{IEEEeqnarray}
In general, the scheme in \cite{bonawitz2017practical} requires generating $\frac{N(N-1)}{2}$ pairwise random keys $\{Z_{n,m}:1\leq n<m\leq N\}$, each of length $L$, while each user also uploads a masked input of length $L$. According to \eqref{definition:RZ} and \eqref{definition:RKRA}, this scheme achieves a key rate of $R_Z=\frac{N(N-1)}{2}$ and a communication rate of $R_A=N$ in the aggregation phase.

\emph{Secure Aggregation Scheme in \cite{zhao2023secure}:}
Compared with the secure aggregation scheme in \cite{bonawitz2017practical}, the scheme in \cite{zhao2023secure} generates the pairwise key $Z_{n,m}$ shared between users $n$ and $m$ by independently and uniformly drawing a $\frac{2L}{N-T}$-dimensional vector from $\mathbb{F}_q$, for any $n,m\in[N]$ with $n<m$. To construct the masked input, each pairwise random key $Z_{n,m}$ is encoded using an encoding matrix $\mathbf{H}_{n,m}$ of dimensions $L\times \frac{2L}{N-T}$, whose entries are independently and uniformly chosen from a sufficiently large finite field.
Accordingly, the masked input $X_n$ sent by user $n\in[N]$ to the
server is constructed as
\begin{IEEEeqnarray}{c}\notag
X_{n}=W_n-\sum\limits_{i=1}^{n-1}\mathbf{H}_{i,n}Z_{i,n}+\sum\limits_{j=n+1}^{N}\mathbf{H}_{n,j}Z_{n,j}.
\end{IEEEeqnarray}
Similar to \eqref{comparison:1}--\eqref{comparison:2}, the server can recover the desired aggregation $W_{sum}$ by directly summing the masked inputs $\{X_n\}_{n\in[N]}$ received from all users. To ensure that such a randomized encoding construction satisfies the $T$-colluding security requirement with sufficiently high probability, the field size $q$ is required to approach infinity, while the input length $L$ exhibits super-exponential growth with the number of users $N$. Obviously, the secure aggregation scheme \cite{zhao2023secure} achieves a key rate of $R_Z=\frac{N(N-1)}{N-T}$ and a communication rate of $R_A=N$ in the aggregation phase.

Table \ref{tab:comparison} compares our proposed scheme with the secure aggregation schemes in \cite{bonawitz2017practical} and \cite{zhao2023secure}.\footnote{We omit the comparison of the communication rate in the key-distribution phase, since in practical applications pairwise keys can be established using standard cryptographic mechanisms, such as the Diffie--Hellman key-exchange protocol.} Several interesting observations and intuitive explanations are provided below.
\begin{itemize}
\item In terms of the key rate and aggregation communication rate, both our scheme and the scheme in \cite{zhao2023secure} achieve a lower key rate than the scheme in \cite{bonawitz2017practical} when $0\leq T<N-2$, while maintaining the same aggregation communication rate. When $T=N-2$, all three schemes achieve the same key rate and aggregation communication rate.
The reduction in the key rate can be attributed to two factors. First, although the scheme in \cite{bonawitz2017practical} can address the $T$-colluding secure aggregation problem, it is designed to withstand up to $N-2$ colluding users and is therefore not tailored to the specific security parameter $T$. Second, both our scheme and the scheme in \cite{zhao2023secure} are designed to match the target security level $T$ precisely, thereby reducing the amount of randomness required.
\item In terms of the required field size $q$ and input length $L$, the proposed scheme significantly improves upon the randomized coding construction in \cite{zhao2023secure}, which requires the field size to tend to infinity and the input length to grow super-exponentially with $N$. In contrast, our scheme requires only a finite field size $q\geq N$ and an input length $L=N-T$. This improvement stems from our explicit deterministic construction, which is obtained by first establishing a general linear coding framework and then explicitly realizing the framework through careful exploitation of the algebraic structure of $N\times(N-T)$ Vandermonde matrices. The structural properties of these Vandermonde matrices enable us to reduce the required field size to $N$ and the required input length to $N-T$. Moreover, compared with the scheme in \cite{bonawitz2017practical}, the proposed scheme requires a larger field size and a longer input length due to the structural constraints imposed by the Vandermonde-matrix-based construction.
\item It is also worth noting that although reference \cite{zhao2023secure} characterizes the optimal key rate and aggregation communication rate for the $T$-colluding secure aggregation problem, the characterization is restricted to the symmetric groupwise key-distribution structure, in which every user subset of size $2\leq G\leq N-T$ shares an independent common random key. For more general key-distribution models, the fundamental limits of the secure aggregation problem remain unknown. 
In contrast to \cite{zhao2023secure}, we investigate the secure aggregation problem under a general two-phase framework that jointly incorporates the key-distribution phase and the aggregation phase, where correlated keys are established through user-to-user communication. Under this general framework, we completely characterize the optimal key rate and the optimal aggregation communication rate. Moreover, we show that a pairwise key-distribution structure is sufficient for achieving these fundamental limits, thereby establishing the optimality of a remarkably simple key-distribution architecture.
\end{itemize}


\begin{table}[htbp]
\centering
\caption{Performance comparison between our proposed scheme and the secure aggregation schemes in \cite{bonawitz2017practical} and \cite{zhao2023secure} under a pairwise key distribution structure.}
\label{tab:comparison}
\begin{NiceTabular}{lccc}[hvlines]
\rule[-3.0mm]{0pt}{8mm}\diagbox{
  \raisebox{1mm}{\hspace{1.5mm}Metrics}
}{
  \raisebox{0mm}{Secure Aggregation Schemes\hspace{0mm}}
}
&  Google's Scheme \cite{bonawitz2017practical} & Zhao-Sun Scheme \cite{zhao2023secure} & Our Scheme \\
Key Rate $R_Z$ & $\frac{N(N-1)}{2}$ & $\frac{N(N-1)}{N-T}$ & $\frac{N(N-1)}{N-T}$ \\ 
Aggregation-Communication Rate $R_A$ & $N$ & $N$ & $N$ \\ 
Required Finite Field $\mathbb{F}_q$ & $q\geq 2$ & $q=p^M$, $p$ is any prime number, $M\rightarrow\infty$ & $q\geq N$ \\
Minimum Required Input Length $L$ & $1$ & $N!\binom{N-T}{2}$ & $N-T$ \\
Construction Type & Explicit & Probabilistic & Explicit \end{NiceTabular}
\end{table}


\section{Proof of Theorem \ref{theo:capacity}: Converse}\label{Converse Proof}
In this section, we prove the converse bounds of Theorem \ref{theo:capacity}. 
We start with two useful lemmas that will be instrumental in establishing the converse bounds. To facilitate a better understanding of these lemmas, we first explain their implications and provide the intuition behind their proofs before formally presenting them.

The following lemma shows that for any user $n\in[N]$, the local input $W_n$ can be fully recovered from the masked input $X_n$, the encoded keys $\{Z_{n,m}\}_{m\in[N]\backslash\{n\}}$ sent to the other users, and the encoded keys $\{Z_{m,n}\}_{m\in[N]\backslash\{n\}}$ received from the other users. According to \eqref{encodedkey} and \eqref{model:maskedinput}, we know that user $n$ locally holds the random variables $Z_n,\{Z_{n,m}\}_{m\in[N]\backslash\{n\}}$, and $\{Z_{m,n}\}_{m\in[N]\backslash\{n\}}$, which are intended to protect the privacy of $W_n$. 
In fact, this lemma implies that the local input $W_n$ is effectively masked only by the randomness contained in $\{Z_{n,m}\}_{m\in[N]\backslash\{n\}}$ and $\{Z_{m,n}\}_{m\in[N]\backslash\{n\}}$, i.e., the remaining randomness in $Z_n$ beyond $\{Z_{n,m}\}_{m\in[N]\backslash\{n\}}$ does not contribute to masking $W_n$.
This is because such remaining randomness is never transmitted to the other users; if it were used to mask $W_n$, the resulting interference could not be canceled, thereby violating the correctness constraint in \eqref{model:correctness}.
\begin{Lemma}\label{lemma:encodedkey-determined-masking}
For any given $n\in[N]$,
    \begin{IEEEeqnarray}{c}
        H\big(W_n| X_n, \{Z_{n,m}, Z_{m,n}\}_{m\in[N]\backslash\{n\}}\big) = 0. \notag
    \end{IEEEeqnarray}
\end{Lemma}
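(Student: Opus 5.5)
The plan is to use the correctness constraint \eqref{model:correctness} to show first that $W_n$ is recoverable once all \emph{other} users' inputs and keys are revealed. I will then strip this extra side information away using the independence structure \eqref{model:key:inden}. The argument is purely information-theoretic and does not assume linearity. Throughout, write $\mathcal{A}\triangleq\{W_m,Z_m\}_{m\in[N]\backslash\{n\}}$ for everything held by the other users, and
\begin{IEEEeqnarray*}{c}
\mathcal{B}\triangleq\big(X_n,\{Z_{n,m},Z_{m,n}\}_{m\in[N]\backslash\{n\}}\big)
\end{IEEEeqnarray*}
for the conditioning variables in the statement.

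\emph{Step 1: $W_n$ is determined by $\mathcal{B}$ together with $\mathcal{A}$.} For each $m\neq n$, $X_m$ is a function of $W_m$, $Z_m$ and the received keys $\{Z_{k,m}\}_{k\neq m}$ by \eqref{model:maskedinput}. Every such received key is either $Z_{n,m}$, which lies in $\mathcal{B}$, or $Z_{k,m}$ with $k\neq n$, which is a function of $Z_k\in\mathcal{A}$ by \eqref{encodedkey}. Hence $\{X_m\}_{m\in[N]}$ is a function of $(\mathcal{A},\mathcal{B})$. By \eqref{model:correctness}, $W_{sum}$ is a function of $\{X_m\}_{m\in[N]}$. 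Then $W_n=W_{sum}-\sum_{m\neq n}W_m$, and each $W_m$ with $m\neq n$ lies in $\mathcal{A}$. Therefore $H(W_n\mid \mathcal{B},\mathcal{A})=0$.

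\emph{Step 2: $\mathcal{A}$ carries no extra information about $W_n$ given $\mathcal{B}$.} By Step 1, $H(W_n\mid\mathcal{B})=I(W_n;\mathcal{A}\mid\mathcal{B})$, so it suffices to show that this mutual information vanishes. Let $\mathcal{Z}_{\to n}\triangleq\{Z_{m,n}\}_{m\neq n}$, which is a function of $\mathcal{A}$. By \eqref{model:key:inden}, $(W_n,Z_n)$ is independent of $\mathcal{A}$. Conditioning on a function of $\mathcal{A}$ preserves this independence, so
\begin{IEEEeqnarray*}{c}
I(W_n,Z_n;\mathcal{A}\mid\mathcal{Z}_{\to n})=0.
\end{IEEEeqnarray*}
Next, $X_n$ and $\{Z_{n,m}\}_{m\neq n}$ are functions of $(W_n,Z_n,\mathcal{Z}_{\to n})$. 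Adding them to the first argument therefore leaves the mutual information unchanged, giving
\begin{IEEEeqnarray*}{c}
I\big(W_n,Z_n,X_n,\{Z_{n,m}\}_{m\neq n};\mathcal{A}\mid\mathcal{Z}_{\to n}\big)=0.
\end{IEEEeqnarray*}
By the chain rule, this quantity is at least $I(W_n;\mathcal{A}\mid X_n,\{Z_{n,m}\}_{m\neq n},\mathcal{Z}_{\to n})=I(W_n;\mathcal{A}\mid\mathcal{B})$. Since mutual information is nonnegative, $I(W_n;\mathcal{A}\mid\mathcal{B})=0$.

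Combining the two steps yields $H(W_n\mid\mathcal{B})=0$, which is the claim. The only delicate point is Step 2. There one must condition on exactly the right variables, namely the received keys $\mathcal{Z}_{\to n}$, which are functions of $\mathcal{A}$, so that the independence between $(W_n,Z_n)$ and $\mathcal{A}$ survives the conditioning. Everything in $\mathcal{B}$ must also be a function of user $n$'s local data together with $\mathcal{Z}_{\to n}$. This is what justifies discarding the part of $Z_n$ not conveyed through $\{Z_{n,m}\}_{m\neq n}$.
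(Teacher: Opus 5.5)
Your proof is correct and takes essentially the same route as the paper. Correctness shows that $W_n$ is determined once the other users' inputs and keys are added to the conditioning. An independence chain, which conditions on the received keys $\{Z_{m,n}\}_{m\neq n}$ (functions of the others' keys) and uses that $X_n$ and $\{Z_{n,m}\}_{m\neq n}$ depend only on $(W_n,Z_n)$ and those keys, shows that this extra information is conditionally independent of $W_n$ given $X_n$ and the exchanged keys. The only cosmetic difference is that you bundle $\{W_m\}_{m\neq n}$ with $\{Z_m\}_{m\neq n}$ into a single $\mathcal{A}$, whereas the paper first removes $\{Z_m\}_{m\neq n}$ and then adds $\{W_m\}_{m\neq n}$ by a separate independence step.
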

\begin{IEEEproof}
For any $n\in[N]$, we have
\begin{IEEEeqnarray}{rCl}
0&\leq&I\big(W_n;\{Z_m\}_{m\in[N]\backslash\{n\}} | X_n,\{Z_{n,m},Z_{m,n}\}_{m\in[N]\backslash\{n\}} \big) \notag\\
&\leq& I\big(W_n,X_n,Z_n;\{Z_m\}_{m\in[N]\backslash\{n\}} |\{Z_{n,m},Z_{m,n}\}_{m\in[N]\backslash\{n\}} \big) \notag\\
&\overset{(a)}{=}& I\big(W_n,Z_n;\{Z_m\}_{m\in[N]\backslash\{n\}} |\{Z_{n,m},Z_{m,n}\}_{m\in[N]\backslash\{n\}} \big) \notag\\
 &\leq&I\big(W_n,Z_n,\{Z_{n,m}\}_{m\in[N]\backslash\{n\}};\{Z_m,Z_{m,n}\}_{m\in[N]\backslash\{n\}} \big) \notag\\
&\overset{(b)}{=}&I\big(W_n,Z_n;\{Z_m\}_{m\in[N]\backslash\{n\}} \big) \notag\\
&\overset{(c)}{=}&0, \notag
\end{IEEEeqnarray}
where $(a)$ follows by \eqref{model:maskedinput}; $(b)$ follows from the fact that $Z_{n,m}$ is determined by $Z_{n}$ for any $n\in[N]$ and $m\in[N]\backslash\{n\}$ by \eqref{encodedkey}; $(c)$ is due to the fact that $\{Z_m\}_{m\in[N]\backslash\{n\}}$ is generated independently of $W_n$ and $Z_n$ by \eqref{model:key:inden}. Thus, we obtain
\begin{IEEEeqnarray}{c}
I\big(W_n;\{Z_m\}_{m\in[N]\backslash\{n\}} | X_n,\{Z_{n,m},Z_{m,n}\}_{m\in[N]\backslash\{n\}} \big)=0, \notag
\end{IEEEeqnarray}
which is employed to complete the proof of the lemma as follows:
\begin{IEEEeqnarray}{rCl}
0&\leq&H\big(W_n | X_n,\{Z_{n,m},Z_{m,n}\}_{m\in[N]\backslash\{n\}} \big)\nonumber\\
&=&H\big(W_n | X_n,\{Z_{n,m},Z_{m,n}\}_{m\in[N]\backslash\{n\}} \big)-I\big(W_n;\{Z_m\}_{m\in[N]\backslash\{n\}} | X_n,\{Z_{n,m},Z_{m,n}\}_{m\in[N]\backslash\{n\}} \big)\nonumber\\
&=&H\big(W_n | X_n,\{Z_{n,m},Z_{m,n},Z_m\}_{m\in[N]\backslash\{n\}} \big)\nonumber\\ 
&\overset{(a)}{=}&H\big(W_n | X_n,\{Z_{n,m},Z_m\}_{m\in[N]\backslash\{n\}} \big)\nonumber\\ 
&\overset{(b)}{=}&H\big(W_n | X_n,\{W_m,Z_{n,m},Z_m\}_{m\in[N]\backslash\{n\}} \big)\nonumber\\ 
&=&H\big(W_{sum} | X_n,\{W_m,Z_m,Z_{n,m}\}_{m\in[N]\backslash\{n\}} \big)\nonumber\\
&\stackrel{(c)}{=}&H\big(W_{sum} | \{X_m\}_{m\in[N]},\{W_m,Z_m,Z_{n,m}\}_{m\in[N]\backslash\{n\}} \big)\nonumber\\
&\leq&H\big(W_{sum} | \{X_m\}_{m\in[N]} \big)\nonumber\\
&\stackrel{(d)}{=}&0,\notag
\end{IEEEeqnarray}
where $(a)$ is due to \eqref{encodedkey}; $(b)$ follows from the fact that $X_n$ is a deterministic function of $\{W_{n},Z_n,\{Z_{m,n}\}_{m\in[N]\backslash\{n\}}\}$ by \eqref{model:maskedinput}, and that $\{W_m\}_{m\in[N]\backslash\{n\}}$ is independent of 
$\{W_{n},Z_n,\{Z_{m,n},Z_{n,m},Z_m\}_{m\in[N]\backslash\{n\}}\}$ by  \eqref{encodedkey} and \eqref{model:key:inden}; 
$(c)$ follows by the fact that $X_m$ can be determined from $W_{m},Z_m,\{Z_{\ell}\}_{\ell\in[N]\backslash\{m,n\}}$, and $Z_{n,m}$ for any $m\in[N]\backslash\{n\}$ by \eqref{encodedkey} and \eqref{model:maskedinput}; $(d)$ follows from the correctness constraint in \eqref{model:correctness}.
\end{IEEEproof}

The following lemma establishes a lower bound on the size of the encoding keys exchanged among the non-colluding users, conditioned on the encoding keys sent from the non-colluding users to the colluding users.
For any given subset of colluding users $\mathcal{T}\subseteq[N]$ of size $T$, the proof of the lemma 
is completed by jointly exploiting the following two key observations: 
1) The security constraint in \eqref{model:security} guarantees that an adversary who eavesdrops on the server and the colluding users in $\mathcal{T}$ obtains no information about the inputs $\{W_n\}_{n\in[N]\backslash\mathcal{T}}$ of the remaining $N-T$ non-colluding users in $[N]\backslash\mathcal{T}$, except for their aggregated result $\sum_{n\in[N]\backslash\mathcal{T}}W_n$. From an information-theoretic perspective, concealing all information about the $N-T$ individual inputs beyond their sum requires randomness of size at least $(N-T-1)L$. Hence, this requires that, conditioned on all the randomness held by the colluding users in $\mathcal{T}$, the total amount of randomness among the non-colluding users in $[N]\backslash\mathcal{T}$ must be no less than $(N-T-1)L$; otherwise, the security constraint in \eqref{model:security} would be violated.
2) The input $W_n$ of each user $n\in[N]$ is masked exclusively by the encoded keys $\{Z_{n,m}\}_{m\in[N]\backslash\{n\}}$ sent to the other users and the encoded keys $\{Z_{m,n}\}_{m\in[N]\backslash\{n\}}$ received from the other users, as demonstrated in Lemma \ref{lemma:encodedkey-determined-masking}.
\begin{Lemma}\label{lemma:non-colludingkey-entropy}
    For any subset of colluding users $\mathcal{T}\subseteq[N]$ with size $|\mathcal{T}|=T$ and its complement $\mathcal{T}^c=[N]\backslash{\mathcal{T}}$, 
    \begin{IEEEeqnarray}{rCl}
        H\big(\{Z_{n,m}\}_{n \in  \mathcal{T}^c,m \in \mathcal{T}^c\backslash\{n\}}|\{Z_{n,m}\}_{n\in\mathcal{T}^c,m \in \mathcal{T}}\big) &\ge& (N-T-1)L.\label{eq:lemma:non-colludingencodedkey-entropy}
    \end{IEEEeqnarray}
\end{Lemma}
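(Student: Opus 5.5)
The plan is to compare two quantities: the residual uncertainty about the non-colluding inputs, which the security constraint \eqref{model:security} keeps large, and the amount of key material that removes all of it, which Lemma~\ref{lemma:encodedkey-determined-masking} identifies. Fix $\mathcal{T}$ with $|\mathcal{T}|=T$ and write
\[
\mathcal{A}\triangleq\big\{\{X_n\}_{n\in[N]},\{Z_n\}_{n\in\mathcal{T}},\{Z_{m,n}\}_{m\in[N]\backslash\{n\},n\in\mathcal{T}}\big\}
\]
for the adversary's view. Also write
\[
\mathcal{K}_{\rm in}\triangleq\{Z_{n,m}\}_{n\in\mathcal{T}^c,m\in\mathcal{T}^c\backslash\{n\}}, \qquad \mathcal{K}_{\rm out}\triangleq\{Z_{n,m}\}_{n\in\mathcal{T}^c,m\in\mathcal{T}}.
\]
First observe that $\mathcal{K}_{\rm out}\subseteq\mathcal{A}$, because these are exactly the encoded keys $Z_{m,n}$ received by colluding users $n\in\mathcal{T}$ from senders $m\in\mathcal{T}^c$.

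\emph{Step 1 (lower bound from security).} By \eqref{model:security}, $H(\{W_n\}_{n\in\mathcal{T}^c}\mid W_{sum},\{W_n\}_{n\in\mathcal{T}},\mathcal{A})=H(\{W_n\}_{n\in\mathcal{T}^c}\mid W_{sum},\{W_n\}_{n\in\mathcal{T}})$. Conditioning on $W_{sum}$ and $\{W_n\}_{n\in\mathcal{T}}$ is equivalent to conditioning on $\{W_n\}_{n\in\mathcal{T}}$ and $\sum_{n\in\mathcal{T}^c}W_n$. Using the independence and uniformity in \eqref{model:file:inden}--\eqref{infor:indenpe}, this entropy equals
\[
H(\{W_n\}_{n\in\mathcal{T}^c})-H\Big(\sum_{n\in\mathcal{T}^c}W_n\Big)=(N-T)L-L=(N-T-1)L .
\]
Here we use that the partial sum is uniform on $\mathbb{F}_q^L$ and is a function of $\{W_n\}_{n\in\mathcal{T}^c}$, and that $|\mathcal{T}^c|=N-T\ge 2$.

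\emph{Step 2 (decodability from $\mathcal{A}$ together with $\mathcal{K}_{\rm in}$).} For each $n\in\mathcal{T}^c$, Lemma~\ref{lemma:encodedkey-determined-masking} says that $W_n$ is a function of $X_n$ and $\{Z_{n,m},Z_{m,n}\}_{m\neq n}$. I split these keys into three groups and check that each is available.
\begin{itemize}
\item $X_n$ is contained in $\mathcal{A}$.
\item For $m\in\mathcal{T}$, $Z_{m,n}$ is determined by $Z_m\in\mathcal{A}$ via \eqref{encodedkey}, and $Z_{n,m}$ lies in $\mathcal{K}_{\rm out}\subseteq\mathcal{A}$.
\item For $m\in\mathcal{T}^c\backslash\{n\}$, both $Z_{n,m}$ and $Z_{m,n}$ lie in $\mathcal{K}_{\rm in}$.
\end{itemize}
Hence $H(\{W_n\}_{n\in\mathcal{T}^c}\mid\mathcal{A},\mathcal{K}_{\rm in})=0$.

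\emph{Step 3 (chain).} Combining the two steps gives
\[
(N-T-1)L=H(\{W_n\}_{n\in\mathcal{T}^c}\mid W_{sum},\{W_n\}_{n\in\mathcal{T}},\mathcal{A})\le H(\mathcal{K}_{\rm in}\mid W_{sum},\{W_n\}_{n\in\mathcal{T}},\mathcal{A}).
\]
The inequality holds because, by Step 2, $\{W_n\}_{n\in\mathcal{T}^c}$ is a function of the conditioning variables together with $\mathcal{K}_{\rm in}$. Since conditioning reduces entropy and $\mathcal{K}_{\rm out}$ is part of $\mathcal{A}$, the right-hand side is at most $H(\mathcal{K}_{\rm in}\mid\mathcal{K}_{\rm out})$, which yields \eqref{eq:lemma:non-colludingencodedkey-entropy}.

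The only delicate points are bookkeeping ones: confirming that $\mathcal{K}_{\rm out}$ is indeed in the adversary's view, and that Step 2 needs no key beyond $\mathcal{A}\cup\mathcal{K}_{\rm in}$. Otherwise the argument is a direct combination of the security constraint with Lemma~\ref{lemma:encodedkey-determined-masking}.
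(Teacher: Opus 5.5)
Your proof is correct and follows essentially the same route as the paper. You use Lemma~\ref{lemma:encodedkey-determined-masking} to show that $\{W_n\}_{n\in\mathcal{T}^c}$ is decodable from the adversary's information together with the internal keys $\{Z_{n,m}\}_{n\in\mathcal{T}^c,m\in\mathcal{T}^c\backslash\{n\}}$, then combine the chain rule with the security constraint to obtain $(N-T-1)L$. The only difference is bookkeeping: you condition on the full adversary view plus $W_{sum}$ and $\{W_n\}_{n\in\mathcal{T}}$, so \eqref{model:security} applies directly, whereas the paper conditions on the smaller collection $\{Z_{n,m}\}_{n\in\mathcal{T}^c,m\in\mathcal{T}}$, $\{X_n\}_{n\in\mathcal{T}^c}$, $\{Z_n\}_{n\in\mathcal{T}}$ and brings in $\{W_n\}_{n\in\mathcal{T}}$ through a separate independence step.
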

\begin{IEEEproof}
For any given subset $\mathcal{T}\subseteq[N]$ with size $|\mathcal{T}|=T$ and its complement $\mathcal{T}^c=[N]\backslash\mathcal{T}$, we have
\begin{IEEEeqnarray*}{rCl}
&&H\big(\{W_n\}_{n\in\mathcal{T}^c} | \{Z_{n,m}\}_{n \in  \mathcal{T}^c,m \in \mathcal{T}^c\backslash\{n\}},
\{Z_{n,m}\}_{n\in\mathcal{T}^c,m\in\mathcal{T}}, \{X_{n}\}_{n\in\mathcal{T}^c},\{Z_{n}\}_{n\in\mathcal{T}}\big)\\
&\overset{(a)}{=}&H\big(\{W_n\}_{n\in\mathcal{T}^c} | \{Z_{n,m}\}_{n \in  \mathcal{T}^c,m \in \mathcal{T}^c\backslash\{n\}}, 
\{Z_{n,m}\}_{n\in\mathcal{T}^c,m\in\mathcal{T}},\{X_{n}\}_{n\in\mathcal{T}^c},\{Z_{m}\}_{m\in\mathcal{T}},\{Z_{m,n}\}_{m\in\mathcal{T},n\in\mathcal{T}^c}\big)\\
&\overset{(b)}{=}&H\big(\{W_n\}_{n\in\mathcal{T}^c} | \{Z_{n,m}\}_{n \in  \mathcal{T}^c,m \in [N]\backslash\{n\}},
\{X_{n}\}_{n\in\mathcal{T}^c},\{Z_{m}\}_{m\in\mathcal{T}},\{Z_{m,n}\}_{m\in[N]\backslash\{n\},n\in\mathcal{T}^c}\big)\\
&\leq&H\big(\{W_n\}_{n\in\mathcal{T}^c} | \{Z_{n,m}\}_{n \in  \mathcal{T}^c,m \in [N]\backslash\{n\}},
\{X_{n}\}_{n\in\mathcal{T}^c},\{Z_{m,n}\}_{m\in[N]\backslash\{n\},n\in\mathcal{T}^c}\big)\\
&\overset{(c)}{=}&0,
\end{IEEEeqnarray*}
where $(a)$ follows from the fact that $Z_{m,n}$ is a deterministic function of $Z_{m}$ for any $m\in\mathcal{T}$ and $n\in\mathcal{T}^c$ by \eqref{encodedkey}; $(b)$ follows by the fact that 
$\{Z_{m,n}\}_{m\in[N]\backslash\{n\},n\in\mathcal{T}^c}=\{Z_{n,m}\}_{n \in  \mathcal{T}^c,m \in \mathcal{T}^c\backslash\{n\}}\bigcup\{Z_{m,n}\}_{m\in\mathcal{T},n\in\mathcal{T}^c}$; $(c)$ is due to Lemma \ref{lemma:encodedkey-determined-masking}. 
Therefore, we obtain
\begin{IEEEeqnarray}{c}\label{lemma:auxiliary:result:1}
H\big(\{W_n\}_{n\in\mathcal{T}^c} | \{Z_{n,m}\}_{n \in  \mathcal{T}^c,m \in \mathcal{T}^c\backslash\{n\}},
\{Z_{n,m}\}_{n\in\mathcal{T}^c,m\in\mathcal{T}}, \{X_{n}\}_{n\in\mathcal{T}^c},\{Z_{n}\}_{n\in\mathcal{T}}\big)=0. 
\end{IEEEeqnarray}
Following \eqref{lemma:auxiliary:result:1}, we continue
the proof of the lemma.
\begin{IEEEeqnarray*}{rCl}
&&H\big(\{Z_{n,m}\}_{n \in \mathcal{T}^c,m \in \mathcal{T}^c\backslash\{n\}} | \{Z_{n,m}\}_{n\in\mathcal{T}^c,m \in \mathcal{T}}\big)\\
&\geq&H\big(\{Z_{n,m}\}_{n \in  \mathcal{T}^c,m \in \mathcal{T}^c\backslash\{n\}} | 
\{Z_{n,m}\}_{n\in\mathcal{T}^c,m\in\mathcal{T}}, \{X_{n}\}_{n\in\mathcal{T}^c},\{Z_{n}\}_{n\in\mathcal{T}}\big)\\
&\overset{(a)}{=}&H\big(\{Z_{n,m}\}_{n \in  \mathcal{T}^c,m \in \mathcal{T}^c\backslash\{n\}} | 
\{Z_{n,m}\}_{n\in\mathcal{T}^c,m\in\mathcal{T}}, \{X_{n}\}_{n\in\mathcal{T}^c},\{Z_{n}\}_{n\in\mathcal{T}}\big)\\
&&+H\big(\{W_n\}_{n\in\mathcal{T}^c} | \{Z_{n,m}\}_{n \in  \mathcal{T}^c,m \in \mathcal{T}^c\backslash\{n\}},
\{Z_{n,m}\}_{n\in\mathcal{T}^c,m\in\mathcal{T}}, \{X_{n}\}_{n\in\mathcal{T}^c},\{Z_{n}\}_{n\in\mathcal{T}}\big)\\
&=&H\big(\{W_n\}_{n\in\mathcal{T}^c},\{Z_{n,m}\}_{n \in  \mathcal{T}^c,m \in \mathcal{T}^c\backslash\{n\}}|
\{Z_{n,m}\}_{n\in\mathcal{T}^c,m\in\mathcal{T}}, \{X_{n}\}_{n\in\mathcal{T}^c},\{Z_{n}\}_{n\in\mathcal{T}}\big)\\
&\geq&H\big(\{W_n\}_{n\in\mathcal{T}^c}|
\{Z_{n,m}\}_{n\in\mathcal{T}^c,m\in\mathcal{T}}, \{X_{n}\}_{n\in\mathcal{T}^c},\{Z_{n}\}_{n\in\mathcal{T}}\big)\\
&\geq&H\big(\{W_n\}_{n\in\mathcal{T}^c}| \sum\limits_{n\in \mathcal{T}^c} W_n,
\{Z_{n,m}\}_{n\in\mathcal{T}^c,m\in\mathcal{T}}, \{X_{n}\}_{n\in\mathcal{T}^c},\{Z_{n}\}_{n\in\mathcal{T}}\big)\\
&=&H\big(\{W_n\}_{n\in\mathcal{T}^c}| \sum\limits_{n\in \mathcal{T}^c} W_n\big)-I\big(\{W_n\}_{n\in\mathcal{T}^c};\{Z_{n,m}\}_{n\in\mathcal{T}^c,m\in\mathcal{T}}, \{X_{n}\}_{n\in\mathcal{T}^c},\{Z_{n}\}_{n\in\mathcal{T}}|\sum\limits_{n\in \mathcal{T}^c} W_n\big)\\
&\overset{(b)}{=}&H\big(\{W_n\}_{n\in\mathcal{T}^c}| \sum\limits_{n\in \mathcal{T}^c} W_n\big)-I\big(\{W_n\}_{n\in\mathcal{T}^c};\{Z_{n,m}\}_{n\in\mathcal{T}^c,m\in\mathcal{T}}, \{X_{n}\}_{n\in\mathcal{T}^c},\{Z_{n}\}_{n\in\mathcal{T}}|\sum\limits_{n\in \mathcal{T}^c} W_n,\{W_n\}_{n\in\mathcal{T}}\big)\\
&\overset{(c)}{=}&H\big(\{W_n\}_{n\in\mathcal{T}^c}| \sum\limits_{n\in \mathcal{T}^c} W_n\big) \notag\\
&\overset{(d)}{=}&(N-T-1)L,
\end{IEEEeqnarray*} 
where $(a)$ is due to \eqref{lemma:auxiliary:result:1};
$(b)$ follows from the fact that $X_n$ can be determined by $\{W_{n},\{Z_n\}_{n\in[N]}\}$ for any $n\in\mathcal{T}^c$ by \eqref{encodedkey}-\eqref{model:maskedinput}, and $\{W_n\}_{n\in\mathcal{T}}$ is independent of $\{\{W_n\}_{n\in\mathcal{T}^c},\{Z_{n,m}\}_{n\in\mathcal{T}^c,m\in\mathcal{T}},\{Z_{n}\}_{n\in[N]},\sum_{n\in \mathcal{T}^c} W_n\}$ by \eqref{encodedkey} and \eqref{model:key:inden};
$(c)$ follows by the security constraint in \eqref{model:security};
$(d)$ holds because $W_1,\ldots,W_N$ are independent and uniformly distributed length-$L$ random vectors over the finite field $\mathbb{F}_q$ by
\eqref{model:file:inden} and \eqref{infor:indenpe}.
 \end{IEEEproof}

Next, we present a formal proof of the converse bounds of Theorem \ref{theo:capacity}.

\subsection{Proof of $R_A \geq N$}
In this subsection, we complete the converse proof for the aggregation communication rate $R_A$. As a first step, we focus on the size of the masked input of an individual user and derive a lower bound on the masked input $X_n$ for any given $n\in[N]$ as follows:
\begin{IEEEeqnarray*}{rCl}
H(X_n)&\geq& H(X_n|\{X_m\}_{m\in[N]\backslash\{n\}}) \nonumber \\
&\ge& H\big(X_n|\{X_m\}_{m\in[N]\backslash\{n\}}\big)-H\big(X_n|W_{sum}, \{X_m\}_{m\in[N]\backslash\{n\}}\big)\\
&=& I\big(X_n;W_{sum}|\{X_m\}_{m\in[N]\backslash\{n\}}\big)\\
& = & H\big(W_{sum}|\{X_m\}_{m\in[N]\backslash\{n\}}\big)-H\big(W_{sum}|\{X_m\}_{m\in[N]}\big) \\
&\stackrel{(a)}{=} & H\big(W_{sum}|\{X_m\}_{m\in[N]\backslash\{n\}}\big)\\
&\geq&H\big(W_{sum}|\{X_m\}_{m\in[N]\backslash\{n\}},\{W_m\}_{m\in[N]\backslash\{n\}},\{Z_m\}_{m\in[N]\backslash\{n\}},\{Z_{\ell,m}\}_{\ell\in[N]\backslash\{m\},m\in[N]\backslash\{n\}}\big)\\
&\stackrel{(b)}{=}&H\big(W_{sum}|\{W_m\}_{m\in[N]\backslash\{n\}},\{Z_m\}_{m\in[N]\backslash\{n\}},\{Z_{\ell,m}\}_{\ell\in[N]\backslash\{m\},m\in[N]\backslash\{n\}}\big)\\
&=&H\big(W_{n}|\{W_m\}_{m\in[N]\backslash\{n\}},\{Z_m\}_{m\in[N]\backslash\{n\}},\{Z_{\ell,m}\}_{\ell\in[N]\backslash\{m\},m\in[N]\backslash\{n\}}\big)\\
&\stackrel{(c)}{=} &H\big(W_{n}\big) \\
&\overset{(d)}{=}&L,
\end{IEEEeqnarray*}
where 
$(a)$ follows by the correctness constraint in \eqref{model:correctness}; 
$(b)$ follows from the fact that $X_m$ is a deterministic function of $W_m,Z_m$, and $\{Z_{\ell,m}\}_{\ell\in[N]\backslash\{m\}}$ for any $m\in[N]\backslash\{n\}$ by \eqref{model:maskedinput}; 
$(c)$ is due to the fact that $\{Z_{\ell,m}\}_{\ell\in[N]\backslash\{m\},m\in[N]\backslash\{n\}}$ can be generated from $\{Z_{m}\}_{m\in[N]}$ by  \eqref{encodedkey}, and that $W_{n}$ is independent of $\{W_m\}_{m\in[N]\backslash\{n\}}$ and $\{Z_m\}_{m\in[N]}$ by \eqref{model:key:inden};
$(d)$ follows by \eqref{infor:indenpe}. 
This result is intuitive, since each user $n\in[N]$ must transmit a message of length at least equal to that of its local input $W_n$; otherwise, the desired aggregation $W_{sum}$ cannot be recovered from $\{X_n\}_{n\in[N]}$, and the correctness constraint in \eqref{model:correctness} would be violated.

Then, according to the definition of $R_A$ in \eqref{definition:RKRA}, we establish the converse bound for $R_A$ as follows:
    \begin{IEEEeqnarray}{C}\label{converse:RA}
R_A=\frac{\sum_{n\in[N]}H(X_n)}{H(W_{sum})}\geq\frac{NL}{L}=N.
    \end{IEEEeqnarray}

\subsection{Proof of $R_K \geq \frac{N (N-1)}{N - T}$ and $R_Z \geq \frac{N (N-1)}{N - T}$}
In this subsection, we prove the converse bounds for the key-distribution communication rate $R_K$ and the key rate $R_Z$. This is accomplished by applying Lemma \ref{lemma:non-colludingkey-entropy} to all possible colluding subsets $\mathcal{T}\subseteq[N]$ of size $|\mathcal{T}|=T$ and summing the resulting inequalities.

For any subset of colluding users $\mathcal{T}\subseteq[N]$ with size $|\mathcal{T}|=T$, let $\mathcal{T}^c$ denote its complement, i.e., $\mathcal{T}^c=[N]\backslash\mathcal{T}$. 
Since the inequality in \eqref{eq:lemma:non-colludingencodedkey-entropy} holds for all such subsets $\mathcal{T}$, we can sum it over all $\mathcal{T}\subseteq[N]$ with $|\mathcal{T}|=T$ to obtain 
\begin{IEEEeqnarray}{rCl}
\Big(\substack{ N\\[1ex] T } \Big)(N-T-1)L&\le&\sum_{\substack{\mathcal{T}\subseteq [N]\\|\mathcal{T}| = T}} H\big(\{Z_{n,m}\}_{n\in  \mathcal{T}^c,m\in \mathcal{T}^c\backslash\{n\}} | \{Z_{n,m}\}_{n\in\mathcal{T}^c,m \in \mathcal{T}}\big) \nonumber\\
&=&\sum_{\substack{\mathcal{T}\subseteq [N]\\|\mathcal{T}| = T}} H\big(\{Z_{n,m}\}_{n\in  \mathcal{T}^c,m \in [N]\backslash\{n\}} \big)-\sum_{\substack{\mathcal{T}\subseteq [N]\\|\mathcal{T}| = T}} H\big(\{Z_{n,m}\}_{n\in\mathcal{T}^c,m \in \mathcal{T}} \big) \nonumber\\
&\stackrel{(a)}{=}&\sum_{\substack{\mathcal{T}\subseteq [N]\\|\mathcal{T}| = T}} \sum_{n \in  \mathcal{T}^c}H\big(\{Z_{n,m}\}_{m \in [N]\backslash\{n\}} \big)-\sum_{\substack{\mathcal{T}\subseteq [N]\\|\mathcal{T}| = T}} \sum_{n\in\mathcal{T}^c} H\big(\{Z_{n,m}\}_{m\in \mathcal{T}} \big) \nonumber\\
&=&\Big( \substack{N-1 \\[1ex] T} \Big)\sum_{n\in[N]}H\big(\{Z_{n,m}\}_{m\in [N]\backslash\{n\}} \big)-
\sum_{\substack{\mathcal{T}\subseteq [N]\\|\mathcal{T}| = T}} \sum_{n\in\mathcal{T}^c} H\big(\{Z_{n,m}\}_{m\in \mathcal{T}} \big)\nonumber\\
&\stackrel{(b)}{=}&\Big( \substack{N-1 \\[1ex] T} \Big)\sum_{n\in[N]}H\big(\{Z_{n,m}\}_{m\in [N]\backslash\{n\}} \big)
-\sum_{\substack{\mathcal{T}\subseteq [N]\\|\mathcal{T}| = T}} 
\sum_{n \in \mathcal{T}^c}
\sum_{m \in \mathcal{T}} H\big(Z_{n,m}|\{Z_{n,\ell}\}_{\ell \in \mathcal{T},\ell <m}\big)\nonumber\\
&\stackrel{(c)}{\le}&\Big( \substack{N-1 \\[1ex] T} \Big)\sum_{n\in[N]}H\big(\{Z_{n,m}\}_{m\in [N]\backslash\{n\}} \big)
-
\sum_{\substack{\mathcal{T}\subseteq [N]\\|\mathcal{T}| = T}} 
\sum_{n \in \mathcal{T}^c}
\sum_{m \in \mathcal{T}} H\big(Z_{n,m}| \{Z_{n,\ell}\}_{\ell \in [N]\backslash\{n\},\ell < m}\big)\nonumber\\
&=&\Big( \substack{N-1 \\[1ex] T} \Big)\sum_{n\in[N]}H\big(\{Z_{n,m}\}_{m\in [N]\backslash\{n\}} \big)
- 
\Big( \substack{N-2 \\[1ex] T-1} \Big)\sum_{n \in [N]}
\sum_{m \in [N]\backslash\{n\}} 
H\big(Z_{n,m}| \{Z_{n,\ell}\}_{\ell \in [N]\backslash\{n\},\ell < m}\big)\nonumber\\
&\stackrel{(d)}{=}&\Big( \substack{N-1 \\[1ex] T} \Big)\sum_{n\in[N]}H\big(\{Z_{n,m}\}_{m\in [N]\backslash\{n\}} \big)
- 
\Big( \substack{N-2 \\[1ex] T-1} \Big) \sum_{n\in [N]}H\big(\{Z_{n,m}\}_{m \in [N]\backslash\{n\}} \big)\nonumber\\
&=&\Big( \substack{N-2 \\[1ex] T} \Big)\sum_{n\in[N]}H\big(\{Z_{n,m}\}_{m\in [N]\backslash\{n\}} \big), \notag
\end{IEEEeqnarray}
where $(a)$ follows from the fact that $Z_{n,m}$ is a deterministic function of $Z_{n}$ for any $n\in\mathcal{T}^c$ and $m\in[N]\backslash\{n\}$ by \eqref{encodedkey}, and that $\{Z_n\}_{n\in\mathcal{T}^c}$ are mutually independent by \eqref{model:key:inden};
$(b)$ and $(d)$ follow by applying the chain rule of joint entropy;
$(c)$ holds because the conditioning does not increase entropy. 
Therefore, we obtain
\begin{IEEEeqnarray*}{c}
\sum_{n \in  [N]}H\big(\{Z_{n,m}\}_{m \in [N]\backslash\{n\}} \big) \geq \frac{ \binom{N}{T}(N-T-1)}{\binom{N-2}{T}}L= \frac{N (N-1)}{N - T}L.
\end{IEEEeqnarray*}

Then, according to the definition of $R_K$ in \eqref{definition:RKRA}, we establish the converse bound for $R_K$ as follows:
\begin{IEEEeqnarray}{rCl}
R_K&=&\frac{\sum_{n\in[N]}\sum_{m\in[N]\backslash\{n\}}H(Z_{n,m})}{H(W_{sum})}\nonumber\\
&\ge&\frac{\sum_{n \in  [N]}H\big(\{Z_{n,m}\}_{m \in [N]\backslash\{n\}} \big)}{L}\nonumber\\
&\geq&\frac{N (N-1)}{N - T}. 
\label{converse:RK}
\end{IEEEeqnarray}
Similarly, from the definition of $R_Z$ in \eqref{definition:RZ}, the converse bound for $R_Z$ can be derived as follows:
\begin{IEEEeqnarray}{rCl}
R_Z &=&\frac{\sum_{n\in[N]}H(Z_n)}{H(W_{sum})} \notag \\
&\stackrel{(a)}{\ge}&\frac{\sum_{n\in[N]}H(\{Z_{n,m}\}_{m\in[N]\backslash\{n\}})}{L} \notag \\
&\geq&\frac{N (N-1)}{N - T},\label{converse:RZ}
\end{IEEEeqnarray}
where $(a)$ holds since the collection of encoded keys $\{Z_{n,m}\}_{m\in[N]\backslash\{n\}}$ is a deterministic function of $Z_n$ for any $n\in[N]$ by \eqref{encodedkey}. 

From \eqref{converse:RA}, \eqref{converse:RK}, and \eqref{converse:RZ}, we have shown that any achievable rate tuple $(R_Z, R_K, R_A)$ for the secure aggregation problem is lower bounded by $\big(\frac{N (N-1)}{N - T},\frac{N (N-1)}{N - T},N\big)$, thereby completing the converse proof of Theorem \ref{theo:capacity}.

\section{Conclusion}\label{conclusion}
In this paper, we considered the $T$-colluding information-theoretic secure aggregation problem under a unified two-phase formulation consisting of a key distribution phase and an update aggregation phase. 
Unlike most existing works, which either rely on a trusted third party or impose prescribed symmetric groupwise key-distribution structures, we considered a general  key-distribution framework in which correlated random keys are established through  user-to-user communication. 
Under this general formulation, we completely characterized the capacity region in terms of three resources: the amount of random keys required for security, the communication required for key distribution, and the communication required for update aggregation. 
On the achievability side, we developed an explicit capacity-achieving secure aggregation scheme based on a novel linear coding design. 
Furthermore, we showed that the optimal performance can already be achieved using only a pairwise key-distribution structure, in which every pair of users shares a mutually independent common random key. 
This structural result bridges the gap between information-theoretic secure aggregation and practical cryptographic implementations based on pairwise key-establishment protocols such as Diffie--Hellman key exchange.
The converse result applied to general user-to-user key-distribution mechanisms and therefore established that pairwise key distribution is an information-theoretically optimal structure for designing capacity-achieving secure aggregation schemes. 
In contrast to most existing secure aggregation schemes that either rely on trusted third parties or employ randomized or existential constructions over sufficiently large finite fields, the proposed scheme provides an explicit deterministic construction over any finite field whose size grows linearly with the number of users.
One important on-going work is to extend the proposed framework and converse techniques to more sophisticated secure aggregation settings, such as secure aggregation with user dropouts, hierarchical aggregation architectures, and decentralized distributed learning systems.

\begin{appendix}\label{public elements}

In this appendix, we provide the formal proof of Proposition \ref{pro:condition}.
Since the performance of the proposed coding framework matches the optimal rate tuple $\big(\frac{N (N-1)}{N - T},\frac{N (N-1)}{N - T},N\big)$, 
the proposition can be established by proving the following two aspects:
1) if the encoding matrices 
$\{\mathbf{E}_{m,n}\}_{m\in[N],n\in[N]}$ and $\{\mathbf{P}_{m,n},\mathbf{G}_{m,n}\}_{m\in[N]\backslash\{n\},n\in[N]}$ satisfy the conditions in \eqref{eq:zerosum-constraint} and \eqref{eq:rankcondition}, then the constructed coding framework simultaneously satisfies the correctness constraint in \eqref{model:correctness} and the security constraint in \eqref{model:security}; and
2) when the inputs $W_1,\ldots,W_N$ are independent and uniformly distributed over the finite field $\mathbb{F}_q$, the coding framework that satisfies the correctness constraint in \eqref{model:correctness} and the security constraint in \eqref{model:security} must necessarily satisfy the conditions in \eqref{eq:zerosum-constraint} and \eqref{eq:rankcondition}.

We first show that the condition in \eqref{eq:zerosum-constraint} is sufficient and necessary for the coding framework to satisfy the correctness constraint in \eqref{model:correctness}.
Let $\mathbf{E}_{n}$ denote the $n$-th row block of the matrix $\mathbf{E}$, i.e.,
\begin{IEEEeqnarray*}{c}
\mathbf{E}_{n}=
\begin{bmatrix}
\mathbf{E}_{1,n}&\mathbf{E}_{2,n}&\cdots&\mathbf{E}_{N,n}
\end{bmatrix}, \quad\forall\, n\in[N].
\end{IEEEeqnarray*}
Then, for the masked inputs $\{X_n\}_{n\in[N]}$ in the coding framework, we have  
\begin{IEEEeqnarray}{rCl}
H\big(W_{sum} |\{X_n\}_{n\in[N]} \big)&\stackrel{(a)}{=}&H\Big(\sum_{n=1}^{N}X_n-\sum_{n=1}^N \sum_{m=1}^N\mathbf{E}_{m,n}Z_m |\{X_n\}_{n\in[N]}\Big)\nonumber\\
&=&H\Big(\sum_{n=1}^N \sum_{m=1}^N\mathbf{E}_{m,n}Z_m|\{X_n\}_{n\in[N]}\Big)
\nonumber\\
&=&H\Big(\{X_n\}_{n\in[N]},\sum_{n=1}^N \sum_{m=1}^N\mathbf{E}_{m,n}Z_m\Big)-H\Big(\{X_n\}_{n\in[N]}\Big)\nonumber\\
&\stackrel{(b)}{=}&
H\Big(
\begin{bmatrix}
\mathbf{I}_{NL}&\mathbf{E}\\
\mathbf{0}_{L\times NL}&\sum_{n=1}^N \mathbf{E}_{n}
\end{bmatrix}
\begin{bmatrix}
W_{1}\\
\vdots\\
W_{N}\\
Z_{1}\\
\vdots\\
Z_N\\
\end{bmatrix}
\Big)-H\Big(
\begin{bmatrix}
\mathbf{I}_{NL}&\mathbf{E}\\
\end{bmatrix}
\begin{bmatrix}
W_{1}\\
\vdots\\
W_{N}\\
Z_{1}\\
\vdots\\
Z_N\\
\end{bmatrix} \Big),
\label{eq:correctnesseq:1}
\end{IEEEeqnarray}
where $(a)$ follows from \eqref{eq:encodingscheme} such that $W_{sum}=\sum_{n=1}^{N}W_n=\sum_{n=1}^{N}X_n-\sum_{n=1}^{N}\sum_{m=1}^{N}\mathbf{E}_{m,n}Z_m$, and $(b)$ follows by \eqref{eq:encodingscheme} again.
Obviously, if the condition in \eqref{eq:zerosum-constraint} holds, then the coding framework satisfies $H(W_{sum} |\{X_n\}_{n\in[N]})=0$, which matches the correctness constraint in \eqref{model:correctness}. Therefore, the condition in \eqref{eq:zerosum-constraint} is sufficient for the coding framework to satisfy the correctness constraint in \eqref{model:correctness}. 

Furthermore, when the inputs $W_1,\ldots,W_N$ are independently and uniformly distributed, the equation in \eqref{eq:correctnesseq:1} can be derived as follows: 
\begin{IEEEeqnarray}{rCl}
H\big(W_{sum} |\{X_n\}_{n\in[N]} \big)
&=&
H\Big(
\begin{bmatrix}
\mathbf{I}_{NL}&\mathbf{E}\\
\mathbf{0}_{L\times NL}&\sum_{n=1}^N \mathbf{E}_{n}
\end{bmatrix}
\begin{bmatrix}
W_{1}\\
\vdots\\
W_{N}\\
Z_{1}\\
\vdots\\
Z_N\\
\end{bmatrix}
\Big)-H\Big(
\begin{bmatrix}
\mathbf{I}_{NL}&\mathbf{E}\\
\end{bmatrix}
\begin{bmatrix}
W_{1}\\
\vdots\\
W_{N}\\
Z_{1}\\
\vdots\\
Z_N\\
\end{bmatrix} \Big) \notag\\
&\stackrel{(a)}{=}&\operatorname{rank}\Big(
\begin{bmatrix}
\mathbf{I}_{NL}&\mathbf{E}\\
\mathbf{0}_{L\times NL}&\sum_{n=1}^N \mathbf{E}_{n}
\end{bmatrix}
\Big)-\operatorname{rank}\Big(
\begin{bmatrix}
\mathbf{I}_{NL}&\mathbf{E}\\
\end{bmatrix}\Big) \notag\\
&\stackrel{(b)}{=}&\operatorname{rank}\Big(
\begin{bmatrix}
\mathbf{I}_{NL}&\mathbf{0}_{NL\times \frac{N(N-1)}{N-T}L}\\
\mathbf{0}_{L\times NL}&\sum_{n=1}^N \mathbf{E}_{n}
\end{bmatrix}
\Big)-\operatorname{rank}\Big(
\begin{bmatrix}
\mathbf{I}_{NL}&\mathbf{0}_{NL\times \frac{N(N-1)}{N-T}L}\\
\end{bmatrix}\Big) \notag\\
&=&\operatorname{rank}\Big(\mathbf{I}_{NL}\Big)+\operatorname{rank}\Big(\sum_{n=1}^N \mathbf{E}_{n}\Big)-\operatorname{rank}\Big(\mathbf{I}_{NL}\Big) \notag\\
&=&\operatorname{rank}\Big(\sum_{n=1}^N \mathbf{E}_{n}\Big),\nonumber
\end{IEEEeqnarray}
where $(a)$ follows from the fact that the data $\{W_n,Z_n\}_{n\in[N]}$ are independently and uniformly distributed over the finite field $\mathbb{F}_q$, and $(b)$ is due to the fact that elementary column operations do not change the rank of a matrix. 
Therefore, for the coding framework to satisfy the correctness constraint in \eqref{model:correctness}, the condition in \eqref{eq:zerosum-constraint} must hold.

 
Next, we prove that the condition in \eqref{eq:rankcondition} is sufficient and necessary for the coding framework to satisfy the security constraint in \eqref{model:security}. In this proof, the condition in \eqref{eq:zerosum-constraint} can be invoked, since it has already been shown to be sufficient and necessary for the coding framework to satisfy the correctness constraint in \eqref{model:correctness}.

For any subset of colluding users $\mathcal{T}\subseteq[N]$ with size $|\mathcal{T}|\leq T$ and its complement $\mathcal{T}^c=[N]\backslash\mathcal{T}$ with size $|\mathcal{T}^c|=N-|\mathcal{T}|$, without loss of generality, we index $\mathcal{T}^c$ as the ordered set $\{n_1,n_2,\dots,n_{N-|\mathcal{T}|}\}$ with $n_1<n_2<\ldots<n_{N-|\mathcal{T}|}$, and $\mathcal{T}$ as the ordered set $\{n_{N-|\mathcal{T}|+1},n_{N-|\mathcal{T}|+2},\dots,n_{N}\}$ with $n_{N-|\mathcal{T}|+1}<n_{N-|\mathcal{T}|+2}<\ldots<n_{N}$. Then, the constructed coding framework satisfies
\begin{IEEEeqnarray}{rCl}
&&I(\{W_n\}_{n\in[N]}; \{X_n\}_{n\in[N]}, \{Z_n\}_{n\in \mathcal{T}}, \{Z_{m,n}\}_{m\in[N]\backslash\{n\},n\in\mathcal{T}}|W_{sum},\{W_n\}_{n\in\mathcal{T}})\nonumber\\
&=&I(\{W_n\}_{n\in \mathcal{T}^c}; \{X_n\}_{n\in [N]}, \{Z_n\}_{n\in \mathcal{T}}, \{Z_{m,n}\}_{m\in[N]\backslash\{n\},n\in\mathcal{T}}|W_{sum},\{W_n\}_{n\in\mathcal{T}})\nonumber\\
&=&I(\{W_n\}_{n\in \mathcal{T}^c};\{Z_n\}_{n\in \mathcal{T}}, \{Z_{m,n}\}_{m\in[N]\backslash\{n\},n\in\mathcal{T}}|W_{sum},\{W_n\}_{n\in\mathcal{T}})\nonumber\\
&&+I(\{W_n\}_{n\in \mathcal{T}^c}; \{X_n\}_{n\in [N]}|W_{sum},\{W_n\}_{n\in\mathcal{T}},\{Z_n\}_{n\in \mathcal{T}}, \{Z_{m,n}\}_{m\in[N]\backslash\{n\},n\in\mathcal{T}})\nonumber\\
&\overset{(a)}{=}&I(\{W_n\}_{n\in \mathcal{T}^c}; \{X_n\}_{n\in [N]}|W_{sum},\{W_n\}_{n\in\mathcal{T}},\{Z_n\}_{n\in \mathcal{T}}, \{Z_{m,n}\}_{m\in[N]\backslash\{n\},n\in\mathcal{T}})\nonumber\\
&\stackrel{(b)}{=}&I(\{W_n\}_{n\in \mathcal{T}^c}; \{X_n\}_{n\in \mathcal{T}^c}|W_{sum},\{W_n\}_{n\in\mathcal{T}},\{Z_n\}_{n\in \mathcal{T}}, \{Z_{m,n}\}_{m\in[N]\backslash\{n\},n\in\mathcal{T}})\nonumber\\
&=&H(\{X_n\}_{n\in \mathcal{T}^c}|W_{sum},\{W_n\}_{n\in\mathcal{T}},\{Z_n\}_{n\in \mathcal{T}}, \{Z_{m,n}\}_{m\in[N]\backslash\{n\},n\in\mathcal{T}})\nonumber\\
&&-H(\{X_n\}_{n\in \mathcal{T}^c}|\{W_n\}_{n\in [N]},\{Z_n\}_{n\in \mathcal{T}}, \{Z_{m,n}\}_{m\in[N]\backslash\{n\},n\in\mathcal{T}}), 
\label{eq:security_proof}
\end{IEEEeqnarray}
where $(a)$ follows from \eqref{eq:matrixofcodedkey} and the independence between the random keys $Z_1,\ldots,Z_N$ and the inputs $W_1,\ldots,W_N$ such that
\begin{IEEEeqnarray}{rCl}
   0&=&I(W_1,\ldots,W_N;Z_1,\ldots,Z_N)\notag\\
   &\geq& I(\{W_n\}_{n\in \mathcal{T}^c};\{Z_n\}_{n\in \mathcal{T}}, \{Z_{m,n}\}_{m\in[N]\backslash\{n\},n\in\mathcal{T}}|W_{sum},\{W_n\}_{n\in\mathcal{T}}) \notag\\
   &\geq&0, \notag
\end{IEEEeqnarray}
and $(b)$ holds since $X_n$ is a deterministic function of $W_n,Z_n$, and $\{Z_{m,n}\}_{m\in[N]\backslash\{n\}}$ for any $n\in\mathcal{T}$ by \eqref{eq:maskinputlinear}.

Let $n_s$ be an element of the complement set $\mathcal{T}^c$ for some $s\in [N-|\mathcal{T}|]$, then we set $\widetilde{\mathcal{T}}^c=\mathcal{T}^c\backslash\{n_s\}$.
Therefore, the first term in \eqref{eq:security_proof} can be established as follows:
\begin{IEEEeqnarray}{rCl}
&&H\big(\{X_n\}_{n\in \mathcal{T}^c}|W_{sum},\{W_n\}_{n\in\mathcal{T}},\{Z_n\}_{n\in \mathcal{T}}, \{Z_{m,n}\}_{m\in[N]\backslash\{n\},n\in\mathcal{T}}\big) \nonumber\\
&\overset{(a)}{=}&H\big(\sum_{n \in \mathcal{T}^c}X_n,\{X_n\}_{n\in \widetilde{\mathcal{T}}^c}|W_{sum},\{W_n\}_{n\in\mathcal{T}},\{Z_n\}_{n\in \mathcal{T}}, \{Z_{m,n}\}_{m\in[N]\backslash\{n\},n\in\mathcal{T}}\big) \nonumber\\
&\overset{(b)}{=}&H\big(\sum_{n \in \mathcal{T}}X_n,\{X_n\}_{n\in \widetilde{\mathcal{T}}^c}|W_{sum},\{W_n\}_{n\in\mathcal{T}},\{Z_n\}_{n\in \mathcal{T}}, \{Z_{m,n}\}_{m\in[N]\backslash\{n\},n\in\mathcal{T}}\big) \nonumber\\
&\overset{(c)}{=}&H\big(\{X_n\}_{n\in \widetilde{\mathcal{T}}^c}|W_{sum},\{W_n\}_{n\in\mathcal{T}},\{Z_n\}_{n\in \mathcal{T}}, \{Z_{m,n}\}_{m\in[N]\backslash\{n\},n\in\mathcal{T}}\big) \nonumber\\
&\overset{(d)}{=}&H\big(\{W_n+\sum_{m \in [N]}\mathbf{E}_{m,n}Z_m\}_{n\in \widetilde{\mathcal{T}}^c}|\sum\limits_{n\in\mathcal{T}^c}W_n,\{W_n\}_{n\in\mathcal{T}},\{Z_n\}_{n\in \mathcal{T}}, \{Z_{m,n}\}_{m\in[N]\backslash\{n\},n\in\mathcal{T}}\big)\nonumber\\
&\overset{(e)}{\leq}&H\big(\{W_n+\sum_{m \in [N]}\mathbf{E}_{m,n}Z_m\}_{n\in \widetilde{\mathcal{T}}^c}|\{Z_n\}_{n\in \mathcal{T}}, \{Z_{m,n}\}_{m\in[N]\backslash\{n\},n\in\mathcal{T}}\big)\label{inequality:1111}\\
&\overset{(f)}{=}&H\big(\{W_n+\sum_{m \in [N]}\mathbf{E}_{m,n}Z_m\}_{n\in \widetilde{\mathcal{T}}^c}|\{Z_n\}_{n\in \mathcal{T}}, \{\mathbf{G}_{m,n}Z_m\}_{m\in[N]\backslash\{n\},n\in\mathcal{T}}\big)\nonumber\\
&=&H\big(\{W_n+\sum_{m \in \mathcal{T}^c}\mathbf{E}_{m,n}Z_m\}_{n\in \widetilde{\mathcal{T}}^c}|\{Z_n\}_{n\in \mathcal{T}}, \{\mathbf{G}_{m,n}Z_m\}_{m\in\mathcal{T}^c,n\in\mathcal{T}}\big)\label{first:term:1}\\
&\stackrel{(g)}{=}&H\big(\{W_n+\sum_{m \in \mathcal{T}^c}\mathbf{E}_{m,n}Z_m\}_{n\in \widetilde{\mathcal{T}}^c}| \{\mathbf{G}_{m,n}Z_{m}\}_{m\in\mathcal{T}^c,n\in\mathcal{T}}\big)\label{first:term:2}\\
&=&H\big(\{W_n+\sum_{m \in \mathcal{T}^c}\mathbf{E}_{m,n}Z_m\}_{n\in \widetilde{\mathcal{T}}^c},\{\mathbf{G}_{m,n}Z_{m}\}_{m\in\mathcal{T}^c,n\in\mathcal{T}}\big)-H\big(\{\mathbf{G}_{m,n}Z_{m}\}_{m\in\mathcal{T}^c,n\in\mathcal{T}}\big), \label{eq:security_proof_1}
\end{IEEEeqnarray}
where $(a)$ follows because the collection $\{\sum_{n \in \mathcal{T}^c}X_n,\{X_n\}_{n\in \widetilde{\mathcal{T}}^c}\}$ is equivalent to $\{X_n\}_{n\in\mathcal{T}^c}$;
$(b)$ is due to the fact that $\sum_{n \in[N]}X_n=W_{sum}$ by \eqref{eq:zerosum-constraint} and hence $\sum_{n \in \mathcal{T}^c}X_n=W_{sum}-\sum_{n \in \mathcal{T}}X_n$;
$(c)$ follows from the fact that $X_n$ is determined by $W_n,Z_n$, and $\{Z_{m,n}\}_{m\in[N]\backslash\{n\}}$ for any $n\in\mathcal{T}$ by \eqref{eq:maskinputlinear};
$(d)$ follows from \eqref{eq:encodingscheme};
$(e)$ holds because conditioning cannot increase entropy;
$(f)$ is due to \eqref{eq:matrixofcodedkey};
$(g)$ follows from the fact that the random keys $\{Z_n\}_{n\in\mathcal{T}}$ are generated independently of $\{Z_m\}_{m\in\mathcal{T}^c}$ and $\{W_n\}_{n\in[N]}$, and hence are independent of $\{\mathbf{G}_{m,n}Z_m\}_{m\in\mathcal{T}^c,n\in\mathcal{T}}$ and $\{W_n+\sum_{m \in\mathcal{T}^c}\mathbf{E}_{m,n}Z_m\}_{n\in \widetilde{\mathcal{T}}^c}$.
Note that the inequality in \eqref{inequality:1111} holds with equality when $\{W_{n}\}_{n\in[N]}$ are independently and uniformly distributed. This is because, under this assumption, $\sum_{n\in\mathcal{T}^c}W_n$ and $\{W_n\}_{n\in\mathcal{T}}$ are independent of the collection $\{\{W_n+\sum_{m \in [N]}\mathbf{E}_{m,n}Z_m\}_{n\in \widetilde{\mathcal{T}}^c},\{Z_n\}_{n\in \mathcal{T}}, \{Z_{m,n}\}_{m\in[N]\backslash\{n\},n\in\mathcal{T}}\}$.

Similarly, the second term in \eqref{eq:security_proof} can be further analyzed as follows:
\begin{IEEEeqnarray}{rCl}
&&H\big(\{X_n\}_{n\in \mathcal{T}^c}|\{W_n\}_{n\in [N]},\{Z_n\}_{n\in \mathcal{T}}, \{Z_{m,n}\}_{m\in[N]\backslash\{n\},n\in\mathcal{T}}\big) \nonumber\\
&\overset{(a)}{=}&H\big(\{\sum_{m\in\mathcal{T}^c} \mathbf{E}_{m,n}Z_m\}_{n\in\mathcal{T}^c}|\{W_n\}_{n\in [N]},\{Z_n\}_{n\in \mathcal{T}}, \{\mathbf{G}_{m,n}Z_m\}_{m\in[N]\backslash\{n\},n\in\mathcal{T}}\big)\nonumber\\
&\overset{(b)}{=}&H\big(\{\sum_{m\in\mathcal{T}^c} \mathbf{E}_{m,n}Z_m\}_{n\in\mathcal{T}^c}|\{Z_n\}_{n\in \mathcal{T}}, \{\mathbf{G}_{m,n}Z_m\}_{m\in[N]\backslash\{n\},n\in\mathcal{T}}\big)\nonumber\\
&\overset{(c)}{=}&H\big(\{\sum_{m\in\mathcal{T}^c} \mathbf{E}_{m,n}Z_m\}_{n\in\mathcal{T}^c}|\{\mathbf{G}_{m,n}Z_{m}\}_{m\in\mathcal{T}^c,n\in\mathcal{T}}\big)\nonumber\\
&\overset{(d)}{=}&H\big(\{\sum_{m\in\mathcal{T}^c} \mathbf{E}_{m,n}Z_m\}_{n\in\mathcal{T}^c},\{\sum_{m\in\mathcal{T}^c} \mathbf{E}_{m,n}Z_m\}_{n\in\mathcal{T}}|\{\mathbf{G}_{m,n}Z_{m}\}_{m\in\mathcal{T}^c,n\in\mathcal{T}}\big)\nonumber\\
&\overset{(e)}{=}&H\big(\{\sum_{m\in\mathcal{T}^c} \mathbf{E}_{m,n}Z_m\}_{n\in\widetilde{\mathcal{T}}^c},\{\sum_{m\in\mathcal{T}^c} \mathbf{E}_{m,n}Z_m\}_{n\in\mathcal{T}}|\{\mathbf{G}_{m,n}Z_{m}\}_{m\in\mathcal{T}^c,n\in\mathcal{T}}\big)\nonumber\\
&\overset{(f)}{=}&H\big(\{\sum_{m\in\mathcal{T}^c} \mathbf{E}_{m,n}Z_m\}_{n\in\widetilde{\mathcal{T}}^c}|\{\mathbf{G}_{m,n}Z_{m}\}_{m\in\mathcal{T}^c,n\in\mathcal{T}}\big)\nonumber\\
&=&H\big(\{\sum_{m\in\mathcal{T}^c} \mathbf{E}_{m,n}Z_m\}_{n\in\widetilde{\mathcal{T}}^c},\{\mathbf{G}_{m,n}Z_{m}\}_{m\in\mathcal{T}^c,n\in\mathcal{T}}\big)-H\big(\{\mathbf{G}_{m,n}Z_{m}\}_{m\in\mathcal{T}^c,n\in\mathcal{T}}\big),\label{eq:security_proof_2}
\end{IEEEeqnarray}
where $(a)$ is due to \eqref{eq:matrixofcodedkey} and \eqref{eq:encodingscheme};
$(b)$ follows from the fact that the inputs $\{W_n\}_{n\in[N]}$ are independent of the random keys $\{Z_n\}_{n\in[N]}$; $(c)$ follows from an argument similar to that in \eqref{first:term:1}-\eqref{first:term:2};
$(d)$ and $(f)$ hold because $\{\sum_{m\in\mathcal{T}^c} \mathbf{E}_{m,n}Z_m\}_{n\in\mathcal{T}}$ is a deterministic function of $\{\mathbf{G}_{m,n}Z_{m}\}_{m\in\mathcal{T}^c,n\in\mathcal{T}}$ by \eqref{framework:matrix:equation};
$(e)$ follows from the fact that $\sum_{n\in[N]}\sum_{m\in\mathcal{T}^c} \mathbf{E}_{m,n}Z_m=\mathbf{0}_{L\times 1}$ by \eqref{eq:zerosum-constraint} and therefore $\sum_{m\in\mathcal{T}^c} \mathbf{E}_{m,n_s}Z_m=\mathbf{0}_{L\times 1}-\sum_{n\in\mathcal{T}}\sum_{m\in\mathcal{T}^c} \mathbf{E}_{m,n}Z_m--\sum_{n\in\widetilde{\mathcal{T}}^c}\sum_{m\in\mathcal{T}^c} \mathbf{E}_{m,n}Z_m$.

By combining \eqref{eq:security_proof}, \eqref{eq:security_proof_1}, and \eqref{eq:security_proof_2}, we have
\begin{IEEEeqnarray}{rCl}
&&I\big(\{W_n\}_{n\in[N]}; \{X_n\}_{n\in[N]}, \{Z_n\}_{n\in \mathcal{T}}, \{Z_{m,n}\}_{m\in[N]\backslash\{n\},n\in\mathcal{T}}|W_{sum},\{W_n\}_{n\in\mathcal{T}}\big)\nonumber\\
&\leq& H\big(\{W_n+\sum_{m \in \mathcal{T}^c}\mathbf{E}_{m,n}Z_m\}_{n\in \widetilde{\mathcal{T}}^c},\{\mathbf{G}_{m,n}Z_{m}\}_{m\in\mathcal{T}^c,n\in\mathcal{T}}\big)\nonumber\\
&&-H\big(\{\sum_{m\in\mathcal{T}^c} \mathbf{E}_{m,n}Z_m\}_{n\in\widetilde{\mathcal{T}}^c},\{\mathbf{G}_{m,n}Z_{m}\}_{m\in\mathcal{T}^c,n\in\mathcal{T}}\big) \nonumber\\
&=& H\Big(
\begin{bmatrix}
\mathbf{I}_{(N-|\mathcal{T}|-1)L}&\mathbf{E}_{\widetilde{\mathcal{T}}^c,\mathcal{T}^c}\\
\mathbf{0}_{\frac{|\mathcal{T}|(N-|\mathcal{T}|)}{N-T}L\times(N-|\mathcal{T}|-1)L}&\mathrm{diag}(\mathbf{G}_{\mathcal{T}^c,\mathcal{T}})
\end{bmatrix}
\begin{bmatrix}
W_{n_1}\\
\vdots\\
W_{n_{s-1}}\\
W_{n_{s+1}}\\
\vdots\\
W_{n_{N-|\mathcal{T}|}}\\
Z_{n_1}\\
\vdots\\
Z_{n_{N-|\mathcal{T}|}}\\
\end{bmatrix}
\Big)-H\Big(
 \begin{bmatrix}
\mathbf{E}_{\widetilde{\mathcal{T}}^c,\mathcal{T}^c}\\
\mathrm{diag}(\mathbf{G}_{\mathcal{T}^c,\mathcal{T}})
 \end{bmatrix}
\begin{bmatrix}
Z_{n_1}\\
\vdots\\
Z_{n_{N-|\mathcal{T}|}}\\
\end{bmatrix} \Big) \nonumber \\
&\overset{(a)}{\leq}&\operatorname{rank}\Big(
\begin{bmatrix}
\mathbf{I}_{(N-|\mathcal{T}|-1)L}&\mathbf{E}_{\widetilde{\mathcal{T}}^c,\mathcal{T}^c}\\
\mathbf{0}_{\frac{|\mathcal{T}|(N-|\mathcal{T}|)}{N-T}L\times(N-|\mathcal{T}|-1)L}&\mathrm{diag}(\mathbf{G}_{\mathcal{T}^c,\mathcal{T}})
\end{bmatrix}
\Big)-\operatorname{rank}\Big(
 \begin{bmatrix}
\mathbf{E}_{\widetilde{\mathcal{T}}^c,\mathcal{T}^c}\\
\mathrm{diag}(\mathbf{G}_{\mathcal{T}^c,\mathcal{T}})
 \end{bmatrix}
\Big)\label{inequality:2222} \\
&=&\operatorname{rank}\Big(
\begin{bmatrix}
\mathbf{I}_{(N-|\mathcal{T}|-1)L}&\mathbf{0}_{(N-|\mathcal{T}|-1)L\times\frac{(N-|\mathcal{T}|)(N-1)}{N-T}L}\\
\mathbf{0}_{\frac{|\mathcal{T}|(N-|\mathcal{T}|)}{N-T}L\times(N-|\mathcal{T}|-1)L}&\mathrm{diag}(\mathbf{G}_{\mathcal{T}^c,\mathcal{T}})
\end{bmatrix}
\Big)-\operatorname{rank}\Big(
 \begin{bmatrix}
\mathbf{E}_{\widetilde{\mathcal{T}}^c,\mathcal{T}^c}\\
\mathrm{diag}(\mathbf{G}_{\mathcal{T}^c,\mathcal{T}})
 \end{bmatrix}
\Big)\nonumber \\
&=&(N-|\mathcal{T}|-1)L+\operatorname{rank}\Big(\mathrm{diag}(\mathbf{G}_{\mathcal{T}^c,\mathcal{T}})\Big)-\operatorname{rank}\Big(
 \begin{bmatrix}
\mathbf{E}_{\widetilde{\mathcal{T}}^c,\mathcal{T}^c}\\
\mathrm{diag}(\mathbf{G}_{\mathcal{T}^c,\mathcal{T}})
 \end{bmatrix}
\Big), \notag
\end{IEEEeqnarray}
where $(a)$ follows from the fact that the random keys $Z_{n_1},\ldots,Z_{n_{N-|\mathcal{T}|}}$ are independently and uniformly distributed over the finite field $\mathbb{F}_q$. 
Note that the inequality in \eqref{inequality:2222} holds with equality when $\{W_{n}\}_{n\in[N]}$ are independently and uniformly distributed. 

Consequently, if the encoding matrices satisfy the conditions in \eqref{eq:zerosum-constraint} and \eqref{eq:rankcondition}, then the constructed coding framework satisfies
\begin{IEEEeqnarray}{rCl}
0&\leq&I\big(\{W_n\}_{n\in[N]}; \{X_n\}_{n\in[N]}, \{Z_n\}_{n\in \mathcal{T}}, \{Z_{m,n}\}_{m\in[N]\backslash\{n\},n\in\mathcal{T}}|W_{sum},\{W_n\}_{n\in\mathcal{T}}\big) \notag \\
&\leq&(N-|\mathcal{T}|-1)L+\operatorname{rank}\Big(\mathrm{diag}(\mathbf{G}_{\mathcal{T}^c,\mathcal{T}})\Big)-\operatorname{rank}\Big(
 \begin{bmatrix}
\mathbf{E}_{\widetilde{\mathcal{T}}^c,\mathcal{T}^c}\\
\mathrm{diag}(\mathbf{G}_{\mathcal{T}^c,\mathcal{T}})
 \end{bmatrix}
\Big) \notag \\
&=&0, \notag
\end{IEEEeqnarray}
which matches the security constraint in \eqref{model:security}. Therefore, the conditions in \eqref{eq:zerosum-constraint} and \eqref{eq:rankcondition} are sufficient for the coding framework to simultaneously satisfy the correctness constraint in \eqref{model:correctness} and the security constraint in \eqref{model:security}.
Furthermore, as shown in the proof, when the inputs $\{W_{n}\}_{n\in[N]}$ are independently and uniformly distributed over $\mathbb{F}_q$, the inequalities in \eqref{inequality:1111} and \eqref{inequality:2222} hold with equality. Therefore, we obtain the following stronger result:
\begin{IEEEeqnarray*}{rCl}
&&I\big(\{W_n\}_{n\in[N]}; \{X_n\}_{n\in[N]}, \{Z_n\}_{n\in \mathcal{T}}, \{Z_{m,n}\}_{m\in[N]\backslash\{n\},n\in\mathcal{T}}|W_{sum},\{W_n\}_{n\in\mathcal{T}}\big) \notag \\
&=&(N-|\mathcal{T}|-1)L+\operatorname{rank}\Big(\mathrm{diag}(\mathbf{G}_{\mathcal{T}^c,\mathcal{T}})\Big)-\operatorname{rank}\Big(
 \begin{bmatrix}
\mathbf{E}_{\widetilde{\mathcal{T}}^c,\mathcal{T}^c}\\
\mathrm{diag}(\mathbf{G}_{\mathcal{T}^c,\mathcal{T}})
 \end{bmatrix}
\Big).
\end{IEEEeqnarray*}
Accordingly, for any achievable coding framework that satisfies both the correctness constraint in \eqref{model:correctness} and the security constraint in \eqref{model:security}, the condition in \eqref{eq:rankcondition} must hold. This completes the proof of the proposition.

\end{appendix}

\bibliographystyle{ieeetr}
\bibliography{reference.bib}

@inproceedings{mcmahan2017communication,
  title={Communication-efficient learning of deep networks from decentralized data},
  author={McMahan, Brendan and Moore, Eider and Ramage, Daniel and Hampson, Seth and y Arcas, Blaise Aguera},
  booktitle={Proc. 20th Int. Conf. Artif. Intell. Stat.},
  pages={1273--1282},
  month = oct,
  year={2017},
}

@article{konevcny2016federated,
  title={Federated learning: Strategies for improving communication efficiency},
  author={Kone{\v{c}}n{\`y}, Jakub and McMahan, H Brendan and Yu, Felix X and Richt{\'a}rik, Peter and Suresh, Ananda Theertha and Bacon, Dave},
  journal={arXiv preprint arXiv:1610.05492},
  year={2016}
}

@article{li2020federated,
  title={Federated learning: Challenges, methods, and future directions},
  author={Li, Tian and Sahu, Anit Kumar and Talwalkar, Ameet and Smith, Virginia},
  journal={IEEE signal processing magazine},
  volume={37},
  number={3},
  pages={50--60},
  year={2020},
  publisher={IEEE}
}

@inproceedings{fredrikson2015model,
  title={Model inversion attacks that exploit confidence information and basic countermeasures},
  author={Fredrikson, Matt and Jha, Somesh and Ristenpart, Thomas},
  booktitle={Proceedings of the 22nd ACM SIGSAC conference on computer and communications security},
  pages={1322--1333},
  year={2015}
}

@inproceedings{bonawitz2017practical,
  title={Practical secure aggregation for privacy-preserving machine learning},
  author={Bonawitz, Keith and Ivanov, Vladimir and Kreuter, Ben and Marcedone, Antonio and McMahan, H Brendan and Patel, Sarvar and Ramage, Daniel and Segal, Aaron and Seth, Karn},
  booktitle={proceedings of the 2017 ACM SIGSAC Conference on Computer and Communications Security},
  pages={1175--1191},
  year={2017}
}

@article{zhao2022information,
  title={Information theoretic secure aggregation with user dropouts},
  author={Zhao, Yizhou and Sun, Hua},
  journal={IEEE Transactions on Information Theory},
  volume={68},
  number={11},
  pages={7471--7484},
  year={2022},
  publisher={IEEE}
}

@article{wan2024information,
  title={On the information theoretic secure aggregation with uncoded groupwise keys},
  author={Wan, Kai and Yao, Xin and Sun, Hua and Ji, Mingyue and Caire, Giuseppe},
  journal={IEEE Transactions on Information Theory},
  volume={70},
  number={9},
  pages={6596--6619},
  year={2024},
  publisher={IEEE}
}

@book{katz2007introduction,
  title={Introduction to modern cryptography: principles and protocols},
  author={Katz, Jonathan and Lindell, Yehuda},
  year={2007},
  publisher={Chapman and hall/CRC}
}

@article{zhao2023secure,
  title={Secure summation: Capacity region, groupwise key, and feasibility},
  author={Zhao, Yizhou and Sun, Hua},
  journal={IEEE Transactions on Information Theory},
  volume={70},
  number={2},
  pages={1376--1387},
  year={2023},
  publisher={IEEE}
}

@book{horn1994topics,
  title={Topics in matrix analysis},
  author={Horn, Roger A and Johnson, Charles R},
  year={1994},
  publisher={Cambridge university press}
}

@article{yang2019federated,
  title={Federated machine learning: Concept and applications},
  author={Yang, Qiang and Liu, Yang and Chen, Tianjian and Tong, Yongxin},
  journal={ACM Transactions on Intelligent Systems and Technology (TIST)},
  volume={10},
  number={2},
  pages={1--19},
  year={2019},
  publisher={ACM New York, NY, USA}
}

@inproceedings{yuan2025vector,
  title={Vector linear secure aggregation},
  author={Yuan, Xihang and Sun, Hua},
  booktitle={2025 IEEE International Symposium on Information Theory (ISIT)},
  pages={1--6},
  year={2025},
  organization={IEEE}
}

@article{li2025weakly,
  author={Li, Zhou and Zhao, Yizhou and Sun, Hua},
  journal={IEEE Transactions on Information Theory}, 
  title={Weakly Secure Summation With Colluding Users}, 
  year={2025},
  volume={71},
  number={7},
  pages={5672-5683}}

@article{zhang2025optimal,
  title={Optimal communication and key rate region for hierarchical secure aggregation with user collusion},
  author={Zhang, Xiang and Wan, Kai and Sun, Hua and Wang, Shiqiang and Ji, Mingyue and Caire, Giuseppe},
  journal={IEEE Transactions on Information Theory},
  volume={72},
  number={2},
  pages={1030--1050},
  year={2025},
  publisher={IEEE}
}

@article{li2025capacity,
  title={The capacity of collusion-resilient decentralized secure aggregation with groupwise keys},
  author={Li, Zhou and Zhang, Xiang and Zhao, Yizhou and Chen, Haiqiang and Fan, Jihao and Caire, Giuseppe},
  journal={arXiv preprint arXiv:2511.14444},
  year={2025}
}

@inproceedings{bell2020secure,
  title={Secure single-server aggregation with (poly) logarithmic overhead},
  author={Bell, James Henry and Bonawitz, Kallista A and Gasc{\'o}n, Adri{\`a} and Lepoint, Tancr{\`e}de and Raykova, Mariana},
  booktitle={Proceedings of the 2020 ACM SIGSAC conference on computer and communications security},
  pages={1253--1269},
  year={2020}
}

@article{choi2020communication,
  title={Communication-computation efficient secure aggregation for federated learning},
  author={Choi, Beongjun and Sohn, Jy-yong and Han, Dong-Jun and Moon, Jaekyun},
  journal={arXiv preprint arXiv:2012.05433},
  year={2020}
}

@article{zheng2022aggregation,
  title={Aggregation service for federated learning: An efficient, secure, and more resilient realization},
  author={Zheng, Yifeng and Lai, Shangqi and Liu, Yi and Yuan, Xingliang and Yi, Xun and Wang, Cong},
  journal={IEEE Transactions on Dependable and Secure Computing},
  volume={20},
  number={2},
  pages={988--1001},
  year={2022},
  publisher={IEEE}
}

@article{zhang2025securegroup,
  title={On Secure Aggregation With Uncoded Groupwise Keys Against User Dropouts and User Collusion},
  author={Zhang, Ziting and Liu, Jiayu and Wan, Kai and Sun, Hua and Ji, Mingyue and Caire, Giuseppe},
  journal={IEEE Transactions on Information Theory},
  volume={71},
  number={11},
  pages={8391--8413},
  year={2025}
}

@article{hu2026capacity,
  title={On the Capacity Region of Individual Key Rates in Vector Linear Secure Aggregation},
  author={Hu, Lei and Ulukus, Sennur},
  journal={arXiv preprint arXiv:2601.03241},
  year={2026}
}

@article{Zhang2026Decentralized,
  author={Zhang, Xiang and Li, Zhou and Li, Shuangyang and Wan, Kai and Ng, Derrick Wing Kwan and Caire, Giuseppe},
  journal={IEEE Journal on Selected Areas in Communications}, 
  title={Information-Theoretic Decentralized Secure Aggregation With Passive Collusion Resilience}, 
  year={2026},
  volume={44},
  number={},
  pages={4414-4428}}

@article{li2025hierarchical,
  title={Hierarchical Secure Aggregation with Heterogeneous Security Constraints and Arbitrary User Collusion},
  author={Li, Zhou and Zhang, Xiang and Lv, Jiawen and Fan, Jihao and Chen, Haiqiang and Caire, Giuseppe},
  journal={arXiv preprint arXiv:2507.14768},
  year={2025}
}

@inproceedings{zhu2019deep,
  title={Deep leakage from gradients},
  author={Zhu, Ligeng and Liu, Zhijian and Han, Song},
  booktitle={Advances in neural information processing systems},
  pages= {14747--14756},
  year={2019}
}

@inproceedings{geiping2020inverting,
  title={Inverting gradients-how easy is it to break privacy in federated learning?},
  author={Geiping, Jonas and Bauermeister, Hartmut and Dr{\"o}ge, Hannah and Moeller, Michael},
  booktitle={Advances in neural information processing systems},
  pages={16937--16947},
  year={2020}
}

@article{wan2024capacity,
  title={The capacity region of information theoretic secure aggregation with uncoded groupwise keys},
  author={Wan, Kai and Sun, Hua and Ji, Mingyue and Mi, Tiebin and Caire, Giuseppe},
  journal={IEEE Transactions on Information Theory},
  volume={70},
  number={10},
  pages={6932--6949},
  year={2024},
  publisher={IEEE}
}

@article{shamir1979share,
  title={How to share a secret},
  author={Shamir, Adi},
  journal={Communications of the ACM},
  volume={22},
  number={11},
  pages={612--613},
  year={1979},
  publisher={ACm New York, NY, USA}
}

@article{li2026fundamental,
  title={On the Fundamental Limits of Hierarchical Secure Aggregation with Dropout and Collusion Resilience},
  author={Li, Zhou and Zhao, Yizhou and Zhang, Xiang and Caire, Giuseppe},
  journal={arXiv preprint arXiv:2603.19705},
  year={2026}
}

@article{scarani2009security,
  title={The security of practical quantum key distribution},
  author={Scarani, Valerio and Bechmann-Pasquinucci, Helle and Cerf, Nicolas J and Du{\v{s}}ek, Miloslav and L{\"u}tkenhaus, Norbert and Peev, Momtchil},
  journal={Reviews of modern physics},
  volume={81},
  number={3},
  pages={1301--1350},
  year={2009},
  publisher={APS}
}

@article{liu2022efficient,
  title={Efficient dropout-resilient aggregation for privacy-preserving machine learning},
  author={Liu, Ziyao and Guo, Jiale and Lam, Kwok-Yan and Zhao, Jun},
  journal={IEEE Transactions on Information Forensics and Security},
  volume={18},
  pages={1839--1854},
  year={2022},
  publisher={IEEE}
}

@article{9830997,
  author={Liu, Ziyao and Guo, Jiale and Yang, Wenzhuo and Fan, Jiani and Lam, Kwok-Yan and Zhao, Jun},
  journal={IEEE Transactions on Big Data}, 
  title={Privacy-Preserving Aggregation in Federated Learning: A Survey}, 
  year={2022},
  volume={},
  number={},
  pages={1-20},
  doi={10.1109/TBDATA.2022.3190835}}

@article{kadhe2020fastsecagg,
  title={Fastsecagg: Scalable secure aggregation for privacy-preserving federated learning},
  author={Kadhe, Swanand and Rajaraman, Nived and Koyluoglu, O Ozan and Ramchandran, Kannan},
  journal={arXiv preprint arXiv:2009.11248},
  year={2020}
}

@article{so2021turbo,
  title={Turbo-aggregate: Breaking the quadratic aggregation barrier in secure federated learning},
  author={So, Jinhyun and G{\"u}ler, Ba{\c{s}}ak and Avestimehr, A Salman},
  journal={IEEE Journal on Selected Areas in Information Theory},
  volume={2},
  number={1},
  pages={479--489},
  year={2021},
  publisher={IEEE}
}

@article{jahani2023swiftagg+,
  title={SwiftAgg+: Achieving asymptotically optimal communication loads in secure aggregation for federated learning},
  author={Jahani-Nezhad, Tayyebeh and Maddah-Ali, Mohammad Ali and Li, Songze and Caire, Giuseppe},
  journal={IEEE Journal on Selected Areas in Communications},
  volume={41},
  number={4},
  pages={977--989},
  year={2023},
  publisher={IEEE}
}

@article{zhang2024survey,
  title={A survey of trustworthy federated learning: Issues, solutions, and challenges},
  author={Zhang, Yifei and Zeng, Dun and Luo, Jinglong and Fu, Xinyu and Chen, Guanzhong and Xu, Zenglin and King, Irwin},
  journal={ACM Transactions on Intelligent Systems and Technology},
  volume={15},
  number={6},
  pages={1--47},
  year={2024},
  publisher={ACM New York, NY, USA}
}

@inproceedings{bonawitz2019federated,
  title={Federated learning with autotuned communication-efficient secure aggregation},
  author={Bonawitz, Keith and Salehi, Fariborz and Kone{\v{c}}n{\`y}, Jakub and McMahan, Brendan and Gruteser, Marco},
  booktitle={2019 53rd Asilomar Conference on Signals, Systems, and Computers},
  pages={1222--1226},
  year={2019},
  organization={IEEE}
}

@inproceedings{so2022lightsecagg,
  title={Lightsecagg: a lightweight and versatile design for secure aggregation in federated learning},
  author={So, Jinhyun and He, Chaoyang and Yang, Chien-Sheng and Li, Songze and Yu, Qian and E Ali, Ramy and Guler, Basak and Avestimehr, Salman},
   booktitle={Proceedings of Machine Learning and Systems},
  pages={694--720},
  year={2022}
}

@article{wen2023survey,
  title={A survey on federated learning: challenges and applications},
  author={Wen, Jie and Zhang, Zhixia and Lan, Yang and Cui, Zhihua and Cai, Jianghui and Zhang, Wensheng},
  journal={International journal of machine learning and cybernetics},
  volume={14},
  number={2},
  pages={513--535},
  year={2023},
  publisher={Springer}
}

@article{lu2023top,
  title={Top-k sparsification with secure aggregation for privacy-preserving federated learning},
  author={Lu, Shiwei and Li, Ruihu and Liu, Wenbin and Guan, Chaofeng and Yang, Xiaopeng},
  journal={Computers \& Security},
  volume={124},
  pages={102993},
  year={2023},
  publisher={Elsevier}
}

@article{ergun2021sparsified,
  title={Sparsified secure aggregation for privacy-preserving federated learning},
  author={Ergun, Irem and Sami, Hasin Us and Guler, Basak},
  journal={arXiv preprint arXiv:2112.12872},
  year={2021}
}
\end{document}